\newcommand{\T}[0]{{\text{T}}}
\newcommand{\haf}{\text{haf}}
\newcommand{\lhaf}{\text{lhaf}}
\newcommand{\per}{\text{Per}}
\newcommand{\poly}{\text{poly}}
\newtheorem{theorem}{Theorem}
\newtheorem{definition}{Definition}
\newtheorem{lemma}{Lemma}
\begin{document}
\definecolor{navy}{RGB}{46,72,102}
\definecolor{pink}{RGB}{219,48,122}
\definecolor{grey}{RGB}{184,184,184}
\definecolor{yellow}{RGB}{255,192,0}
\definecolor{grey1}{RGB}{217,217,217}
\definecolor{grey2}{RGB}{166,166,166}
\definecolor{grey3}{RGB}{89,89,89}
\definecolor{red}{RGB}{255,0,0}

\preprint{APS/123-QED}

\title{Classical algorithms for measurement-adaptive Gaussian circuits}
\author{Changhun Oh}
\email{changhun0218@gmail.com}
\affiliation{Department of Physics, Korea Advanced Institute of Science and Technology, Daejeon 34141, Korea}
\author{Youngrong Lim}
\affiliation{Department of Physics, Chungbuk National University, Cheongju, Chungbuk 28644, Korea}

\begin{abstract}
Gaussian building blocks are essential for photonic quantum information processing, and universality can be practically achieved by equipping Gaussian circuits with adaptive measurement and feedforward.
The number of adaptive steps then provides a natural parameter for computational power.
Rather than assessing power only through sampling problems---the usual benchmark---we follow the ongoing shift toward tasks of practical relevance and study the quantum mean-value problem, i.e., estimating observable expectation values that underpin simulation and variational algorithms. 
More specifically, we analyze bosonic circuits with adaptivity and prove that when the number of adaptive measurements is small, the mean-value problem admits efficient classical algorithms even if a large amount of non-Gaussian resources are present in the input state, whereas less constrained regimes are computationally hard. 
This yields a task-level contrast with sampling, where non-Gaussian ingredients alone often induce hardness, and provides a clean complexity boundary parameterized by the number of adaptive measurement-and-feedforward steps between classical simulability and quantum advantage. Beyond the main result, we introduce classical techniques---including a generalization of Gurvits' second algorithm to arbitrary product inputs and Gaussian circuits---for computing the marginal quantities needed by our estimators, which may be of independent interest.
\end{abstract}

\maketitle

%\tableofcontents

\section{Introduction}
Quantum computers are promised to provide computational power for solving various problems, such as integer factorization~\cite{shor1994algorithms} and time-dynamics simulation of quantum systems~\cite{lloyd1996universal}, which are believed to be intractable for any classical computer.
However, due to demanding experimental requirements, including a large number of qubits and small error rates, achieving quantum computational advantage for solving practical problems remains highly challenging.
Thus, although the ultimate goal is to build a universal quantum computer, many recent experiments have focused on demonstrating quantum advantage through sampling problems that are relatively easier to implement in experiments~\cite{bouland2019complexity, arute2019quantum, wu2021strong, morvan2023phase, zhong2020quantum, zhong2021phase, madsen2022quantum, deng2023gaussian, young2024atomic, liu2025robust} while many debates are going on regarding the precise boundary between classical simulability and genuine quantum advantage~\cite{neville2017classical,clifford2018classical,oszmaniec2018classical,garcia2019simulating,qi2020regimes,oh2021classical,quesada2022quadratic,bulmer2022boundary,oh2022classical,oh2023classical,liu2023simulating,oh2024classical, oh2025recent, oh2025classical}.

In bosonic systems, particularly in photonic platforms, the most prominent paradigm for quantum advantage is boson sampling due to a relatively more straightforward requirement, a linear-optical circuit~\cite{aaronson2011computational, hamilton2017gaussian, deshpande2021quantum}; consequently, numerous experiments have been conducted to demonstrate quantum advantage using boson sampling~\cite{zhong2020quantum, zhong2021phase, madsen2022quantum, deng2023gaussian, young2024atomic, liu2025robust}.
Despite its hardness results~\cite{aaronson2011computational, hamilton2017gaussian, deshpande2021quantum}, it is also widely recognized that universal quantum computation requires more than linear-optical circuits, such as non-Gaussian operations or measurement-and-feedforward operations~\cite{knill2001scheme, mari2012positive, weedbrook2012gaussian, rahimi2016sufficient, baragiola2019all}.

In theory, nonlinear gates such as Kerr interactions are sufficient for universality~\cite{lloyd1999quantum}.
In practice, however, such gates are extremely difficult to realize because photons rarely interact, and inevitable loss and noise further degrade the effect~\cite{silberhorn2001generation}.
Consequently, feasible bosonic gates in experiments are largely limited to Gaussian unitaries.
Since nonlinearity remains indispensable for universality, many experiments instead rely on non-Gaussian measurements combined with post-selection or feedforward~\cite{knill2001scheme, scheel2003measurement, arrazola2021quantum, bourassa2021blueprint, spagnolo2023non, larsen2025integrated, aghaee2025scaling}.
For example, heralding can generate single-photon states from two-mode squeezed vacua, while GKP states can be approximately prepared by post-selecting certain outcomes in Gaussian boson sampling circuits~\cite{larsen2025integrated, takase2023gottesman, aghaee2025scaling}.
Alternatively, non-Gaussian resources can be supplied entirely through the input states, provided that the measurement and feedforward operations are available, for the universal quantum computation~\cite{baragiola2019all}.
This has motivated recent experimental efforts to go beyond the boson-sampling framework by incorporating non-Gaussian resources, such as measurement-induced nonlinear effects, to unlock more versatile and powerful photonic processors.
Representative directions include nonlinear boson sampling~\cite{spagnolo2023non} and adaptive boson sampling~\cite{chabaud2021quantum, hoch2025quantum}.

Importantly, some of these recent works aim not only to demonstrate sampling-based quantum advantage but also to move towards more practical computational advantages, ultimately aligned with building a universal quantum computer.
From this perspective, it becomes increasingly critical to understand the complexity of simulating bosonic circuits for tasks beyond sampling. 
A particularly important computational task is the quantum mean-value problem because of its generality and practicality (e.g., Refs.~\cite{bravyi2021classical, kim2023evidence}).
More specifically, the quantum mean-value problem encompasses various practical problems, such as variational quantum algorithms and kernel-based machine learning applications~\cite{chabaud2021quantum, yin2024experimental, hoch2025quantum}.
Furthermore, the quantum mean-value problem for a universal quantum circuit is known to be BQP-complete~\cite{bernstein1993quantum, kitaev2002classical}; hence, this problem essentially captures the universal quantum computer's power, provided that the necessary resources are supplied.
This shift reflects a broader transition: extending the role of photonic devices from demonstrating quantum advantage in sampling problems to realizing practical advantages in tasks more relevant to real-world applications.
In this context, clarifying the boundary between classical and quantum computational power is essential.

%Note that the measurement and feedforward operations are sufficient resources for universal quantum computation when non-Gaussian 

In this work, we investigate the complexity of simulating bosonic circuits with measurement and feedforward operations, focusing on how the computational cost of the simulation depends on the number of measurements for feedforward operations (See Table~\ref{table:main} for the main results).
Among various notions of simulation~\cite{pashayan2020estimation, hangleiter2023computational}, we mainly analyze and compare two: sampling and the quantum mean-value problem.
While non-Gaussian resources generally render the sampling problem classically intractable~\cite{aaronson2011computational, hamilton2017gaussian, deshpande2021quantum, chabaud2023resources, hahn2024classical}, we show that the quantum mean-value problem can remain efficiently solvable classically even with highly non-Gaussian inputs, as long as the number of adaptive measurements is limited.
Hence, our main results characterize the classical costs of these tasks as functions of the number of measurements and feedforward.
Additionally, it suggests that the requirement for the hardness of the quantum mean-value problem is significantly more demanding.
We show this by presenting classical algorithms that can efficiently estimate the expectation value of a product observable when (i) the number of measurements used for feedforward is at most constant and (ii) the Hilbert-Schmidt norm of the observable is at most polynomially large.

In this work, we use the term \emph{measurement-adaptive Gaussian circuit} for the following class of photonic architectures.
We consider a collection of bosonic modes that undergo layers of Gaussian unitaries, possibly interleaved with $L$ layers of measurements on some of the modes.
The measurement outcomes are processed classically and can be used to choose the subsequent Gaussian unitaries (feedforward).
We allow general, possibly non-Gaussian, product input states on some modes (e.g., single-photon or GKP states), while the dynamical gates acting during the circuit are Gaussian.
The special case $L=0$ corresponds to a non-adaptive Gaussian circuit, whereas $L\ge 1$ yields a measurement-adaptive Gaussian circuit in our terminology.

More specifically, considering practical relevance, we analyze three circuit families in detail:
(i) Gaussian circuits without feedforward,
(ii) Gaussian circuits with photon-number-resolving detection for feedforward, and
(iii) Gaussian circuits with Gaussian measurements for feedforward.
It is worth emphasizing that if a sufficient number of photon-number measurements or Gaussian measurements for feedforward is provided, the circuits (ii) and (iii) become a universal quantum computer~\cite{knill2001scheme, baragiola2019all}.
Hence, the quantum mean-value problem becomes BQP-complete.
Therefore, understanding the complexity of the quantum mean-value problem is crucial for delineating regimes where such circuits remain classically simulable (for example, with a bounded amount of adaptivity) and regimes where they are powerful enough to enable universal quantum computation.
%capturing the crossover from classically easy circuits to those that enable universal quantum computation.
We also emphasize that our classical algorithms are efficient when the Hilbert-Schmidt norm of the observable is at most polynomially large, which is precisely the regime targeted by many proposals for \emph{practical} quantum advantage.
In photonic platforms, such proposals typically consider expectation values of few-mode or $k$-local observables, for example, local Hamiltonian terms, correlation functions, witnesses, or other figures of merit used as quantum-advantage benchmarks, whose Hilbert-Schmidt norms grow at most polynomially with the system size~(e.g., photonic variational quantum algorithms~(VQAs)~\cite{peruzzo2014variational,pappalardo2024photonic,hoch2024variational,facelli2024exact,baldazzi2025four,maring2024versatile}).
In these scenarios, additive-error estimation of such observables is the natural accuracy notion, so our results identify a regime where practically motivated mean-value tasks remain efficiently classically simulable, even though the same architectures become BQP-complete once unrestricted adaptivity is allowed.

%\cor{We also emphasize that our classical algorithm is efficient when the Hilbert-Schmidt norm of the observable is at most polynomially large, which is highly relevant in many experiments that pursue practical quantum advantages.
%For instance, many NISQ and variational photonic protocols focus on estimating expectation values of few-mode or $k$-local observables, such as local Hamiltonian terms, correlation functions, or cost-function gradients, whose Hilbert-Schmidt norms grow at most polynomially with the system size~\cite{peruzzo2014variational,pappalardo2024photonic,hoch2024variational,facelli2024exact,baldazzi2025four,maring2024versatile}.}

To prove the efficiency of our classical algorithms, we exploit low-rank structures of matrices arising in these circuits, inspired by Gurvits' second algorithm, which enables us to compute the marginal probabilities of Fock-state boson sampling~\cite{aaronson2011computational, ivanov2020complexity}.
It is well-known that low-rankness often significantly reduces the computational complexity in evaluating functions such as the permanent, hafnian, and loop hafnian~\cite{barvinok1996two, bjorklund2019faster, oh2024quantum}.
Using these observations, we design efficient classical algorithms for simulating bosonic circuits with a limited number of measurements for feedforward.
In particular, by extending Gurvits' second algorithm to more general setups, our algorithms exploit the low-rank structure of matrices associated with Gaussian circuits, which enables us to approximate relevant quantities with significantly reduced computational costs.
We expect that our extension of Gurvits' second algorithm is of independent interest for the complexity analysis of bosonic systems.

Our paper is organized as follows. 
In Sec.~\ref{sec:prior}, we first compare our work to previously known results to clarify our contributions.
In Sec.~\ref{sec:pre}, we provide preliminaries for our main results.
In Sec.~\ref{sec:setup}, we provide the details of the problem setups.
In Sec.~\ref{sec:algo}, we provide the main results about classical algorithms that simulate measurement-adaptive Gaussian circuits. 
More specifically, in Sec.~\ref{sec:gaussian}, we consider Gaussian circuits without feedforward and provide the classical algorithms that solve the sampling and quantum mean-value problems associated with the circuits, respectively.
In Sec.~\ref{sec:adaptive_photon}, we introduce measurement-adaptive Gaussian circuits by photon-number-resolving detectors and provide the classical algorithms solving the sampling and quantum mean-value problems, respectively.
In Sec.~\ref{sec:adaptive_gaussian}, we consider measurement-adaptive Gaussian circuits by Gaussian measurements and provide the classical algorithms solving the sampling and quantum mean-value problems, respectively.
In Sec.~\ref{sec:marginal}, we provide the main techniques that are necessary for the classical algorithms.
In Sec.~\ref{sec:conclusion}, we conclude.

\begin{table*}[t]
\label{table:main}
\centering
\begin{tabular}{|m{4.5cm}|m{6.0cm}|m{6.0cm}|}
\hline
\centering & \makecell[c]{\bf{Sampling} \\ (Gaussian input and measurement)} & \centering \bf{Quantum mean-value problem} \tabularnewline
\hline
\makecell[c]{Gaussian circuits \\ without feedforward} & \centering Efficient~\cite{mari2012positive, weedbrook2012gaussian, rahimi2016sufficient} & \centering *Efficient~[Sec.~\ref{sec:gaussian}, Theorem~\ref{thm:gaussian}]\tabularnewline
\hline
\makecell[c]{Gaussian circuits +\\ 
photon number counting} & \centering *Efficient when $L=O(\log M)$ \\~[Sec.~\ref{sec:adaptive}, Theorem~\ref{thm:sampling_gaussian_photon}]
& \makecell[c]{ *Efficient when $L=O(1)$\\~[Sec.~\ref{sec:adaptive}, Theorem~\ref{thm:gaussian_photon}],  \\ BQP-complete for general $L$~\cite{knill2001scheme}} \tabularnewline
\hline
\makecell[c]{Gaussian circuits +\\ 
Gaussian measurement} & \centering Efficient~\cite{mari2012positive, weedbrook2012gaussian, rahimi2016sufficient} & \makecell[c]{*Efficient when $L=O(1)$ \\~ [Sec.~\ref{sec:adaptive_gaussian}, Theorem~\ref{thm:gaussian_gaussian}],  \\ 
BQP-complete for general $L$~\cite{baragiola2019all}} \tabularnewline
\hline
\end{tabular}
\caption{Comparison of sampling and quantum mean-value estimation complexities for different families of bosonic quantum circuits: Gaussian circuit without feedforward, Gaussian circuits augmented by photon number counting and Gaussian feedforward, and Gaussian circuits augmented by Gaussian measurement and Gaussian feedforward. Here, $M$ is the number of modes and $L$ is the number of measurement-and-adaptive operations. Asterisks represent the results from this work.}
\end{table*}

\section{Related previous studies}\label{sec:prior}
Conceptually, our study aims to find an analogous classical algorithm in bosonic systems to those for simulating quantum circuits composed mainly of Clifford gates with a few $T$-gates~\cite{aaronson2004improved, bravyi2016improved}.
The analogy is that, in qubit systems, circuits with stabilizer input states, Clifford operations, and computational basis measurements are efficiently simulable on classical computers by the Gottesman-Knill theorem~\cite{gottesman1998heisenberg}, whereas the addition of $T$-gates provides the necessary “magic” for universal quantum computation.
In bosonic systems, Gaussian input states, Gaussian circuits, and Gaussian measurements play the role of stabilizer-Clifford circuits, i.e., easy to classically simulate, while non-Gaussian gates correspond to $T$-gates, which make the circuit hard to classically simulate~\cite{mari2012positive, rahimi2016sufficient}.
Moreover, just as $T$-gates can be simulated via gadgets that rely on measurement and feedforward, we also consider measurement-induced non-Gaussian elements in bosonic circuits~\cite{knill2001scheme, scheel2003measurement}.

This analogy has motivated a line of work on the classical simulation of Gaussian circuits augmented by limited non-Gaussian resources~\cite{veitch2012negative, chabaud2021classical, bourassa2021fast, marshall2023simulation, chabaud2023resources, hahn2024classical, dias2024classical, calcluth2024sufficient, mele2025symplectic} while many of which particularly focuses on strong simulation, i.e., computing the output (marginal) probabilities~\cite{chabaud2021classical, chabaud2023resources, hahn2024classical, dias2024classical}.
For example, Chabaud et al. and Mele et al.~\cite{chabaud2023resources, mele2025symplectic} analyzed the role of non-Gaussian resources in continuous-variable circuits and identified which types of resources are necessary to achieve quantum computational advantage.
Chabaud et al.~\cite{chabaud2021classical}, Dias et al.~\cite{dias2024classical}, and Hahn et al.~\cite{hahn2024classical} developed classical algorithms for simulating Gaussian circuits with a small number of non-Gaussian gates or inputs,  primarily targeting sampling problems and their strong simulations.
Other works, such as Veitch et al.~\cite{veitch2012negative}, explored the role of Wigner function negativity as a resource for quantum computational power, while Bourassa et al.~\cite{bourassa2021fast} investigated fast classical simulation methods for restricted classes of photonic circuits.
More recently, Calcluth et al.~\cite{calcluth2024sufficient} provided sufficient conditions for efficient simulation of bosonic circuits under resource constraints.
Overall, these studies mainly focus on classical simulation in the context of sampling problems, with a focus on strong simulations, rather than the quantum mean-value problem that we emphasize here.

Meanwhile, there is also a rich literature on estimating output probabilities of bosonic circuits with additive error, which is a special case of the quantum mean-value problem.
In particular, quasi-probability-based methods have been studied for estimating output probability distributions~\cite{pashayan2015estimating, pashayan2020estimation, lim2023approximating}.
For example, Pashayan et al.~\cite{pashayan2015estimating, pashayan2020estimation} showed that such estimation techniques can be carried out classically with complexity depending on the negativity of the quasi-probability representation.
While these methods have been applied to several interesting systems, they have not been specifically developed for the class of bosonic circuits we consider in this work.
Additionally, whereas their algorithms concentrate on estimating output probabilities, our work addresses a more general problem: the quantum mean-value problem.
Furthermore, our method does not explicitly depend on the negativity of the quantum circuits, which is in stark contrast to the previous works, making our method more generally applicable.

Finally, our previous work initiated the study of the complexity of the quantum mean-value problem in non-adaptive linear-optical circuits~\cite{lim2025efficient}.
While the main motivation there was to investigate the practicality of boson-sampling-type architectures, passive linear optics alone is not believed to render a universal quantum computer.
More generally, Gaussian circuits by themselves are also non-universal; universality is recovered only when one supplements Gaussian or linear-optical unitaries with suitable non-Gaussian resources or measurement-based feedforward, as in the KLM scheme and Gaussian cluster-state architectures.
Motivated by this, in the present work, we extend this line of research from passive linear optics to a broader class of bosonic circuits, namely measurement-adaptive Gaussian circuits, where Gaussian unitaries are interleaved with a finite number of measurement-and-feedforward layers acting on possibly non-Gaussian resource states.
We emphasize that there are increasing attentions to the adaptive circuits due to their potential practical applications~\cite{chabaud2021quantum, chabaud2024phase, hoch2025quantum}.
%Finally, our previous work initiated the study on the complexity of the quantum mean-value problem in linear-optical circuits~\cite{lim2025efficient}.
%While the main motivation for the focus on linear-optical circuits is to investigate the practicality of boson sampling circuits, linear-optical circuits do not solely render a universal quantum computer.
%For this reason, we extend this direction to more general bosonic circuits, including Gaussian circuits and those augmented by measurement and feedforward, which make the circuits universal when the measurement and feedforward operations are sufficiently many.
From a technical standpoint, we employ a similar method for the main routine of the classical algorithms.
However, extending from linear-optical circuits to Gaussian circuits with measurement and feedforward requires a more challenging subroutine, which is computing marginal probabilities for general Gaussian circuits.
To do this, we extend Gurvits' second algorithm, which works only for Fock states and linear-optical circuits, to arbitrary product states and Gaussian circuits~\cite{ivanov2020complexity, aaronson2011computational}.
To this end, we also provide a classical algorithm for computing the overlap of a low-mode quantum state evolved through a linear-optical circuit with an arbitrary product state.

\iffalse
\section{Summary of our results}\label{sec:summary}
\subsection{Sampling}

\subsection{Quantum mean-value problem}
\fi

\section{Preliminary}\label{sec:pre}
In this section, we provide some preliminaries for the main results (more details can be found in Refs.~\cite{cahill1969density, ferraro2005gaussian, serafini2017quantum}).
We will mainly focus on $M$-mode bosonic systems throughout this work.
Here, the annihilation and creation operators for $i$th mode are denoted by $\hat{a}_i$ and $\hat{a}_i^\dagger$, respectively, and they satisfy the canonical commutation relation, $[\hat{a}_i,\hat{a}_j^\dagger]=\delta_{ij}$ and $[\hat{a}_i,\hat{a}_j]=[\hat{a}_i^\dagger,\hat{a}_j^\dagger]=0$.
In such systems, any $M$-mode operator $\hat{O}$ can be written as a linear combination of displacement operators:
\begin{align}\label{eq:expansion}
    \hat{O}
    =\frac{1}{\pi^M}\int d^{2M}\bm{\alpha} \chi_{\hat{O}}(\bm{\alpha})\hat{D}^\dagger(\bm{\alpha}),
\end{align}
where $\chi_{\hat{O}}(\bm{\alpha})\equiv\Tr[\hat{D}(\bm{\alpha})\hat{O}]$ is called the characteristic function of an operator $\hat{O}$, $\hat{D}(\bm{\alpha})$ is an $M$-mode displacement operator of an amplitude $\bm{\alpha}\in\mathbb{C}^M$ which is a tensor product of single-mode displacement operators, i.e., $\hat{D}(\bm{\alpha})=\otimes_{i=1}^M \hat{D}(\alpha_i)=\otimes_{i=1}^M e^{\alpha_i\hat{a}_i^\dagger-\alpha_i^*\hat{a}_i}$ with $\alpha_i\in \mathbb{C}$.
The displacement operator with zero amplitude is equal to the identity operator $\hat{D}(\bm{0})=\hat{\mathbb{1}}_M$, where $\hat{\mathbb{1}}_M$ denotes the identity operator for the $M$-mode system.
An important property of the displacement operator that we frequently use is the following twirling relation:
\begin{align}\label{eq:twirl}
    \frac{1}{\pi^M}\int d^{2M}\bm{\alpha}\hat{D}(\bm{\alpha})\hat{O}\hat{D}^\dagger(\bm{\alpha})
    =\Tr[\hat{O}]\hat{\mathbb{1}}_M.
\end{align}
Note that the convention of integration is $\int d^{2M}\bm{\alpha}=\int_{\mathbb{C}^M} \prod_{j=1}^M d^2\alpha_j$ and $\int_{\mathbb{C}} d^2\alpha_j=\int_{-\infty}^\infty d\Re\alpha_j d\Im\alpha_j$, where $\alpha_j=\Re\alpha_j+i\Im\alpha_j$ throughout this work.

\textit{Gaussian states, circuits, and measurements.---}
In this work, we frequently consider Gaussian states, Gaussian unitary circuits, and Gaussian measurements~\cite{ferraro2005gaussian, weedbrook2012gaussian, serafini2017quantum}.
Thus, we provide definitions for these objects.

First of all, Gaussian states are defined as the ones whose characteristic function (or equivalently, the Wigner function) is Gaussian.
The states that are not Gaussian are called non-Gaussian states.

Gaussian unitary circuits are those that map Gaussian states to Gaussian states.
Otherwise, the circuits are referred to as non-Gaussian.
An essential property of Gaussian unitary circuits is that by the Bloch-Messiah decomposition, any Gaussian unitary circuit can be written as a linear-optical circuit $\hat{U}$, the $M$-mode tensor product of single-mode squeezing gates $\hat{S}(\bm{r})=\otimes_{i=1}^M \hat{S}(r_i)$, and another linear-optical circuit $\hat{V}$, i.e., $\hat{G}=\hat{D}(\bm{\beta})\hat{U}\hat{S}(\bm{r})\hat{V}$, where $\bm{\beta}\in\mathbb{C}^M$ is the displacement vector and $\bm{r}\in \mathbb{R}^M$ is the squeezing parameter vector.
Here, a linear-optical circuit, a restricted family of Gaussian circuits, is defined as an operator that transforms bosonic annihilation operators as
\begin{align}
    \hat{U}^\dagger \hat{a}_i\hat{U}
    =\sum_{j=1}^M U_{ij}\hat{a}_j,
\end{align}
where $U$ is the associated $M\times M$ unitary matrix to the linear-optical circuit $\hat{U}$.
Another important property that is frequently exploited is that a Gaussian unitary circuit transforms a displacement operator into another displacement operator~\cite{ferraro2005gaussian, weedbrook2012gaussian, serafini2017quantum}
\begin{align}\label{eq:G_dis}
    e^{2i\Im(\bm{\alpha}^\dagger\bm{\beta})}\hat{D}^\dagger(\bm{\alpha}')
    =\hat{G}^\dagger \hat{D}^\dagger(\bm{\alpha})\hat{G},
\end{align}
where $\bm{\beta}$ is the displacement in $\hat{G}$'s Bloch-Messiah decomposition and $\bm{\alpha}'\in \mathbb{C}^M$ is a function of $\bm{\alpha}$ that depends on the Gaussian circuit $\hat{G}$.
More specifically, for a Gaussian unitary that transforms the annihilation operators as
\begin{align}
    \hat{G}^\dagger \hat{\bm{a}} \hat{G}=X\hat{\bm{a}}+Y\hat{\bm{a}}^\dagger+\bm{\beta},
\end{align}
where $\hat{\bm{a}}=(\hat{a}_1,\dots,\hat{a}_M)$, we have
\begin{align}\label{eq:gaussian_transform}
    \bm{\alpha}'=X^\dagger \bm{\alpha}-Y^\T\bm{\alpha}^*.
\end{align}
Gaussian measurements are the ones whose POVMs are given as~\cite{serafini2017quantum, weedbrook2012gaussian}
\begin{align}
    \hat{\Pi}(\bm{\alpha})
    =\frac{1}{\pi^M}\hat{D}(\bm{\alpha})\hat{\rho}_G\hat{D}^\dagger(\bm{\alpha}),
\end{align}
where $\hat{\rho}_G$ is a Gaussian state that defines the Gaussian measurement.
Gaussian measurements include homodyne, heterodyne, and general-dyne detections, which are frequently employed in current quantum optical circuits.
Without loss of generality, the Gaussian state $\hat{\rho}_G$ that defines the measurement can be assumed to be a pure Gaussian state $|\psi_G\rangle\langle \psi_G|$ because the Gaussian measurements represented by a mixed Gaussian state can be simulated by those with a pure Gaussian state.
Thus, we write the Gaussian measurements' POVMs as
\begin{align}
    \hat{\Pi}(\bm{\alpha})
    =\frac{1}{\pi^M}\hat{D}(\bm{\alpha})|\psi_G\rangle\langle \psi_G|\hat{D}^\dagger(\bm{\alpha}).
\end{align}
The completeness of POVMs is guaranteed by the twirling property of the displacement operator, Eq.~\eqref{eq:twirl}.

\textit{The median-of-means estimator.---}
Finally, since all the estimators we use for our classical algorithms to solve the quantum mean-value problem are based on the median-of-means estimator~\cite{jerrum1986random, lim2025efficient}, let us provide the relevant parameters.
Suppose we want to estimate the mean of a random variable $X(\bm{\alpha})$:
\begin{align}
    \mathbb{E}[X]=\int d^{2M}\bm{\alpha} p(\bm{\alpha})X(\bm{\alpha}),
\end{align}
where $p(\bm{\alpha})$ is a proper probability distribution, i.e., $\mathbb{E}[X]$ is the average of the random variable $X(\bm{\alpha})$ over $\bm{\alpha}\sim p(\bm{\alpha})$.
Let $\text{Var}(\bm{\alpha})$ be the variance of the random variable $X(\bm{\alpha})$.
Then, the median-of-means estimator is an estimator $\mu$ constructed by taking the median of means of subsamples, and, importantly, it satisfies
\begin{align}\label{eq:mom}
    \Pr[|\mu-\mathbb{E}[X]\geq \epsilon]\leq 1-\delta,
\end{align}
if we use $N$ samples, where $N\geq 544 \text{Var}(X)\epsilon^{-2}\log(1/\delta)$~\cite{jerrum1986random, lim2025efficient}.
Hence, to bound the sample complexity, it suffices to find an upper bound on the variance (or the second moment) of the random variable $X$.

\textit{Sampling by chain rule.---}
Finally, our classical algorithms are randomized algorithms based on sampling using the chain rule of conditional probabilities. 
More specifically, when we have a probability distribution $p(\bm{n})$, one way to sample $\bm{n}$ from the probability distribution is to use the chain rule of the conditional probability:
\begin{align}\label{eq:sampling}
    p(\bm{n})=p(n_1)p(n_2|n_1)\cdots p(n_M|n_1,\dots,n_{M-1}).
\end{align}
Hence, if one can compute the marginal probability, by sampling each random variable $n_i$ using the marginal probabilities, one can sample $\bm{n}$ from the full probability distribution $p(\bm{n})$~\footnote{We note that when marginal probabilities can be efficiently computed, sampling from each mode is efficient because the only scaling parameter, defining the efficiency, comes from the marginal probabilities. However, there may be additional costs in practice for sampling, especially for continuous variables (e.g., rejection sampling)}.
Therefore, the complexity of computing the marginal probabilities determines that of sampling.

\section{Problem setup}\label{sec:setup}
For computational tasks, we will examine two problems: (i) the sampling problem and (ii) the quantum mean-value problem.
We now provide precise definitions of these two problems.

The sampling problem is defined as simulating the output probability distribution of a given quantum circuit, which is the basis of the current quantum advantage experiments, such as random circuit sampling~\cite{bouland2019complexity, arute2019quantum, wu2021strong, morvan2023phase} and boson sampling~\cite{zhong2020quantum, zhong2021phase, madsen2022quantum, deng2023gaussian, young2024atomic, liu2025robust}.
Its definition is as follows:
\begin{definition}[Sampling problem]
    Given as input an efficient description of an $(M+L)$-mode product input state $|\psi\rangle$, a quantum circuit $\mathcal{E}$ acting on $(M+L)$ modes that performs measurements on $L$ modes with adaptive control, and a measurement on the remaining $M$ output modes described by a POVM $\{\Pi_{\bm{x}}\}$, generate output samples according to the probability distribution $p(\bm{x})\equiv \Tr[\Pi_{\bm{x}}\mathcal{E}(|\psi\rangle\langle \psi|)]$.
\end{definition}
In the definition, two parameters, the number of output modes $M$ and the number of modes that are used for measurement for adaptive control $L$, appear.
When $L=0$, there is no adaptive control in the circuit.
Our main focus in this work is to investigate for which scalings of $L$ as a function of $M$ the corresponding sampling problem admits an efficient classical simulation, i.e., can be solved in time polynomial in $M$, depending on the class of quantum circuits $\mathcal{E}$.

Note that in our definition, we choose the notion of sampling as exact sampling; hence, it is the hardest for classical computers (compared to approximate sampling).

For the quantum mean-value problem, the goal is to estimate the expectation value of a product operator $\hat{O}=\hat{O}_1\otimes \cdots \otimes \hat{O}_M$ for an output state of a quantum circuit, $|\psi_\text{out}\rangle$, with an additive error,
\begin{align}
    \langle \psi_\text{out}|\hat{O}|\psi_\text{out}\rangle.
\end{align}
For a finer complexity analysis, we consider a product structure $\hat{O}=\hat{O}_A\otimes \hat{\mathbb{1}}_B$.
Thus,
\begin{align}
    \langle \psi_\text{out}|\hat{O}_A\otimes \hat{\mathbb{1}}_B|\psi_\text{out}\rangle,
\end{align}
where $\hat{O}_A$ is the nontrivial part of the observable on subsystem $A$, still assumed to be in a product form, i.e., $\hat{O}_A=\prod_{i\in A}\hat{O}_i$.
We let $l$ denote the number of modes in the subsystem $A$.

\begin{definition}[Quantum mean-value problem]
    Given as input an efficient description of an $(M+L)$-mode product input state $|\psi\rangle$, a quantum circuit $\mathcal{E}$ acting on $(M+L)$ modes that performs measurements on $L$ modes with adaptive control, and an $M$-mode product observable $\hat{O}=\hat{O}_A\otimes \hat{\mathbb{1}}_B$, estimate the expectation value of $\hat{O}$ with respect to the output state of the circuit, $\Tr[\hat{O}\mathcal{E}(|\psi\rangle\langle \psi|)]$ to within additive error $\epsilon$ with success probability at least $1-\delta$.
\end{definition}
In this setting, the relevant size parameters are again the number of output modes $M$ and the number $L$ of modes used for adaptive control.
Our goal in this work is to identify regimes of $L$ and classes of observables $\hat{O}_A$ for which this quantum mean-value problem admits an efficient classical algorithm, namely one whose running time is polynomial in $M$, $1/\epsilon$, and $\log(1/\delta)$.

%Consider a product input state, a quantum circuit, and a product observable $\hat{O}=\hat{O}_A\otimes \hat{\mathbb{1}}_B$. 
%The task is to estimate the expectation value of $\hat{O}$ with respect to the output state of the circuit to within additive error $\epsilon$ with success probability at least $1-\delta$.

For quantum circuits, we will focus on the following three different families:
(i) Gaussian circuits without feedforward,
(ii) Gaussian circuits augmented by photon-number measurement and adaptive Gaussian unitary gates depending on the photon-number measurement outcomes,
(iii) Gaussian circuits augmented by Gaussian measurement and adaptive Gaussian unitary gates depending on the Gaussian measurement outcomes.
Here, the choice of the input states and measurements depends on the tasks.
The main reason that we investigate these three families is the feasibility of their implementations in practice.
One may notice that we assume that all gate elements, including adaptive operations, are composed of Gaussian gates, which are experimentally accessible.
Hence, for the hardness results, non-Gaussian elements, necessary resources for the hardness, are introduced through the input states and measurements.

Whenever the underlying circuit model is universal for quantum computation, in the sense that it can realize any polynomial-size adaptive circuit, the corresponding quantum mean-value problem is BQP-complete: additive-error estimation of suitable observables already captures the full power of a universal quantum computer. In our measurement-adaptive Gaussian architecture, this universality can be achieved by allowing the number $L$ of adaptive measurements to grow polynomially with the system size~\cite{knill2001scheme, baragiola2019all}, whereas our main results show that for any fixed constant $L$ the quantum mean-value problem for a broad class of observables remains efficiently classically simulable.

For the sampling problem, we restrict to Gaussian input states and Gaussian measurements; otherwise, the corresponding sampling problems are already known to be hard to classically simulate due to boson sampling's hardness results~\cite{aaronson2011computational, chakhmakhchyan2017boson, hamilton2017gaussian, deshpande2021quantum}.
Under this restriction, the first and third circuit families are already known to be efficiently simulable, since all elements are Gaussian~\cite{bartlett2002efficient}.
In contrast, the second family is still nontrivial because it necessarily involves non-Gaussian elements.
On the other hand, for the quantum mean-value problem, a general product input state and measurement, which may be non-Gaussian, is considered.

Finally, for simplicity, we will frequently assume that relevant quantum states (not for Gaussian states) have a finite maximum photon number:
\begin{align}\label{eq:psi}
    |\psi\rangle=
    \bigotimes_{i=1}^M\left(\sum_{m_i=0}^{n_\text{max}}a^{(i)}_{m_i}|m_i\rangle\right),
\end{align}
where $|m_i\rangle$ is the Fock state with photon number $m_i$ and $a^{(i)}_{m_i}$ is the coefficient of the $i$th mode's $m_i$ Fock state.
When a given state does not satisfy this, we can approximate it by truncating the state by choosing the maximum photon number $n_\text{max}$ appropriately.
Although it induces an approximation error, such an approximation error is negligible if we choose $n_\text{max}$ sufficiently large.
Since the high photon occupation probability decays quickly in typical physical states, $n_\text{max}$ does not have to be high in practice.
For example, the system with dual-rail qubit encodings, such as the KLM protocol, requires only $n_\text{max}=1$, and few-photon Fock states or weakly squeezed states may not require more than $n_\text{max}$ larger than 10 in general.
As an example, Ref.~\cite{oh2024classical}, simulating large-scale Gaussian boson sampling experiments, uses $n_\text{max}=7\sim10$ to suppress the truncation error.

\section{Classical algorithms}\label{sec:algo}
\subsection{Gaussian circuits without feedforward}\label{sec:gaussian}
In many optical quantum circuits, Gaussian circuits are relatively easy to implement, but not sufficient for universal quantum computation.
Hence, additional resources are introduced through non-Gaussian state preparation, non-Gaussian measurements, and measurement and feedforward~\cite{knill2001scheme, arrazola2021quantum, bourassa2021blueprint, larsen2025integrated, aghaee2025scaling, scheel2003measurement}.
In this section, however, we investigate Gaussian circuits without any measurement and feedforward operations, but allow arbitrary input states and measurements, and we consider the associated computational tasks.

\textit{Sampling.---}
For sampling, we consider the following configuration:
\begin{itemize}
    \item Input state: product state;
    \item Quantum circuit: Gaussian unitary circuit;
    \item Measurement: Gaussian measurement.
\end{itemize}
As emphasized in the previous section, since such a configuration can realize boson sampling circuits~\cite{aaronson2011computational, hamilton2017gaussian, deshpande2021quantum}, classical simulation of their sampling behavior is known to be hard.
On the other hand, if the input state is restricted to Gaussian states and hence the entire circuit is Gaussian, it is well established that such circuits are efficiently classically simulable~\cite{mari2012positive, weedbrook2012gaussian, rahimi2016sufficient}.

\textit{Quantum mean-value problem.---}
We now consider the quantum mean-value problem in this setup with the following configuration:
\begin{itemize}
    \item Input state: product state;
    \item Quantum circuit: Gaussian unitary circuit;
    \item Observable: product observable.
\end{itemize}
Thus, we consider approximation of $\langle \psi|\hat{G}^\dagger\hat{O}\hat{G}|\psi\rangle$, where $\hat{G}$ is a Gaussian unitary circuit, $|\psi\rangle$ is an arbitrary product state, and $\hat{O}=\hat{O}_A\otimes\hat{\mathbb{1}}_B$ is a product observable:
For linear-optical circuits, it was proven in Ref.~\cite{lim2025efficient} that estimating the expectation value within additive error $\epsilon$ with success probability $1-\delta$ requires $O(M^2\|\hat{O}_A\|_2^2 \Tr[\hat{\rho}_A^2] \log(1/\delta)/\epsilon^{2})$ computational cost, where $\hat{\rho}_A\equiv \Tr_B[\hat{G}|\psi\rangle\langle \psi|\hat{G}^\dagger]$ is the reduced density matrix on the system $A$.
Here we extend this result from linear-optical circuits to general Gaussian circuits:

\begin{theorem}[Quantum mean-value problem in Gaussian circuits without feedforward]\label{thm:gaussian}
    Consider an $M$-mode Gaussian circuit $\hat{G}$ and an arbitrary product input state $|\psi\rangle$, and an $M$-mode product operator $\hat{O}=\hat{O}_A\otimes \hat{\mathbb{1}}_B$.  
    The expectation value $\langle \psi|\hat{G}^\dagger \hat{O}\hat{G}|\psi\rangle$ can be approximated within additive error $\epsilon$ with probability $1-\delta$ in running time
    \begin{align}
        O\left(\frac{M^2\|\hat{O}_A\|_2^2\Tr[\hat{\rho}_A^2]\log(1/\delta)}{\epsilon^2}\right),
    \end{align}
    where $\hat{\rho}_A\equiv \Tr_B[\hat{G}|\psi\rangle\langle \psi|\hat{G}^\dagger]$ is the reduced density matrix on the system $A$.
\end{theorem}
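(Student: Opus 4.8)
The plan is to extend the linear-optical argument of Ref.~\cite{lim2025efficient} by replacing its core subroutine with one valid under a general Gaussian conjugation. First I would rewrite the target as a trace on subsystem $A$, $\langle\psi|\hat{G}^\dagger(\hat{O}_A\otimes\hat{\mathbb{1}}_B)\hat{G}|\psi\rangle=\Tr[\hat{O}_A\hat{\rho}_A]$, and expand \emph{only} the nontrivial $l$-mode factor $\hat{O}_A$ in displacement operators via Eq.~\eqref{eq:expansion}. Since $\Tr[\hat{D}^\dagger(\bm{\alpha})\hat{\rho}_A]=\chi_{\hat{\rho}_A}(-\bm{\alpha})$, this yields
\begin{align}
\Tr[\hat{O}_A\hat{\rho}_A]=\frac{1}{\pi^l}\int d^{2l}\bm{\alpha}\,\chi_{\hat{O}_A}(\bm{\alpha})\,\chi_{\hat{\rho}_A}(-\bm{\alpha}).
\end{align}

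Next I would build a median-of-means estimator by importance sampling from the \emph{factorized} density $p(\bm{\alpha})=\prod_{i\in A}|\chi_{\hat{O}_i}(\alpha_i)|^2/(\pi\|\hat{O}_i\|_2^2)$, which is normalized because $\hat{O}_A=\prod_{i\in A}\hat{O}_i$ and $\|\hat{O}_i\|_2^2=\pi^{-1}\int d^2\alpha\,|\chi_{\hat{O}_i}(\alpha)|^2$, and well defined since the zero set of each $\chi_{\hat{O}_i}$ is negligible whenever $\hat{O}_i\neq0$. The associated unbiased variable is $X(\bm{\alpha})=\|\hat{O}_A\|_2^2\,\chi_{\hat{\rho}_A}(-\bm{\alpha})/\overline{\chi_{\hat{O}_A}(\bm{\alpha})}$; one estimates its real part, the mean $\Tr[\hat{O}_A\hat{\rho}_A]$ being real. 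The crucial step is the second moment: the $|\chi_{\hat{O}_A}|^2$ appearing in the denominator exactly cancels the one in $p$, leaving $\mathbb{E}[|X|^2]=\|\hat{O}_A\|_2^2\,\pi^{-l}\int d^{2l}\bm{\alpha}\,|\chi_{\hat{\rho}_A}(\bm{\alpha})|^2=\|\hat{O}_A\|_2^2\,\Tr[\hat{\rho}_A^2]$. Feeding this variance bound into Eq.~\eqref{eq:mom} fixes the sample count $N=O(\|\hat{O}_A\|_2^2\Tr[\hat{\rho}_A^2]\epsilon^{-2}\log(1/\delta))$, with $\Tr[\hat{\rho}_A^2]\le1$ used if its value is not at hand.

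It then remains to show that each sample is drawn and evaluated in $O(M^2)$ time, assuming as in Eq.~\eqref{eq:psi} that the $|\psi_i\rangle$ and $\hat{O}_i$ have bounded photon number. Drawing $\bm{\alpha}$ reduces to $l$ independent single-mode draws from densities $\propto|\chi_{\hat{O}_i}(\alpha_i)|^2$, which are polynomial-times-Gaussian and can be sampled efficiently by one-dimensional inverse-CDF methods. Evaluating $X$ needs $\chi_{\hat{O}_A}(\bm{\alpha})$ and $\|\hat{O}_A\|_2^2$ (products over $i\in A$) together with $\chi_{\hat{\rho}_A}(-\bm{\alpha})$; for the latter I would invoke Eq.~\eqref{eq:G_dis}, writing $\chi_{\hat{\rho}_A}(-\bm{\alpha})=\langle\psi|\hat{G}^\dagger(\hat{D}_A^\dagger(\bm{\alpha})\otimes\hat{\mathbb{1}}_B)\hat{G}|\psi\rangle=\langle\psi|\hat{D}(\bm{\beta})|\psi\rangle=\prod_{i=1}^M\chi_{|\psi_i\rangle\langle\psi_i|}(\beta_i)$, where $\bm{\beta}$ is the $\mathbb{R}$-linear image of the displacement amplitude under the symplectic action of $\hat{G}$ (including the anti-linear contribution from squeezing). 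Computing $\bm{\beta}$ is an $O(Ml)$ matrix-vector product after $\poly(M)$ preprocessing, and each single-mode factor $\chi_{|\psi_i\rangle\langle\psi_i|}(\beta_i)$ or $\chi_{\hat{O}_i}(\alpha_i)$ follows from the closed-form Fock-basis matrix elements of displacement operators (Laguerre polynomials) in $\poly(n_\text{max})$ time. Summing over modes gives $O(M^2)$ per sample, hence $O(M^2N)$ overall, matching the claim.

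The step I expect to be most delicate is the second-moment identity: one must verify that $X$ is defined on a full-$p$-measure set, that the cancellation is exact, and that no residual dependence on inverse characteristic functions survives, so that the variance is genuinely $\|\hat{O}_A\|_2^2\Tr[\hat{\rho}_A^2]$. The genuinely new ingredient relative to Ref.~\cite{lim2025efficient} is verifying that $\chi_{\hat{\rho}_A}$ can be evaluated pointwise in polynomial time for an \emph{arbitrary} Gaussian $\hat{G}$; unlike the linear-optical case this requires tracking the full, possibly anti-linear, symplectic transformation of displacement amplitudes rather than a single unitary matrix, though this is bookkeeping rather than a conceptual obstacle.
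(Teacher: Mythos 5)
Your proposal is correct and follows essentially the same route as the paper's own proof: expanding $\hat{O}_A$ in displacement operators, importance-sampling from $p(\bm{\alpha}_A)\propto|\chi_{\hat{O}_A}(\bm{\alpha}_A)|^2$, bounding the second moment by $\|\hat{O}_A\|_2^2\Tr[\hat{\rho}_A^2]$, and evaluating $\chi_{\hat{\rho}_A}$ pointwise via the Gaussian transformation of the displacement amplitude and the product structure of $|\psi\rangle$, with the median-of-means estimator giving the stated complexity. The extra details you flag (conjugation conventions, the measure-zero set where $\chi_{\hat{O}_i}$ vanishes, and single-mode evaluation of characteristic functions) are consistent with the paper's treatment and do not change the argument.
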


\begin{figure*}[t]
\includegraphics[width=460px]{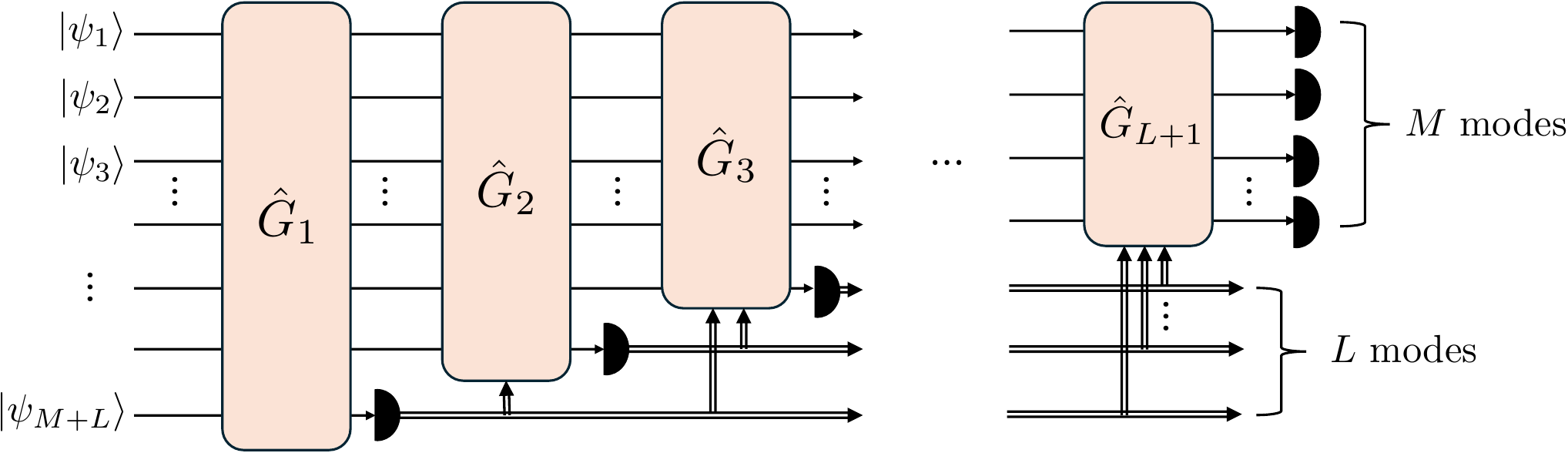} %{scheme_ed.eps}
\caption{Schematics of quantum circuits of interest in this work. The input state is a product state, and the quantum circuits are composed of Gaussian circuits, measurements, and feedforward operations. More specifically, we consider photon number measurements (Sec.~\ref{sec:adaptive_photon}) and Gaussian measurements (Sec.~\ref{sec:adaptive_gaussian}) for feedforward operations. The final measurement on the computing register is always Gaussian for sampling and implicit for the quantum mean-value problem, depending on the observable. Note that the final measurement can also be adaptive through the final circuit layer $\hat{G}_{L+1}$.}
\label{fig:adaptive}
\end{figure*}

\begin{proof}[Proof Sketch]
Details are provided in Appendix~\ref{app:gaussian}.
Using Eq.~\eqref{eq:expansion}, the mean value can be rewritten as
\begin{align}
    \langle \psi|\hat{G}^\dagger (\hat{O}_A\otimes \hat{\mathbb{1}}_B)\hat{G}|\psi\rangle 
    =\int d^{2l}\bm{\alpha}_A\, p(\bm{\alpha}_A)X(\bm{\alpha}_A),
\end{align}
where $l$ is the number of modes in the subsystem $A$, and we defined a probability distribution and a random variable,
\begin{align}
    p(\bm{\alpha}_A)=\frac{\left|\chi_{\hat{O}_A}(\bm{\alpha}_A)\right|^2}{\pi^l\|\hat{O}_A\|_2^2},~~~
    X(\bm{\alpha}_A)=\frac{\chi^*_{\hat{\rho}_A}(\bm{\alpha}_A)\,\|\hat{O}_A\|_2^2}{\chi^*_{\hat{O}_A}(\bm{\alpha}_A)},
\end{align}
respectively.
By the twirling property, Eq.~\eqref{eq:twirl}, $p(\bm{\alpha}_A)$ is a normalized probability distribution.  
Thus, the expectation value equals the mean of the random variable $X(\bm{\alpha}_A)$, which allows us to employ the median-of-means estimator.

By direct calculation, we show that the variance of $X$ is bounded above by $\|\hat{O}_A\|_2^2\|\hat{\rho}_A\|_2^2$.  
Moreover, the product structure of $p(\bm{\alpha}_A)$ and Eq.~\eqref{eq:G_dis} guarantee that sampling $\bm{\alpha}_A\sim p(\bm{\alpha}_A)$ and evaluating $X(\bm{\alpha}_A)$ can be done efficiently, respectively.  
Explicitly, they can be written in a product structure:
\begin{align}\label{eq:prod}
    p(\bm{\alpha}_A)
    &=\prod_{i\in A} \frac{\left|\chi_{\hat{O}_i}(\alpha_i)\right|^2}{\pi\|\hat{O}_i\|_2^2},\\
    X(\bm{\alpha}_A)
    &=\prod_{i=1}^M \chi^*_{\psi_i}(\alpha'_i)\,
      \prod_{i\in A} \frac{\|\hat{O}_i\|_2^2}{\chi^*_{\hat{O}_i}(\alpha_i)},
\end{align}
where $\bm{\alpha}'$ is defined by
\begin{align}
    \hat{D}^\dagger(\bm{\alpha}')
    =\hat{G}^\dagger \bigl(\hat{D}^\dagger(\bm{\alpha}_A)\otimes \hat{D}^\dagger(\bm{0}_B)\bigr)\hat{G},
\end{align}
and computed in $O(lM)$.
Hence, applying the median-of-means estimator~\eqref{eq:mom} with the above variance bound then yields the claimed complexity.
\end{proof}

Hence, the algorithm runs as follows:
\begin{enumerate}
    \item Sample $\bm{\alpha}_A\in \mathbb{C}^l$ from $p(\bm{\alpha}_A)$ by independently sampling $\alpha_i$ from $|\chi_{\hat{O}_i}(\alpha_i)|^2/(\pi\|\hat{O}_i\|_2^2)$;
    \item Compute $X(\bm{\alpha}_A)$ for the sampled $\bm{\alpha}_A$ by using Eq.~\eqref{eq:gaussian_transform}.
    \item Apply the median-of-mean estimator to estimate the expectation value.
\end{enumerate}
Here, for sampling from $|\chi_{\hat{O}_i}(\alpha_i)|^2/(\pi\|\hat{O}_i\|_2^2)$, one may consider various well-known classical algorithms. For example, for general purposes rejection sampling and Markov Chain Monte Carlo algorithms are natural choices to perform the sampling~\cite{robert1999monte} since we can efficiently compute the probability density.

On the other hand, Theorem~\ref{thm:gaussian} shows that whenever $\|\hat{O}_A\|_2^2 = O(\poly(M))$, there exists a classical algorithm that efficiently solves the quantum mean-value problem for Gaussian circuits.
This condition is naturally met in many relevant scenarios, such as local observables with support on $l = O(1)$ modes and low-rank (possibly global) observables whose Hilbert-Schmidt norm remains polynomially bounded.
The former is particularly relevant for near-term applications that focus on estimating local quantities (e.g., VQAs~\cite{peruzzo2014variational,pappalardo2024photonic,hoch2024variational,facelli2024exact,baldazzi2025four,maring2024versatile}), as discussed in the Introduction.

We note that the dependence of the classical complexity on the number of modes $l$ in the subsystem $A$ enters mainly through the Hilbert-Schmidt norm $\|\hat{O}_A\|_2^2$ and the local purity $\Tr[\hat{\rho}_A^2]$.
When $l = \omega(1)$, the Hilbert-Schmidt norm can become extensive or even grow exponentially with $l$ (for instance, for highly non-local observables with large support), which in turn can make the runtime exponentially large.
In contrast, for genuinely local few-body observables or low-rank observables with bounded spectrum, $\|\hat{O}_A\|_2^2$ typically remains polynomial in $M$, so that this potential source of hardness is absent.
On the other hand, if the output state is highly entangled, the purity of the reduced state $\Tr[\hat{\rho}_A^2]$ can be significantly suppressed.
This suppression always reduces the complexity, but for typical highly entangled states, one still expects at most an inverse-polynomial suppression in $M$ since typical highly entangled states' purity is at least inverse-polynomial in the system size $M$~\cite{fukuda2019typical,iosue2023page}.

We emphasize that the input state may be arbitrary as long as it is a product state.
Hence, even if the input state has a large amount of non-Gaussian resources, such as a high stellar rank~\cite{chabaud2020stellar}, the corresponding quantum mean-value problem is still easy, which is in stark contrast to the sampling problem.

\subsection{Photon-number-measurement-adaptive Gaussian circuits}\label{sec:adaptive}\label{sec:adaptive_photon}
The Gaussian circuits considered in the preceding section are not believed to be as powerful as a universal quantum computer, even when supplemented with non-Gaussian inputs or measurements~\cite{aaronson2011computational}.
Consequently, in many photonic architectures, universal quantum computation is pursued through intermediate measurements and feedforward, as the direct implementation of non-Gaussian gates is often experimentally demanding.
For feasibility, we assume that the feedforward unitaries are Gaussian.  

In this section, we consider photon-number-resolving detections for feedforward operations.  
In other words, conditioned on the detected photon numbers, a measurement-outcome-dependent Gaussian operation is applied.  
We now analyze the computational complexity of simulating such circuits for both sampling and mean-value estimation, with particular emphasis on the dependence on the number of measurements and adaptive steps.

\textit{Measurement and feedforward.---}
We now formulate circuits augmented with measurement and feedforward (see Fig.~\ref{fig:adaptive}).
The circuit consists of a computing register and a measurement register: the modes in the computing register are retained until the final measurement, while the modes in the measurement register are measured in the middle, and the outcomes are used to determine feedforward operations on either the computing or measurement registers.

We begin with a product input state $|\psi\rangle$ and apply a Gaussian unitary $\hat{G}^{(1)}$,
\begin{align}
    |\psi^{(1)}\rangle = \hat{G}^{(1)}|\psi\rangle.
\end{align}
Next, one of the modes is measured using a rank-one POVM $\{|n_1\rangle\langle n_1|\}_{n_1}$, and an adaptive Gaussian unitary is applied to the remaining modes:
\begin{align}
    |\psi^{(1)}\rangle
    \to \hat{G}_{n_1}^{(2)}(\hat{\mathbb{1}}\otimes \langle n_1|)|\psi^{(1)}\rangle 
    \equiv |\tilde{\psi}_{n_1}^{(2)}\rangle,
\end{align}
where $|\tilde{\psi}_{n_1}^{(2)}\rangle$ is a unnormalized state.
More generally, at the $k$th step, we have
\begin{align}
    |\tilde{\psi}_{n_1,\dots,n_{k-1}}^{(k)}\rangle
    &\to \hat{G}_{n_1,\dots,n_k}^{(k+1)}(\hat{\mathbb{1}}\otimes \langle n_k|)|\tilde{\psi}^{(k)}_{n_1,\dots,n_{k-1}}\rangle \\
    &\equiv |\tilde{\psi}_{n_1,\dots,n_k}^{(k+1)}\rangle.
\end{align}
The intermediate (unnormalized) state can be rewritten as
\begin{align}
    |\tilde{\psi}_{n_1,\dots,n_k}^{(k+1)}\rangle
    &= \langle n_1,\dots,n_k|\hat{G}^{(k+1)}_{n_1,\dots,n_k}\cdots \hat{G}^{(2)}_{n_1}\hat{G}^{(1)}|\psi\rangle \\
    &\equiv \langle n_1,\dots,n_k|\hat{G}_{n_1,\dots,n_k}|\psi\rangle \\
    &\equiv \langle n_1,\dots,n_k|\psi_{n_1,\dots,n_k}\rangle,
\end{align}
and the probability of observing outcome $(n_1,\dots,n_k)$ is
\begin{align}
    p(n_1,\dots,n_k) \equiv 
    \langle \tilde{\psi}_{n_1,\dots,n_k}^{(k+1)}|
    \tilde{\psi}_{n_1,\dots,n_k}^{(k+1)}\rangle.
\end{align}

After $L$ steps of measurement and adaptive operations, the final state is written as
\begin{align}
    |\tilde{\psi}_{n_1,\dots,n_L}^{(L+1)}\rangle,
\end{align}
and the corresponding probability distribution factorizes as
\begin{align}
    p(n_1,\dots,n_L)
    = p(n_1)p(n_2|n_1)\cdots p(n_L|n_1,\dots,n_{L-1}).
\end{align}
By measuring the final quantum state $|\tilde{\psi}_{n_1,\dots,n_L}^{(L+1)}\rangle$, the computation is completed.
Now, let us consider the sampling problem and the quantum mean-value problem, respectively.

\textit{Sampling.---}
We now consider a Gaussian input state and a Gaussian measurement at the final stage.
Thus, the configuration is as follows:
\begin{itemize}
    \item Input state: Gaussian input state;
    \item Quantum circuit: Gaussian unitary circuit with $L$ adaptive photon-number measurements;
    \item Final measurement: Gaussian measurement.
\end{itemize}
This restriction is natural because if we allow a non-Gaussian measurement, then even without mid-circuit measurements, the classical simulation of sampling is already hard due to the hardness of boson sampling~\cite{aaronson2011computational, hamilton2017gaussian, deshpande2021quantum}.  
Similarly, if we allow a non-Gaussian input with Gaussian measurements, the resulting configuration has also been proven to be classically hard~\cite{chakhmakhchyan2017boson}. 
Therefore, throughout this subsection, we assume Gaussian inputs and Gaussian final measurements.  

Let $L$ denote the number of modes used for adaptive photon-number measurements.  
We prove that when $L=O(\log M)$, the sampling problem can be simulated efficiently.
More specifically, since the intermediate measurements that are not used for Gaussian adaptive feedforward can be deferred to the end of the circuit, we prove that the sampling problem can be efficiently simulated if the output measurement is composed of $M$ Gaussian measurements and $L$ photon-number resolving measurements, where the latter can be used for Gaussian feedforward.

\begin{theorem}[Sampling in Gaussian circuits with photon-number measurement and Gaussian feedforward]\label{thm:sampling_gaussian_photon}
    Consider an $(M+L)$-mode Gaussian input state, an $(M+L)$-mode Gaussian circuit, and a measurement scheme on the output modes that consists of
    (i) photon-number-resolving measurements on at most $L$ modes and
    (ii) Gaussian measurements on the remaining $M$ modes.
    The photon-number measurements may be performed adaptively, in the sense that arbitrary Gaussian operations may depend on the previous outcomes (Gaussian feedforward).
    If $L = O(\log M)$, then the sampling from the resulting output distribution can be efficiently simulated on a classical computer.
    %Consider a Gaussian input state, a Gaussian circuit, and a Gaussian measurement, augmented by photon-number-resolving measurements and Gaussian feedforward.  
    %If $L$ denotes the number of photon-number measurements used for feedforward, then the sampling from the output distribution can be efficiently simulated whenever $L=O(\log M)$.
\end{theorem}

\begin{proof}[Proof]
The output probability distribution of the final Gaussian measurement $\{\hat{\Pi}(\bm{\alpha})\}$ under adaptive operations on $M$ modes can be written as
\begin{align}
    p(\bm{\alpha}) 
    &= \sum_{\bm{n}} p(\bm{n})\, p(\bm{\alpha}|\bm{n}) \\ 
    &= \sum_{\bm{n}} p(\bm{n})
    \frac{\langle \psi_{\bm{n}}|(\hat{\Pi}(\bm{\alpha})\otimes |\bm{n}\rangle\langle \bm{n}|)|\psi_{\bm{n}}\rangle}
         {\langle \psi_{\bm{n}}|(\hat{\mathbb{1}}_M\otimes |\bm{n}\rangle\langle \bm{n}|)|\psi_{\bm{n}}\rangle},
\end{align}
where $\bm{n}\equiv (n_1,\dots,n_L)$, $|\psi_{\bm{n}}\rangle \equiv \hat{G}_{\bm{n}}|\psi\rangle$, and
\begin{align}
    p(\bm{n})
    = \langle \psi_{\bm{n}}|(\hat{\mathbb{1}}_M\otimes |\bm{n}\rangle\langle \bm{n}|)|\psi_{\bm{n}}\rangle.
\end{align}
Here, $\hat{G}_{\bm{n}}$ denotes the feedforward Gaussian circuit conditioned on outcome $\bm{n}$, and $\hat{\Pi}(\bm{\alpha})$ is a Gaussian POVM.  
Hence, $p(\bm{n})$ is normalized to 1 as
\begin{align}
    1 = \sum_{\bm{n}} p(\bm{n})
    = \sum_{\bm{n}} \langle \psi_{\bm{n}}|(\hat{\mathbb{1}}_M\otimes |\bm{n}\rangle\langle \bm{n}|)|\psi_{\bm{n}}\rangle
    = \sum_{\bm{n}} \langle \tilde{\psi}_{\bm{n}}|\tilde{\psi}_{\bm{n}}\rangle,
\end{align}
where $|\tilde{\psi}_{\bm{n}}\rangle \equiv \langle \bm{n}|\psi_{\bm{n}}\rangle$ is the unnormalized post-measurement state obtained when the measurement for feedforward is $\bm{n}$.  

Our classical sampling algorithm proceeds as follows.  
First, sample $\bm{n}$ from $p(\bm{n})$ using a chain-rule-based algorithm:
\begin{align}
    p(\bm{n})
    = p(n_1)p(n_2|n_1)\cdots p(n_L|n_1,\dots,n_{L-1}),
\end{align}
where the computational cost mainly depends on the complexity of computing the marginal probabilities $p(n_1,\dots,n_k)$.  
Next, for the obtained $\bm{n}$, sample $\bm{\alpha}$ from
\begin{align}
    p(\bm{\alpha}|\bm{n})
    = \frac{\langle \psi_{\bm{n}}|(\hat{\Pi}(\bm{\alpha})\otimes |\bm{n}\rangle\langle \bm{n}|)|\psi_{\bm{n}}\rangle}
           {\langle \psi_{\bm{n}}|(\hat{\mathbb{1}}_M\otimes |\bm{n}\rangle\langle \bm{n}|)|\psi_{\bm{n}}\rangle},
\end{align}
which completes the procedure.  

The computational bottleneck is thus (i) computing the marginal probabilities of $p(\bm{n})$ and (ii) evaluating $\langle \psi_{\bm{n}}|(\hat{\Pi}(\bm{\alpha})\otimes |\bm{n}\rangle\langle \bm{n}|)|\psi_{\bm{n}}\rangle$.  
Since $|\psi_{\bm{n}}\rangle$ is Gaussian, these quantities reduce to evaluating loop hafnians~\cite{hamilton2017gaussian, quesada2022quadratic}, whose cost scales exponentially in the number of measured modes $L$~\cite{bulmer2022boundary}.
Because $\hat{\Pi}(\bm{\alpha})$ is Gaussian, the numerator also reduces to a loop hafnian of size $L$.  

Therefore, the overall simulation is efficient whenever $L=O(\log M)$, establishing the theorem.
\end{proof}

Hence, in summary, the algorithm runs as follows:
\begin{enumerate}
    \item Sample $\bm{n}$ from $p(\bm{n})$ by using the chain-rule-based algorithm, where $p(\bm{n})$'s marginal probabilities are given by loop hafnians~\cite{hamilton2017gaussian, quesada2022quadratic};
    \item Sample $\bm{\alpha}$ from $p(\bm{\alpha}|\bm{n})$ for a given $\bm{n}$ by using the chain-rule-based algorithm, which again reduces to computing loop hafnians.
\end{enumerate}

\textit{Quantum mean-value problem.---}
We now turn to the quantum mean-value problem in this circuit configuration.  
Thus, we now consider the quantum mean-value problem in this setup with the following configuration:
\begin{itemize}
    \item Input state: product state;
    \item Quantum circuit: Gaussian unitary circuit with $L$ adaptive photon-number measurements;
    \item Observable: product observable.
\end{itemize}
Here, we assume a general product state as the input, and the measurement is implicit since the goal is to estimate the expectation value of an observable~$\hat{O}$.  
We show that if the number of adaptive measurements $L$ is constant, i.e., $L=O(1)$, the observable can be efficiently estimated on a classical computer, provided the Hilbert-Schmidt norm of the operator $\hat{O}$ is at most polynomially large:

\begin{theorem}[Quantum mean-value problem in Gaussian circuits with photon-number measurement and Gaussian feedforward]\label{thm:gaussian_photon}
    Consider an arbitrary product input state and a Gaussian circuit, augmented by photon-number measurements and Gaussian feedforward, which outputs an $M$-mode quantum state, and assume that the input state has a fixed maximum photon number.
    If the number of photon-number measurements used for feedforward is $L=O(1)$, then the expectation value of an $M$-mode observable $\hat{O}=\hat{O}_A\otimes \hat{\mathbb{1}}_B$ can be approximated within additive error $\epsilon$ and success probability $1-\delta$ in running time
    \begin{align}
        O\left(\frac{\poly(M)\|\hat{O}_A\|_2^2 \mathbb{E}_{\bm{n}}[\Tr(\hat{\rho}_{\bm{n}}^2)] \log(1/\delta)}{\epsilon^2}\right),    
    \end{align}
    where $\hat{\rho}_{\bm{n}}$ is the reduced density matrix on the system $A$ when the photon-number measurement outcome is $\bm{n}$ and $\mathbb{E}_{\bm{n}}[\Tr(\hat{\rho}_{\bm{n}}^2)]$ is the average purity.
\end{theorem}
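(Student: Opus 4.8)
The plan is to run the characteristic-function/median-of-means argument of Theorem~\ref{thm:gaussian} along each branch of the adaptive tree and average over the measurement outcomes, delegating the now-nontrivial marginal computations to the extension of Gurvits's second algorithm developed in Sec.~\ref{sec:marginal}. Write $\bm n=(n_1,\dots,n_L)$ for the photon-number outcomes, $|\psi_{\bm n}\rangle\equiv\hat G_{\bm n}|\psi\rangle$, and $|\tilde\psi_{\bm n}\rangle\equiv\langle\bm n|\psi_{\bm n}\rangle$ for the unnormalized post-measurement state on the computing register, so that $p(\bm n)=\langle\tilde\psi_{\bm n}|\tilde\psi_{\bm n}\rangle$ and the output state is the mixture $\hat\rho_{\text{out}}=\sum_{\bm n}|\tilde\psi_{\bm n}\rangle\langle\tilde\psi_{\bm n}|$. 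Then, with $|\psi_{\bm n}^{\text{out}}\rangle\equiv|\tilde\psi_{\bm n}\rangle/\sqrt{p(\bm n)}$,
\begin{align}
  \langle\hat O\rangle=\Tr[\hat\rho_{\text{out}}\hat O]=\sum_{\bm n}p(\bm n)\,\langle\psi_{\bm n}^{\text{out}}|\hat O_A\otimes\hat{\mathbb 1}_B|\psi_{\bm n}^{\text{out}}\rangle.
\end{align}
For each fixed $\bm n$ I would expand $\hat O_A$ with Eq.~\eqref{eq:expansion} exactly as in Theorem~\ref{thm:gaussian}, writing the inner mean value as $\mathbb{E}_{\bm\alpha_A\sim p(\bm\alpha_A)}[X_{\bm n}(\bm\alpha_A)]$ with the \emph{outcome-independent} distribution $p(\bm\alpha_A)=|\chi_{\hat O_A}(\bm\alpha_A)|^2/(\pi^l\|\hat O_A\|_2^2)$ and $X_{\bm n}(\bm\alpha_A)=\chi^*_{\hat\rho_{\bm n}}(\bm\alpha_A)\,\|\hat O_A\|_2^2/\chi^*_{\hat O_A}(\bm\alpha_A)$, where $\hat\rho_{\bm n}=\Tr_B[|\psi_{\bm n}^{\text{out}}\rangle\langle\psi_{\bm n}^{\text{out}}|]$. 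Hence $\langle\hat O\rangle=\mathbb{E}_{(\bm n,\bm\alpha_A)\sim p(\bm n)\,p(\bm\alpha_A)}[X_{\bm n}(\bm\alpha_A)]$, and the median-of-means estimator~\eqref{eq:mom}, applied to the real and imaginary parts of $X_{\bm n}$, can be run on independent samples $\bm n\sim p(\bm n)$, $\bm\alpha_A\sim p(\bm\alpha_A)$.

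Next I would bound the variance, which is where the average-purity factor enters. Since $p(\bm\alpha_A)$ contributes a $|\chi_{\hat O_A}(\bm\alpha_A)|^2$ that cancels against $|\chi_{\hat O_A}(\bm\alpha_A)|^{-2}$ in $|X_{\bm n}|^2$, a direct computation gives
\begin{align}
  \mathbb{E}\big[|X_{\bm n}(\bm\alpha_A)|^2\big]
  &=\|\hat O_A\|_2^2\sum_{\bm n}p(\bm n)\,\frac{1}{\pi^l}\int d^{2l}\bm\alpha_A\,|\chi_{\hat\rho_{\bm n}}(\bm\alpha_A)|^2\nonumber\\
  &=\|\hat O_A\|_2^2\,\mathbb{E}_{\bm n}\big[\Tr(\hat\rho_{\bm n}^2)\big],
\end{align}
using the Parseval identity $\pi^{-l}\int d^{2l}\bm\alpha\,|\chi_{\hat A}(\bm\alpha)|^2=\Tr[\hat A^2]$ for Hermitian $\hat A$, a consequence of the orthogonality of displacement operators (cf.\ Eqs.~\eqref{eq:expansion}--\eqref{eq:twirl}). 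As the variance is at most this second moment, Eq.~\eqref{eq:mom} yields a sample count $N=O\big(\|\hat O_A\|_2^2\,\mathbb{E}_{\bm n}[\Tr(\hat\rho_{\bm n}^2)]\log(1/\delta)/\epsilon^2\big)$.

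It remains to argue that each sample costs $\poly(M)$ when $L=O(1)$. Drawing $\bm\alpha_A\sim p(\bm\alpha_A)$ factorizes over the $l$ modes of $A$ and is efficient, as in Theorem~\ref{thm:gaussian}. Drawing $\bm n$ uses the chain rule~\eqref{eq:sampling}, which needs the marginals $p(n_1,\dots,n_k)=\langle\tilde\psi_{n_1,\dots,n_k}|\tilde\psi_{n_1,\dots,n_k}\rangle$ for $k\le L$; with photon numbers truncated at $n_{\max}$ and $L=O(1)$ there are only $O(1)$ of them. Evaluating $X_{\bm n}(\bm\alpha_A)$ amounts to computing $\chi^*_{\hat\rho_{\bm n}}(\bm\alpha_A)=p(\bm n)^{-1}\langle\tilde\psi_{\bm n}|\hat D^\dagger(\bm\alpha_A)\otimes\hat{\mathbb 1}_B|\tilde\psi_{\bm n}\rangle$. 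Both of these marginal quantities have the same structure: expanding the Fock projector $|n_1,\dots,n_k\rangle\langle n_1,\dots,n_k|$ (which acts on only $k\le L$ modes) via Eq.~\eqref{eq:expansion} and using Eq.~\eqref{eq:G_dis} to transfer the resulting displacements through $\hat G_{\bm n}$ onto the product input, each reduces to a $k$-dimensional integral of a product of single-mode characteristic functions, $\pi^{-k}\int d^{2k}\bm\beta\,\chi_{|n_1,\dots,n_k\rangle\langle n_1,\dots,n_k|}(\bm\beta)\prod_i\chi^*_{\psi_i}(\gamma_i)$, with $\gamma_i$ an affine function of $(\bm\alpha_A,\bm\beta)$. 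Because of the $n_{\max}$ truncation each factor is a finite sum, i.e. a polynomial times a Gaussian, so the integrand is a (product-form) polynomial against a positive-definite Gaussian in $2k$ real variables and can be evaluated in $\poly(M)$ by a product Gauss--Hermite rule once $k\le L=O(1)$; equivalently, one may run the generalization of Gurvits's second algorithm to product inputs and Gaussian circuits from Sec.~\ref{sec:marginal}, whose cost is controlled by the $O(L)$-sized (low-rank) structure of the Fock projections. Multiplying this $\poly(M)$ per-sample cost by $N$ gives the stated running time.

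The main obstacle is precisely this last step, namely establishing in Sec.~\ref{sec:marginal} that the marginal quantities of a Gaussian-evolved \emph{product} (hence possibly highly non-Gaussian) input are polynomial-time computable. In the purely Gaussian sampling case the analogous marginals are exact loop hafnians of size $L$, which is why $L=O(\log M)$ is affordable there; with a general product input no loop-hafnian shortcut is available, and the $k$-dimensional integral above (or the low-rank Gurvits estimator) is polynomial only for $k\le L=O(1)$, which is the source of the tighter constraint. One must also track the precision to which these marginals and characteristic-function values are computed and how that error propagates through the chain-rule sampling, choosing it polynomially small so that it does not degrade the target accuracy $\epsilon$ or the $1/\epsilon^2$ scaling.
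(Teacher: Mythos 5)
Your estimator and its analysis coincide with the paper's proof in Appendix~\ref{app:gaussian_photon}: you use the same decomposition of $\langle\hat O\rangle$ over outcomes $\bm n$ together with the displacement expansion~\eqref{eq:expansion} of $\hat O_A$, the same outcome-independent distribution $q(\bm\alpha_A)=|\chi_{\hat O_A}(\bm\alpha_A)|^2/(\pi^l\|\hat O_A\|_2^2)$, and the same random variable (your $X_{\bm n}(\bm\alpha_A)$ equals the paper's $X(\bm n,\bm\alpha_A)$, since $\chi^*_{\hat\rho_{\bm n}}(\bm\alpha_A)=p(\bm n)^{-1}\langle\psi_{\bm n}|(\hat D^\dagger(\bm\alpha_A)\otimes\hat{\mathbb 1}_B\otimes|\bm n\rangle\langle\bm n|)|\psi_{\bm n}\rangle$), and you derive the identical second-moment bound $\|\hat O_A\|_2^2\,\mathbb E_{\bm n}[\Tr(\hat\rho_{\bm n}^2)]$ before applying median-of-means. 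The only genuine divergence is in the key subroutine: the paper computes the marginals $p(n_1,\dots,n_k)$ and the displaced overlaps by invoking Lemma~\ref{lemma:main}, proved via the generalized Gurvits's second algorithm (antinormal ordering, low-rank loop hafnians, cost roughly $(3Mn_\text{max}+1)^{16L}$), whereas you sketch a direct route: expand the $k$-mode Fock projector in displacements, push them through $\hat G_{\bm n}$ onto the product input via Eq.~\eqref{eq:G_dis}, and integrate the resulting degree-$O(Mn_\text{max})$ polynomial against a positive-definite Gaussian in $2k$ real variables by product Gauss--Hermite quadrature with $O((Mn_\text{max})^{2k})$ nodes. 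This alternative is sound (positive definiteness follows from invertibility of the symplectic map plus the projector's own Gaussian factor, and quadrature exactness only needs the degree bound), and it is arguably more elementary and practically cheaper than the paper's bound, though it is exact only in a real-arithmetic model, so the precision and error-propagation bookkeeping you flag is genuinely needed (a caveat the paper's ``exact'' lemma shares, since it manipulates Bloch--Messiah data). One small imprecision: for chain-rule sampling of $n_k$ you claim only $O(1)$ marginal evaluations are needed; since squeezing does not conserve photon number, the outcome range of each $n_k$ must also be truncated with a tail bound, giving polynomially many (not $O(1)$) marginal evaluations per step --- a point the paper likewise leaves implicit.
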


Again, we emphasize that because the input can be an arbitrary product state, it may have a large non-Gaussian resource. However, this resource does not change the complexity, unlike sampling; instead, the number of measurements for feedforward drastically changes the complexity.

Finally, we remark that, as in the sampling task, if the input state is Gaussian instead of an arbitrary state, then sampling from $p(\bm{n})$ and computing $X(\bm{n},\bm{\alpha}_A)$ that appear in the proof remain efficient up to $L=O(\log M)$ because computing $p(\bm{n})$ and $X(\bm{n},\bm{\alpha}_A)$ reduces to computing (loop-)hafnians associated to $L$-mode Gaussian states, which can be computed in $O(\exp(L))$~\cite{bulmer2022boundary}.

\begin{proof}[Proof sketch]
Details are provided in Appendix~\ref{app:gaussian_photon}.  
Consider a general product input $|\psi\rangle$.  
The probability of obtaining measurement outcome $\bm{n}$ is then written as
\begin{align}
    p(\bm{n})=\langle \psi_{\bm{n}}|(\hat{\mathbb{1}}\otimes|\bm{n}\rangle \langle \bm{n}|)|\psi_{\bm{n}}\rangle,
\end{align}
where $|\psi_{\bm{n}}\rangle\equiv \hat{G}_{\bm{n}}|\psi\rangle$ is the (unnormalized) post-measurement state for outcome $\bm{n}$.  

For a given measurement outcome $\bm{n}$, the conditional expectation value of $\hat{O}=\hat{O}_A\otimes \hat{\mathbb{1}}_B$ is given by
\begin{align}
    \frac{\langle \psi|\hat{G}_{\bm{n}}^\dagger(\hat{O}_A\otimes \hat{\mathbb{1}}_B\otimes|\bm{n}\rangle \langle \bm{n}|)\hat{G}_{\bm{n}}|\psi\rangle}{\langle \psi|\hat{G}_{\bm{n}}^\dagger(\hat{\mathbb{1}}_A\otimes \hat{\mathbb{1}}_B\otimes|\bm{n}\rangle \langle \bm{n}|)\hat{G}_{\bm{n}}|\psi\rangle},
\end{align}
and averaging over all outcomes gives
\begin{align}
    \langle\hat{O}\rangle=\sum_{\bm{n}}\langle \psi|\hat{G}_{\bm{n}}^\dagger(\hat{O}_A\otimes \hat{\mathbb{1}}_B\otimes|\bm{n}\rangle \langle \bm{n}|)\hat{G}_{\bm{n}}|\psi\rangle,
\end{align}
which is thus the expectation value we want to estimate.
We now rewrite the expectation value of $\hat{O}$ as
\begin{align}
    \langle\hat{O}\rangle 
    = \int d^{2l}\bm{\alpha}_A \sum_{\bm{n}} p(\bm{n})\, q(\bm{\alpha}_A)\, X(\bm{n},\bm{\alpha}_A),
\end{align}
where $l$ is the number of modes in the subsystem $A$, and we defined a probability distribution and a random variable
\begin{align}
    q(\bm{\alpha}_A) &\equiv \frac{\left|\chi_{\hat{O}_A}(\bm{\alpha}_A)\right|^2}{\pi^l\|\hat{O}_A\|_2^2},\\
    X(\bm{n},\bm{\alpha}_A) &\equiv \frac{\|\hat{O}_A\|_2^2\langle \psi_{\bm{n}}|(\hat{D}^\dagger(\bm{\alpha}_A)\otimes\hat{\mathbb{1}}_B\otimes |\bm{n}\rangle\langle \bm{n}|)|\psi_{\bm{n}}\rangle}{p(\bm{n})\chi^*_{\hat{O}_A}(\bm{\alpha}_A)},
\end{align}
respectively.
Here, the normalization of $q(\bm{\alpha}_A)$ follows from the twirling property of displacement operators, Eq.~\eqref{eq:twirl}.  
Due to the above expression of the expectation value, our algorithm estimates $\langle\hat{O}\rangle$ by sampling $\bm{n}\sim p(\bm{n})$ and $\bm{\alpha}_A\sim q(\bm{\alpha}_A)$, and then applying the median-of-means estimator to $X(\bm{n},\bm{\alpha}_A)$.  

By direct calculation, we upper-bound the variance of $X$ by $\|\hat{O}_A\|_2^2\,\mathbb{E}_{\bm{n}}[\Tr(\hat{\rho}_{\bm{n}}^2)]$, where $\hat{\rho}_{\bm{n}}$ is the reduced density matrix on subsystem $A$ for outcome $\bm{n}$.  

Finally, efficient implementation requires analyzing three subroutines:  
(i) sampling from $p(\bm{n})$,  
(ii) sampling from $q(\bm{\alpha}_A)$, and  
(iii) computing $X(\bm{n},\bm{\alpha}_A)$.  

Sampling from $q(\bm{\alpha}_A)$ is straightforward due to its product structure, cf. Eq.~\eqref{eq:prod}.  
Sampling from $p(\bm{n})$ amounts to computing marginal photon-number probabilities of a state obtained by applying a Gaussian circuit to a general product input. This is nontrivial in general. For Fock input states and linear-optical circuits, Gurvits introduced an algorithm (the so-called Gurvits' second algorithm) that is efficient whenever the number of marginal modes is constant~\cite{aaronson2011computational, ivanov2020complexity}. We extend this algorithm to arbitrary product input states and Gaussian circuits (see Sec.~\ref{sec:marginal} and Appendix~\ref{app:gurvits_general} for more details):
\begin{lemma}[Computing photon-number marginal probabilities in Gaussian circuits]\label{lemma:main}
    Consider an $M$-mode bosonic system, product states $|\psi\rangle$, $|\phi\rangle$, a Gaussian unitary $\hat{G}_0$, and an $L$-mode Fock state $|\bm{n}\rangle$.
    Here, we assume that the states $|\phi\rangle$ and $|\psi\rangle$ have a fixed maximum photon number.
    Then
    \begin{align}
        \langle \phi|\hat{G}_0^\dagger(\hat{\mathbb{1}}\otimes |\bm{n}\rangle\langle \bm{n}|)\hat{G}_0|\psi\rangle
    \end{align}
    can be exactly computed in polynomial time in $M$ if $L=O(1)$.
\end{lemma}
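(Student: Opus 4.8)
The plan is to generalize the displacement-operator representation underlying Gurvits's second algorithm so that it applies to an arbitrary Gaussian $\hat G_0$ and to arbitrary (photon-number-truncated) product states. First I would expand the measured-mode Fock projector using Eq.~\eqref{eq:expansion}: its characteristic function is $\langle\bm n|\hat D(\bm\gamma)|\bm n\rangle=\prod_{i=1}^L e^{-|\gamma_i|^2/2}L_{n_i}(|\gamma_i|^2)$, a product of Laguerre polynomials times a Gaussian, while the identity on the unmeasured modes contributes a delta function, so that
\begin{align}
    \hat{\mathbb{1}}\otimes|\bm n\rangle\langle\bm n| = \frac{1}{\pi^L}\int d^{2L}\bm\gamma\;\Big(\prod_{i=1}^L e^{-|\gamma_i|^2/2}L_{n_i}(|\gamma_i|^2)\Big)\hat D^\dagger(\tilde{\bm\gamma}),
\end{align}
with $\tilde{\bm\gamma}\in\mathbb{C}^M$ equal to $\bm\gamma$ padded by zeros on the unmeasured modes. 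Inserting this and using the Gaussian-circuit covariance~\eqref{eq:G_dis}, $\hat G_0^\dagger\hat D^\dagger(\tilde{\bm\gamma})\hat G_0=e^{i\theta(\bm\gamma)}\hat D^\dagger(\bm\mu(\bm\gamma))$ with $\theta,\bm\mu$ real-linear in $\bm\gamma$ (read off from the symplectic matrix and displacement of $\hat G_0$ in $\poly(M)$ time), together with the product structure of $|\psi\rangle,|\phi\rangle$, reduces the target quantity to a single $2L$-dimensional integral whose integrand is $\big(\prod_i e^{-|\gamma_i|^2/2}L_{n_i}(|\gamma_i|^2)\big)e^{i\theta(\bm\gamma)}\prod_{j=1}^M\langle\phi_j|\hat D^\dagger(\mu_j(\bm\gamma))|\psi_j\rangle$.

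Next I would show this integrand is exactly a polynomial times a Gaussian. Using the normal-ordered form $\hat D(\mu)=e^{-|\mu|^2/2}e^{\mu\hat{a}^\dagger}e^{-\bar\mu\hat{a}}$ and the finite photon-number truncation $n_{\max}$ of each single-mode component of $|\psi\rangle,|\phi\rangle$ (cf.~Eq.~\eqref{eq:psi}), each overlap becomes $\langle\phi_j|\hat D^\dagger(\mu)|\psi_j\rangle=e^{-|\mu|^2/2}p_j(\mu,\bar\mu)$ with $p_j$ a polynomial of degree $\le 2n_{\max}$ whose coefficients are computable in $\poly(n_{\max})$ time. Hence the integrand equals $P(\bm\gamma,\bm\gamma^*)e^{\mathcal{Q}(\bm\gamma,\bm\gamma^*)}$, where $\mathcal{Q}=i\theta(\bm\gamma)-\tfrac12\sum_i|\gamma_i|^2-\tfrac12\sum_j|\mu_j(\bm\gamma)|^2$ is a quadratic form with strictly negative-definite real part (so the integral converges absolutely and Fubini applies), and $P=\prod_i L_{n_i}(|\gamma_i|^2)\cdot\prod_j p_j(\mu_j(\bm\gamma),\overline{\mu_j(\bm\gamma)})$ is a polynomial in the $2L$ real integration variables of degree $D=O(Mn_{\max}+\sum_i n_i)$. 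The crucial observation is that when $L=O(1)$, $P$ has at most $\binom{2L+D}{2L}=\poly(M)$ monomials, so it can be assembled explicitly by multiplying the $M+L$ local factors one at a time, each step costing $\poly(M)$.

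Finally I would evaluate $\int P\,e^{\mathcal{Q}}$ in closed form: diagonalize the Hessian of $\mathcal{Q}$ by an orthogonal change of variables, complete the square to absorb the linear phase term into an overall prefactor, re-express $P$ in the new coordinates (still $\poly(M)$ monomials for constant $L$), and evaluate the integral as a sum over monomials of products of one-dimensional Gaussian moments $\int_{\mathbb{R}}x^k e^{-\lambda x^2}\,dx$, each known in closed form. Summing these $\poly(M)$ contributions yields the exact value in $\poly(M)$ time whenever $L=O(1)$ (and $n_{\max},\max_i n_i$ are polynomially bounded), which proves the lemma. I expect the main obstacle to be precisely the monomial bookkeeping for $P$: its count $\binom{2L+D}{2L}$ stays polynomial only while $L$ is constant and blows up exponentially once $L$ scales with $M$, which is exactly why Gurvits's second algorithm is efficient only for a constant number of marginal modes. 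A secondary point needing care is extracting the explicit affine map $\bm\gamma\mapsto(\theta(\bm\gamma),\bm\mu(\bm\gamma))$ from the Bloch-Messiah/symplectic description of $\hat G_0$ (including its displacement part) and verifying the negative-definiteness used for convergence; both are routine but must be carried out to make the "exact polynomial time" claim rigorous.
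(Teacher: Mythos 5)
Your proposal is correct in substance, but it takes a genuinely different route from the paper's. The paper never expands the Fock projector in phase space: it first converts the marginal into a discrete Fourier sum over generating functions $\tilde G(k)=\langle \phi|\hat{G}_0^\dagger \hat{P}(\bm{\varphi})\hat{G}_0|\psi\rangle$ with $L$ nonzero phases, then shows via Bloch--Messiah that $\hat{G}_0^\dagger\hat{P}(\bm{\varphi})\hat{G}_0$ has only $2L$ nonzero squeezers and rank-$2L$ matrices $W,Z,C$, and finally evaluates the generating function by antinormal ordering and a low-rank loop-hafnian/polynomial method with $16L$ indeterminates, at cost of order $(3Mn_\text{max}+1)^{16L}$ per $k$, times the additional $(n_\text{max}+1)^L$ from the Fourier inversion. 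You instead expand only the $L$-mode projector in the displacement basis (its Laguerre--Gaussian characteristic function), push the displacement through $\hat{G}_0$ by covariance, and use the photon-number truncation to turn each single-mode overlap into a polynomial times a Gaussian, so the whole quantity becomes a $2L$-dimensional Gaussian integral of a polynomial with $\binom{2L+D}{2L}=\poly(M)$ monomials; this is more elementary and direct (no generating-function detour, no antinormal ordering, no loop hafnians), computes the specific marginal without summing over all $(n_\text{max}+1)^L$ Fourier components, and plausibly yields a much smaller exponent in $L$, while the paper's route additionally delivers the generating-function lemma itself (the actual generalization of Gurvits's second algorithm), which the authors want for independent reasons. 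Two points need tightening to make the "exact polynomial time" claim rigorous: justify interchanging the $\bm{\gamma}$-integral with the inner product (immediate from $|\langle\phi|\hat{G}_0^\dagger\hat{D}^\dagger(\tilde{\bm{\gamma}})\hat{G}_0|\psi\rangle|\le 1$ and the integrable Gaussian envelope), and replace the orthogonal diagonalization of the $2L\times 2L$ quadratic form---eigendecomposition is not an exact finite-arithmetic primitive---by the standard closed-form evaluation: complete the square and compute the at most $(D+1)^{2L}=\poly(M)$ required centered Gaussian moments by the Wick/Isserlis recursion in terms of the inverse quadratic-form matrix (rational operations only), noting that naively summing the $(d-1)!!$ pairings of a degree-$D$ monomial would be superpolynomial, so the recursion is needed. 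With those adjustments your argument proves the lemma with the same $\poly(M,n_\text{max})$ scaling for $L=O(1)$ as the paper.
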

Note that $|\psi\rangle$ and $|\phi\rangle$ may differ, which is an additional generalization beyond Gurvits' original setting.  
Using this lemma, we can efficiently sample from $p(\bm{n})$ whenever $L=O(1)$.  

Finally, evaluating $X(\bm{n},\bm{\alpha}_A)$ requires computing a similar overlap term. Since the lemma also applies when $|\psi\rangle$ and $|\phi\rangle$ differ, we can absorb $\hat{D}^\dagger(\bm{\alpha}_A)$ into the input and compute the random variable $X(\bm{n},\bm{\alpha}_A)$ efficiently when $L=O(1)$.  

This completes the proof of the theorem. More detailed analysis can be found in Sec.~\ref{sec:marginal} and Appendix~\ref{app:gurvits_general}.
\end{proof}

Thus, in summary, the algorithm runs as follows:
\begin{enumerate}
    \item Sample $\bm{\alpha}_A\in \mathbb{C}^l$ from $|\chi_{\hat{O}_A}(\bm{\alpha}_A)|^2$ by sampling $\alpha_i$ independently from $|\chi_{\hat{O}_i}(\alpha_i)|^2$;
    \item Given $\bm{\alpha}_A$, we compute $X(\bm{n},\bm{\alpha}_A)$ by invoking Lemma~\ref{lemma:main};
    \item Apply the median-of-means estimator for the collected samples.
\end{enumerate}

As in Theorem~\ref{thm:gaussian}, the dependence on the system size $A$, $l$, is implicitly determined by the Hilbert-Schmidt norm of the operator $\hat{O}_A$ and the average purity.
For example, the $l=O(1)$ case can be efficiently solved using our algorithm if $L=O(1)$, whereas $l=\omega(1)$ can be hard for our algorithm.
We remark that this clearly suggests a transition between a setup with limited adaptivity and the universal quantum computer through increasing the adaptive steps because while the universal quantum computer, i.e., $L=\poly(M)$ with $l=O(1)$ is sufficient to solve BQP-complete problems, our theorem shows that the $l=O(1)$ case is easy for classical computer.

\subsection{Gaussian-measurement-adaptive Gaussian circuits}\label{sec:adaptive_gaussian}

In this section, instead of photon-number-measurement-based feedforward, we consider feedforward based on Gaussian measurements, which is more experimentally accessible.  
Although the operations remain Gaussian, it is known that if non-Gaussian input states are provided, Gaussian circuits with Gaussian-measurement-based feedforward and Gaussian final measurements are sufficient for universal quantum computation~\cite{baragiola2019all}.

\textit{Sampling.---}
Since this family includes, as a subset, the Gaussian circuits considered in the previous section, simulating the sampling problem with either general non-Gaussian input states or general non-Gaussian measurements is classically hard.  
On the other hand, if both the input states and final measurements are Gaussian, then the entire process remains Gaussian, and simulating such circuits is efficiently possible on a classical computer.

\textit{Quantum mean-value problem.---}
Let us consider the quantum mean-value problem in this circuit configuration with a general product input $|\psi\rangle$.
Thus, we now consider the quantum mean-value problem in this setup with the following configuration:
\begin{itemize}
    \item Input state: product state;
    \item Quantum circuit: Gaussian unitary circuit with $L$ adaptive Gaussian measurements;
    \item Observable: product observable.
\end{itemize}
We now show that the computational complexity for solving this problem is at most polynomial in the system size if the number of measurements for feedforward operations is $L=O(1)$:
\begin{theorem}[Quantum mean-value problem in Gaussian circuit augmented by Gaussian measurement and Gaussian feedforward]\label{thm:gaussian_gaussian}
    Consider an arbitrary product input and a Gaussian circuit, augmented by Gaussian measurements and Gaussian feedforward, which outputs an $M$-mode quantum state, and assume that the input state has a fixed maximum photon number.
    If the number of Gaussian measurements for feedforward is at most constant, i.e., $L=O(1)$, the expectation value of an $M$-mode product observable $\hat{O}\equiv \hat{O}_A\otimes \hat{\mathbb{1}}_B$ can be approximated within additive error $\epsilon$ with probability $1-\delta$ in running time,
    \begin{align}
        O\left(\frac{\poly(M)\|\hat{O}_A\|_2^2 \mathbb{E}_{\bm{\beta}}[\Tr(\hat{\rho}_{\bm{\beta}}^2)] \log(1/\delta)}{\epsilon^2}\right),    
    \end{align}
    where $\hat{\rho}_{\bm{\beta}}$ is the reduced density matrix on the system $A$ when the Gaussian measurement outcome is $\bm{\beta}$ and $\mathbb{E}_{\bm{\beta}}[\Tr(\hat{\rho}_{\bm{\beta}}^2)]$ is the average purity.
\end{theorem}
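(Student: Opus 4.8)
The plan is to mirror the structure of the proof of Theorem~\ref{thm:gaussian_photon}, replacing the discrete photon-number sum over $\bm{n}$ with a continuous integral over the Gaussian measurement outcomes $\bm{\beta}$. Concretely, let $\hat\Pi_B(\bm{\beta})=\pi^{-L}\hat D_B(\bm{\beta})|\psi_G\rangle\langle\psi_G|\hat D_B^\dagger(\bm{\beta})$ denote the POVM element on the $L$-mode measurement register, and let $\hat G_{\bm\beta}$ be the Gaussian feedforward circuit conditioned on the outcome $\bm\beta$ (built up step by step as in Sec.~\ref{sec:adaptive_photon}, except that each step's outcome is a continuous $\beta_k$). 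Writing $|\psi_{\bm\beta}\rangle\equiv\hat G_{\bm\beta}|\psi\rangle$, the unnormalized post-measurement state is $|\tilde\psi_{\bm\beta}\rangle\equiv(\hat{\mathbb 1}\otimes\langle\psi_G|\hat D_B^\dagger(\bm\beta))|\psi_{\bm\beta}\rangle$, the outcome density is $p(\bm\beta)=\langle\tilde\psi_{\bm\beta}|\tilde\psi_{\bm\beta}\rangle$, and completeness of the Gaussian POVM (via the twirling identity Eq.~\eqref{eq:twirl}) guarantees $\int d^{2L}\bm\beta\, p(\bm\beta)=1$. Then $\langle\hat O\rangle=\int d^{2L}\bm\beta\, \langle\psi|\hat G_{\bm\beta}^\dagger(\hat O_A\otimes\hat{\mathbb 1}_B\otimes\hat\Pi_B(\bm\beta))\hat G_{\bm\beta}|\psi\rangle$, and, exactly as before, I expand $\hat O_A$ in displacement operators via Eq.~\eqref{eq:expansion} to obtain
\begin{align}
    \langle\hat O\rangle=\int d^{2L}\bm\beta\int d^{2l}\bm\alpha_A\, p(\bm\beta)\, q(\bm\alpha_A)\, X(\bm\beta,\bm\alpha_A),
\end{align}
with $q(\bm\alpha_A)=|\chi_{\hat O_A}(\bm\alpha_A)|^2/(\pi^l\|\hat O_A\|_2^2)$ a normalized product distribution and
\begin{align}
    X(\bm\beta,\bm\alpha_A)=\frac{\|\hat O_A\|_2^2\,\langle\psi_{\bm\beta}|(\hat D^\dagger(\bm\alpha_A)\otimes\hat{\mathbb 1}_B\otimes\hat\Pi_B(\bm\beta))|\psi_{\bm\beta}\rangle}{p(\bm\beta)\,\chi^*_{\hat O_A}(\bm\alpha_A)}.
\end{align}

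Next I bound the variance. The argument is identical to that for Theorem~\ref{thm:gaussian_photon}: using the twirling identity to perform the $\bm\alpha_A$ integral of $|X|^2$ produces $\|\hat O_A\|_2^2\,\Tr[\hat\rho_{\bm\beta}^2]$ up to the factor $p(\bm\beta)$ in the denominator, and integrating over $\bm\beta$ against $p(\bm\beta)$ gives $\mathbb E_{\bm\beta}[X^2]\le\|\hat O_A\|_2^2\,\mathbb E_{\bm\beta}[\Tr(\hat\rho_{\bm\beta}^2)]$, where $\hat\rho_{\bm\beta}\equiv\Tr_B[\,|\tilde\psi_{\bm\beta}\rangle\langle\tilde\psi_{\bm\beta}|\,]/p(\bm\beta)$ is the normalized reduced state on $A$. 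Feeding this into the median-of-means bound Eq.~\eqref{eq:mom} yields the stated sample count $O(\|\hat O_A\|_2^2\,\mathbb E_{\bm\beta}[\Tr(\hat\rho_{\bm\beta}^2)]\log(1/\delta)/\epsilon^2)$, and each sample costs $\poly(M)$ provided the three subroutines below are efficient.

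The efficiency of the subroutines is the crux, and this is where the Gaussian-measurement case differs technically from the photon-number case. Sampling $\bm\alpha_A\sim q(\bm\alpha_A)$ is immediate from its single-mode product form, cf.\ Eq.~\eqref{eq:prod}. For sampling $\bm\beta\sim p(\bm\beta)$ I use the chain rule $p(\bm\beta)=p(\beta_1)p(\beta_2|\beta_1)\cdots p(\beta_L|\beta_1,\dots,\beta_{L-1})$, so it suffices to evaluate the marginal densities $p(\beta_1,\dots,\beta_k)$ — and, since each $\beta_j$ is continuous, to sample from the resulting single-outcome conditional densities, which after conditioning are (up to the already-computable normalization) of the form $\langle\phi|\hat G_0^\dagger(\hat{\mathbb 1}\otimes\hat\Pi(\beta_k))\hat G_0|\psi\rangle$ for a product state $|\psi\rangle$, a product state $|\phi\rangle$, and a Gaussian circuit $\hat G_0$. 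The key observation is that a single-mode Gaussian POVM element $\hat\Pi(\beta_k)=\pi^{-1}\hat D(\beta_k)|\psi_G\rangle\langle\psi_G|\hat D^\dagger(\beta_k)$ can be absorbed: $|\psi_G\rangle$ is a one-mode Gaussian (hence a finite-energy state well-approximated by a truncated product-like state, or treated exactly by appending one ancilla mode prepared in $|\psi_G\rangle$), so the required marginal reduces to a quantity of exactly the type handled by Lemma~\ref{lemma:main} — overlaps of product states through a Gaussian circuit with a low-mode projector — with the number of ``special'' modes still $O(1)$ when $L=O(1)$. For the continuous outcome one additionally needs to sample from the resulting conditional density in $\beta_k\in\mathbb C$; this density is (an overlap-ratio that is) a manifestly nonnegative, smoothly varying function of $\beta_k$ with Gaussian-type tails, so inverse-transform or rejection sampling from it is efficient to the required precision. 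The third subroutine, evaluating $X(\bm\beta,\bm\alpha_A)$, again reduces — via the same absorption of $\hat D^\dagger(\bm\alpha_A)$ into the input state and of $\hat\Pi_B(\bm\beta)$ into a few ancilla modes — to an instance of Lemma~\ref{lemma:main}, computable in $\poly(M)$ when $L=O(1)$.

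The main obstacle, and the part that needs genuine care rather than a copy of the photon-number argument, is the continuous-outcome sampling: unlike the discrete case where one simply enumerates $n_k=0,1,\dots$ up to a truncation, here each conditional step requires drawing $\beta_k$ from a continuous density that is itself only accessible through the Lemma~\ref{lemma:main} oracle. I would handle this by (i) showing the conditional density is efficiently pointwise-computable, nonnegative, and has controllable tails and Lipschitz constant (all of which follow from the Gaussian structure of $\hat\Pi$ and the finite-photon-number assumption on the input, Eq.~\eqref{eq:psi}), and then (ii) invoking a standard grid-discretization/rejection scheme whose precision loss propagates to only an $O(\mathrm{poly})$-controlled additive error in the final estimate — absorbed into the $\epsilon$ budget. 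A secondary, more bookkeeping-level point is verifying that absorbing the one-mode Gaussian POVM states into ancillae keeps the total number of ``non-generic'' modes at $O(L)=O(1)$ so that Lemma~\ref{lemma:main} genuinely applies with $\poly(M)$ cost; this is routine given the Bloch–Messiah structure of Gaussian circuits recalled in Sec.~\ref{sec:pre}. With these two points established, the theorem follows by assembling the variance bound with Eq.~\eqref{eq:mom} exactly as in the proof of Theorem~\ref{thm:gaussian_photon}; full details would go in an appendix paralleling Appendix~\ref{app:gaussian_photon}.
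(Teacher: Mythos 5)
Your proposal follows essentially the same route as the paper's proof in Appendix~\ref{app:gaussian_gaussian}: the same decomposition of $\langle\hat O\rangle$ into $p(\bm\beta)$, $q(\bm\alpha_A)$ and the random variable $X(\bm\beta,\bm\alpha_A)$, the same second-moment bound $\|\hat O_A\|_2^2\,\mathbb{E}_{\bm\beta}[\Tr(\hat\rho_{\bm\beta}^2)]$ fed into the median-of-means estimator, and the same reduction of the two nontrivial subroutines (sampling $\bm\beta$ by the chain rule and evaluating $X$) to Lemma~\ref{lemma:main} when $L=O(1)$. The only substantive difference is cosmetic: instead of ancillae or truncating $|\psi_G\rangle$, the paper absorbs the Gaussian POVM exactly by writing $|\psi_G\rangle\langle\psi_G|$ as a Gaussian-unitary conjugation of the vacuum projector and pushing the displacements onto the (still product) input state, so Lemma~\ref{lemma:main} applies with $\bm n=\bm 0$; your added care about drawing the continuous outcomes $\beta_k$ from pointwise-computable conditional densities addresses a step the paper treats only implicitly.
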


Interestingly, even if the measurements for feedforward are Gaussian, the complexity increases as in the photon-number measurement case.
While surprising, on a high level, this is consistent with a known result in Ref.~\cite{baragiola2019all}, which demonstrates that Gaussian operations, including measurements and feedforward, are sufficient when GKP states are provided as input.

\begin{proof}[Proof Sketch]
The details of the proof are provided in Appendix~\ref{app:gaussian_gaussian}.
First of all, the probability of obtaining $\bm{\beta}$ for feedforward is 
\begin{align}
    p(\bm{\beta})=\langle \psi_{\bm{\beta}}|(\hat{\mathbb{1}}_A\otimes \hat{\mathbb{1}}_B\otimes\hat{\Pi}(\bm{\beta}))|\psi_{\bm{\beta}}\rangle,
\end{align}
where $|\psi_{\bm{\beta}}\rangle\equiv \hat{G}_{\bm{\beta}}|\psi\rangle$ is the (unnormalized) post-measurement state upon measurement outcome $\bm{\beta}$ and $\{\hat{\Pi}_{\bm{\beta}}=\hat{D}(\bm{\beta})|\psi_G\rangle\langle\psi_G|\hat{D}^\dagger(\bm{\beta})/\pi^L\}$ is Gaussian POVMs.
Thus, the expectation value when the outcome is $\bm{\beta}$ is
\begin{align}
    \frac{\langle \psi|\hat{G}_{\bm{\beta}}^\dagger(\hat{O}_A\otimes\hat{\mathbb{1}}_B\otimes\hat{\Pi}(\bm{\beta}))\hat{G}_{\bm{\beta}}|\psi\rangle}{\langle \psi|\hat{G}_{\bm{\beta}}^\dagger(\hat{\mathbb{1}}_A\otimes \hat{\mathbb{1}}_B\otimes\hat{\Pi}(\bm{\beta}))\hat{G}_{\bm{\beta}}|\psi\rangle}.
\end{align}
If we take average over $\bm{\beta}$, we obtain the expectation value, which is written as
\begin{align}
    \langle\hat{O}\rangle 
    &=\int d^{2L}\bm{\beta}\langle \psi_{\bm{\beta}}|(\hat{O}_A\otimes\hat{\mathbb{1}}_B\otimes \hat{\Pi}(\bm{\beta}))|\psi_{\bm{\beta}}\rangle \\
    &=\int d^{2l}\bm{\alpha}_Ad^{2L}\bm{\beta}p(\bm{\beta})q(\bm{\alpha}_A)X(\bm{\beta},\bm{\alpha}_A),
\end{align}
where $l$ is the number of modes in the subsystem $A$, and we defined a probability distribution
\begin{align}
    q(\bm{\alpha}_A)\equiv \frac{\left|\chi_{\hat{O}_A}(\bm{\alpha}_A)\right|^2}{\pi^l\|\hat{O}_A\|_2^2},
\end{align}
and the random variable
\begin{align}
    X(\bm{\beta},\bm{\alpha}_A)
    =\frac{\|\hat{O}_A\|_2^2\langle \psi_{\bm{\beta}}|(\hat{D}^\dagger(\bm{\alpha}_A)\otimes \hat{\mathbb{1}}_B\otimes \hat{\Pi}
    (\bm{\beta}))|\psi_{\bm{\beta}}\rangle}{p(\bm{\beta})\chi_A^*(\bm{\alpha}_A)}.
\end{align}

By direct calculation, we prove that the variance of the random variable is upper-bounded by $O(\|\hat{O}_A\|_2^2\mathbb{E}_{\bm{\beta}}[\text{Tr}(\hat{\rho}^2_{\bm{\beta}})])$, where $\hat{\rho}_{\bm{\beta}}$ is the reduced density matrix on the Hilbert space $A$ when the measurement outcomes are $\bm{\beta}$ and the average is taken over the probability distribution $p(\bm{\beta})$.

For the algorithm, we need to be able to (i) sample from $q(\bm{\alpha}_A)$, (ii) sample from $p(\bm{\beta})$, and (iii) compute $X(\bm{\beta},\bm{\alpha}_A)$.
As before, (i) is easy because of the product structure of $\hat{O}_A$.
Computing the marginal probability of $p(\bm{\beta})$ can be done efficiently when the number of modes involved in $\hat{\Pi}(\bm{\beta})$ is at most constant, i.e., $L=O(1)$ due to Lemma~\ref{lemma:main}.
Similarly, using Lemma~\ref{lemma:main}, $X(\bm{\beta},\bm{\alpha}_A)$ can be efficiently computed when $L=O(1)$.
Hence, the theorem.

\end{proof}

Thus, in summary, the algorithm runs as follows:
\begin{enumerate}
    \item Sample $\bm{\alpha}_A\in \mathbb{C}^l$ from $|\chi_{\hat{O}_A}(\bm{\alpha}_A)|^2$ by sampling $\alpha_i$ independently from $|\chi_{\hat{O}_i}(\alpha_i)|^2$;
    \item Given $\bm{\alpha}_A$, we compute $X(\bm{\beta},\bm{\alpha}_A)$ by invoking Lemma~\ref{lemma:main};
    \item Apply the median-of-means estimator for the collected samples.
\end{enumerate}

As we have seen from the two previous theorems, Lemma~\ref{lemma:main} plays a crucial role in our classical algorithms.
For the rest of this paper, we introduce the classical algorithm for the lemma.

\section{Classical algorithms using low-rank structures}\label{sec:marginal}
In this section, we present the details of Lemma~\ref{lemma:main}, which plays a crucial role in our classical algorithms.
First, we reduce the task of computing marginal probabilities in the photon-number basis to evaluating an associated generating function (see below for the precise definition).
We then compute this generating function.
As a first step, we show that the amplitudes of few-mode quantum states can be computed efficiently.
Building on this, we demonstrate that the generating function corresponding to photon-number marginals over a small number of modes can likewise be computed efficiently on a classical computer.
The latter can be viewed as an extension of Gurvits' second algorithm~\cite{aaronson2011computational, ivanov2020complexity}.
Conceptually, these results leverage the correspondence between few-mode structure and low-rank matrices, making the relevant quantities amenable to efficient classical computation or simulation.
We expect that the results---and their proofs---are of independent interest in the study of computational complexity for quantum optical systems.

\subsection{Computing the marginal probabilities of Gaussian circuits on photon-number basis}
In this section, we show a way to compute the marginal probability in the photon-number basis for the output state generated by a Gaussian circuit $\hat{G}_0$ applied to a product state $|\psi\rangle$.
Let us consider a general expression:
\begin{align}
    q(\bm{n})
    =\langle \phi|\hat{G}_0^\dagger |\bm{n}\rangle\langle \bm{n}|\hat{G}_0|\psi\rangle.
\end{align}
and its marginal
\begin{align}
    &q(n_1,\dots,n_L) 
    =\sum_{n_{L+1},\dots,n_M=0}^\infty q(\bm{n}) \\
    &=\langle \phi|\hat{G}_0^\dagger (|n_1,\dots,n_L\rangle\langle n_1,\dots,n_L|\otimes \hat{\mathbb{1}})\hat{G}_0|\psi\rangle.
\end{align}
One may see that when $|\phi\rangle=|\psi\rangle$, $q(\bm{n})$ represents the (marginal) probability in the photon number basis.
Thus, we call this function a marginal probability in a broader sense even when $|\psi\rangle\neq |\phi\rangle$.
We have seen that this function plays a central role in the classical algorithms introduced in Sec.~\ref{sec:algo}.
We present a general recipe to compute this function below.
Note that we now assume that $|\psi\rangle$ and $|\phi\rangle$ have a maximum photon number cutoff $n_\text{max}$ for simplicity, as discussed in Sec.~\ref{sec:setup}.

Before considering general cases, we first note that in the case of a Gaussian state $|\psi\rangle$, this is expressed simply as a loop hafnian with matrix size given by~$L$~\cite{bulmer2022boundary, quesada2019franck, bjorklund2019faster} and thus the complexity is exponential in the number of modes.
Thus, when $L$ is at most logarithmic, we can compute this quantity in polynomial time.
However, in more general cases, i.e., for non-Gaussian states, it may not be written as a simple loop hafnian.
Remarkably, for Fock-state input $|\psi\rangle=|\bm{m}\rangle$ and linear-optical circuit, there is a well-known algorithm, so-called Gurvits' second algorithm, which enables us to compute $q(\bm{n})$ with complexity $O(N^{2L+1})$, where $N$ is the total photon number of $|\bm{m}\rangle$~(see Sec.~\ref{sec:poly}) ~\cite{aaronson2011computational, ivanov2020complexity}.
Hence, when $L=O(1)$, i.e., the number of marginal modes is at most constant, the corresponding quantity can be efficiently computed using classical computers.
Our main contribution is to generalize this result for Fock states and linear-optical circuits to general product input states $|\phi\rangle$ and $|\psi\rangle$ and Gaussian circuits.

The main idea is to exploit the following Fourier relation~\cite{aaronson2011computational, oh2024quantum} (see Appendix~\ref{app:MVS} for the details):
\begin{align}\label{eq:g_k}
    \tilde{G}(k)
    &\equiv \langle \phi|\hat{G}_0^\dagger e^{-ik\theta \hat{\bm{n}}\cdot \bm{\omega}}\hat{G}_0|\psi\rangle, \\ 
    G(\Omega)
    &\equiv \sum_{\bm{n}=\bm{0}}^\infty q(\bm{n})\delta(\Omega-\bm{\omega}\cdot \bm{n})
    =\frac{1}{\Omega_\text{max}+1}\sum_{k=0}^{\Omega_\text{max}}\tilde{G}(k)e^{ik\Omega},\label{eq:g_omega}
\end{align}
where $\hat{\bm{n}}\equiv (\hat{n}_1,\dots,\hat{n}_M)$ is the photon number operator vector for $M$ modes, where $\hat{n}_i\equiv\hat{a}_i^\dagger\hat{a}_i$, $k,\Omega\in \{0,\dots,\Omega_\text{max}\}$, $\theta\equiv 2\pi/(\Omega_\text{max}+1)$, and $\bm{\omega}\in \mathbb{R}^M$.
Thus, if we want to compute the marginal probability of $q(\bm{n})$ on the Fock basis up to the $L$th mode, we set 
\begin{align}
    \bm{\omega}=(\omega_1,\dots,\omega_L,0,\dots,0),
\end{align}
where $\omega_i=(n_\text{max}+1)^{i-1}$.
Then, each $\Omega \in \{0,\dots,(n_\text{max}+1)^L-1\}$ corresponds to distinct marginal probabilities of $q(\bm{n})$ where $n_i\leq n_\text{max}$.
More explicitly, if we want to compute $q(n_1,\dots,n_L)$, we set $\Omega=\sum_{i=1}^L \omega_i n_i$ with $\bm{\omega}$ defined above and compute the corresponding $G(\Omega)$, which is equivalent to $q(n_1,\dots,n_L)$.
Therefore, if we can compute $\tilde{G}(k)$, in principle, we can compute $G(\Omega)$ and thus its marginal probabilities.
Here, because we need to sum $k$ from $0$ to $\Omega_\text{max}$, the complexity increases as $O((n_\text{max}+1)^L)$.
With this in mind, computing $q(\bm{n})$ or its marginal probabilities boils down to exactly computing
\begin{align}
    \langle \phi|\hat{G}_0^\dagger \hat{P}(\bm{\varphi})\hat{G}_0|\psi\rangle
    \equiv \langle \phi|\hat{G}|\psi\rangle,
\end{align}
where $\hat{P}(\bm{\varphi})\equiv  e^{i\bm{\varphi}\cdot\hat{\bm{n}}}$ is a phase shifter with $(\varphi_1,\dots,\varphi_L,0,\dots,0)$, which has exactly the same form of $\tilde{G}(k)$.
We refer to this function as a generating function.
Hence, the main complexity of this algorithm is attributed to computing $\tilde{G}(k)$.

\subsection{Generalized Gurvits' second algorithm}
In this section, we prove that 
\begin{align}
    \tilde{G}(k)=\langle \phi|\hat{G}_0^\dagger \hat{P}(\bm{\varphi})\hat{G}_0|\psi\rangle
\end{align}
is easy to compute when $|\psi\rangle$ and $|\phi\rangle$ are product states and $\bm{\varphi}$ has at most constant non-zero phases.
Without loss of generality, let $\bm{\varphi}=(\varphi_1,\dots,\varphi_L,0,\dots,0)$ throughout this section.

As mentioned above, it has been known for Fock states and linear-optical circuits, $|\psi\rangle=|\phi\rangle=|\bm{n}\rangle$ and $\hat{G}_0=\hat{U}$ is a linear-optical circuit, that $\tilde{G}(k)$ can be computed in a polynomial time when $L$ is at most a constant:
\begin{lemma}
    [Gurvits' second algorithm~\cite{aaronson2011computational, ivanov2020complexity}]
    Consider an $M$-mode linear-optical circuit $\hat{U}$, a phase shifter operator $\hat{P}(\bm{\varphi})$ with $\bm{\varphi}\equiv (\varphi_1,\dots,\varphi_L,0,\dots,0)$, and a Fock state $|\bm{n}\rangle$.
    If $L=O(1)$,
    \begin{align}
        \langle \bm{n}|\hat{U}^\dagger \hat{P}(\bm{\varphi})\hat{U}|\bm{n}\rangle
    \end{align}
    can be exactly computed in polynomial time in the total photon number in $|\bm{n}\rangle$ and the number of modes $M$.
\end{lemma}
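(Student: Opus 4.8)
The plan is to express the amplitude as a coefficient of a Gaussian generating function, expose the rank-$L$ structure hidden in the phase shifter, and then evaluate the coefficient mode by mode after introducing only $L$ auxiliary integration variables. Concretely, $\hat{W}\equiv\hat{U}^\dagger\hat{P}(\bm{\varphi})\hat{U}$ is itself a linear-optical circuit, with unitary matrix $W=U^\dagger D_{\bm{\varphi}}U$, where $D_{\bm{\varphi}}=\mathrm{diag}(e^{i\varphi_1},\dots,e^{i\varphi_L},1,\dots,1)$. By the Bargmann (coherent-state) representation of passive transformations,
\begin{align}
    \langle\bm{n}|\hat{W}|\bm{n}\rangle=\Big(\prod_{i=1}^M n_i!\Big)\,[\bm{a}^{\bm{n}}\bm{b}^{\bm{n}}]\,e^{\bm{a}^\top W\bm{b}},
\end{align}
where $[\bm{a}^{\bm{n}}\bm{b}^{\bm{n}}]$ denotes the coefficient of $\prod_i a_i^{n_i}b_i^{n_i}$ in the formal power series in the $2M$ variables $\bm{a},\bm{b}$; equivalently this equals $\per(W_{\bm{n},\bm{n}})/\prod_i n_i!$, where $W_{\bm{n},\bm{n}}$ is the $N\times N$ matrix ($N=\sum_i n_i$) obtained by repeating the $i$th row and column of $W$ a total of $n_i$ times. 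The key structural observation is that $W-\mathbb{1}_M$ has rank at most $L$: with $\lambda_s\equiv e^{i\varphi_s}-1$ one has $W_{ij}=\delta_{ij}+\sum_{s=1}^L\lambda_s\bar U_{si}U_{sj}$, so that
\begin{align}
    \bm{a}^\top W\bm{b}=\sum_i a_i b_i+\sum_{s=1}^L\lambda_s u_s v_s,\qquad u_s\equiv\sum_i a_i\bar U_{si},\quad v_s\equiv\sum_j U_{sj}b_j .
\end{align}

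I would then decouple the $L$ bilinear products $u_s v_s$ via a Gaussian (Hubbard--Stratonovich-type) identity: using $\int\frac{d^2 w}{\pi}\,e^{-|w|^2+\mu w+\nu\bar w}=e^{\mu\nu}$ valid for all $\mu,\nu\in\mathbb{C}$, one writes $e^{\lambda_s u_s v_s}=\int\frac{d^2 w_s}{\pi}\,e^{-|w_s|^2+\lambda_s u_s w_s+v_s\bar w_s}$, giving
\begin{align}
    e^{\bm{a}^\top W\bm{b}}=\int\prod_{s=1}^L\frac{d^2 w_s}{\pi}\;e^{-\sum_s|w_s|^2}\,\exp\Big[\sum_i\big(a_i(b_i+\phi_i)+b_i\psi_i\big)\Big],
\end{align}
with $\phi_i\equiv\sum_s\lambda_s\bar U_{si}w_s$ and $\psi_i\equiv\sum_s U_{si}\bar w_s$ linear in the $L$ auxiliary variables. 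Now the exponent splits as a sum over the $M$ modes, so the coefficient extraction factorizes; a one-mode calculation gives $n_i!\,[a_i^{n_i}b_i^{n_i}]\exp[a_i(b_i+\phi_i)+b_i\psi_i]=\sum_{j=0}^{n_i}\binom{n_i}{j}(\phi_i\psi_i)^j/j!$, and therefore
\begin{align}\label{eq:G2final}
    \langle\bm{n}|\hat{W}|\bm{n}\rangle=\int\prod_{s=1}^L\frac{d^2 w_s}{\pi}\;e^{-\sum_s|w_s|^2}\,\prod_{i=1}^M\Bigg(\sum_{j=0}^{n_i}\binom{n_i}{j}\frac{(\phi_i\psi_i)^j}{j!}\Bigg).
\end{align}

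Finally, I would argue that \eqref{eq:G2final} can be evaluated exactly in time polynomial in $N$ and $M$ when $L=O(1)$. Each factor in the product is a polynomial in the $2L$ variables $(w_s,\bar w_s)$ of degree at most $n_i$ in the $w$'s and in the $\bar w$'s; since $\phi_i\psi_i$ is a fixed quadratic form, each factor can be expanded into at most $O(n_i^{2L})$ monomials. Multiplying the $M$ factors one mode at a time while storing the $O(N^{2L})$ coefficients of the running product (of total degree $\le N$ separately in the $w$'s and the $\bar w$'s) costs $\poly(N,M)$ for constant $L$. The remaining integral is elementary and diagonal, $\int\frac{d^2 w}{\pi}e^{-|w|^2}w^\alpha\bar w^\beta=\delta_{\alpha\beta}\,\alpha!$, so it reduces to a weighted sum over the already-computed coefficients. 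This yields $\langle\bm{n}|\hat{U}^\dagger\hat{P}(\bm{\varphi})\hat{U}|\bm{n}\rangle$ exactly, in time polynomial in the total photon number $N$ and the number of modes $M$ whenever $L=O(1)$.

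The main obstacle is precisely the step that makes the algorithm efficient. A direct evaluation---of the $N\times N$ permanent $\per(W_{\bm{n},\bm{n}})$, or of the coefficient extraction from $e^{\bm{a}^\top W\bm{b}}$ without decoupling---couples all $M$ modes and costs exponential time. The content of the argument is that the rank-$L$ structure of $W-\mathbb{1}_M$, a consequence of only $L$ phases in $\hat{P}(\bm{\varphi})$ being nonzero, allows this global coupling to be traded for $L$ auxiliary integration variables, after which the computation is a mode-by-mode polynomial bookkeeping whose cost is exponential only in $L$. The rest is routine: verifying the bilinear Gaussian identity for complex $\lambda_s$, the one-mode coefficient formula, and the degree and running-time bounds of the mode-by-mode expansion.
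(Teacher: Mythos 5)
Your proposal is correct, and it reaches the result by a route that is superficially different from, but at its core equivalent to, the paper's treatment (the paper itself cites this lemma and reviews the technique in the warm-up permanent section and in Appendix~D). You package the computation as a Bargmann generating-function coefficient, expose the rank-$L$ structure of $W-\mathbb{1}_M$, and decouple the $L$ bilinears $\lambda_s u_s v_s$ with a Hubbard--Stratonovich integral over $L$ complex auxiliary variables, after which the per-mode coefficient extraction gives the closed form $\sum_{j}\binom{n_i}{j}(\phi_i\psi_i)^j/j!$ and the Gaussian moments $\int\frac{d^2w}{\pi}e^{-|w|^2}w^\alpha\bar{w}^\beta=\delta_{\alpha\beta}\,\alpha!$ finish the job. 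The paper instead works directly at the operator/indeterminate level: it writes the amplitude as a polynomial in $2L$ formal variables attached to the rank-$L$ factors and then sums the coefficients of monomials whose degrees in the two families match, weighted by the factorials of those degrees --- which is exactly what your Gaussian moments implement, so the two proofs are the same low-rank polynomial method in different clothing. What your version buys is an explicit integral representation and a clean closed-form single-mode factor (a Laguerre-type polynomial in $\phi_i\psi_i$), with the factorial weights appearing automatically; what the paper's formal-indeterminate version buys is that it works directly with creation operators and therefore extends with less friction to the generalizations the paper actually needs (different bra and ket product states, and genuine Gaussian circuits with squeezing, where the bookkeeping becomes a low-rank loop hafnian). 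Two small points you should make explicit if you write this up: the interchange of the coefficient extraction in $(\bm{a},\bm{b})$ with the $w$-integral needs a one-line justification (Cauchy's formula on small circles plus Fubini, using the Gaussian damping uniformly over the compact contours), or can be avoided entirely by declaring the $w$-integral to be the formal matching-degrees-times-factorials rule; and the identification $W=U^\dagger D_{\bm{\varphi}}U$ versus a transposed/conjugated convention is immaterial here since you only ever feed the algorithm an explicit identity-plus-rank-$L$ matrix. Your complexity accounting ($O(N^{2L})$ stored coefficients, $M$ sequential multiplications, hence $\poly(N,M)$ for $L=O(1)$) matches the paper's $O(N^{2L+1})$-type bound.
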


We generalize this to arbitrary product states $|\psi\rangle$ and $|\phi\rangle$, which may be different, and an arbitrary Gaussian circuit $\hat{G}_0$:
\begin{lemma}[Generating function]\label{lemma:MVS}
    Consider an $M$-mode bosonic system. Let $\hat{G}_0$ be a Gaussian unitary circuit and $\hat{P}(\bm{\varphi})$ be the phase shifter with phase $\bm{\varphi}$ and $|\psi\rangle$ and $|\phi\rangle$ be product states.
    If the number of non-zero elements in $\bm{\varphi}\equiv (\varphi_1,\dots,\varphi_L,0,\dots,0)$ is $L=O(1)$,
    \begin{align}
        \langle \phi |\hat{G}_0^\dagger\hat{P}(\bm{\varphi})\hat{G}_0|\psi\rangle
    \end{align}
    can be exactly computed in polynomial time in $M$ and $n_\text{max}$.
\end{lemma}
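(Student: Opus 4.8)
The plan is to push the entire computation onto the $L$ modes that carry nonzero phases, so that the system size $M$ enters only through a benign product over modes. Write $A$ for those $L$ modes and $B$ for the remaining $M-L$, and take $\bm\varphi=(\varphi_1,\dots,\varphi_L,0,\dots,0)$ with $\varphi_1,\dots,\varphi_L\notin 2\pi\mathbb{Z}$. The first step is to expand $\hat{P}(\bm\varphi)=\prod_{j=1}^L e^{i\varphi_j\hat{n}_j}$ in displacement operators via Eq.~\eqref{eq:expansion}. Each single-mode factor $e^{i\varphi_j\hat{n}_j}$ is a passive Gaussian operator, and the Laguerre generating function $\sum_n e^{i\varphi n}e^{-|\alpha|^2/2}L_n(|\alpha|^2)=(1-e^{i\varphi})^{-1}e^{-\frac{i}{2}|\alpha|^2\cot(\varphi/2)}$ gives its characteristic function as a Fresnel--Gaussian, $\chi_{e^{i\varphi\hat{n}}}(\alpha)=\Tr[\hat{D}(\alpha)e^{i\varphi\hat{n}}]=(1-e^{i\varphi})^{-1}e^{-\frac{i}{2}|\alpha|^2\cot(\varphi/2)}$. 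Since $\Tr[\hat D(\alpha_i)]=\pi\delta^2(\alpha_i)$ on the $B$ modes, Eq.~\eqref{eq:expansion} collapses to
\begin{align}
\hat{P}(\bm\varphi)=\frac{1}{\pi^L}\int d^{2L}\bm\alpha_A\;\Bigl[\prod_{j=1}^L\frac{e^{-\frac{i}{2}|\alpha_j|^2\cot(\varphi_j/2)}}{1-e^{i\varphi_j}}\Bigr]\;\hat{D}^\dagger(\bm\alpha_A)\otimes\hat{\mathbb{1}}_B ,
\end{align}
which, since $e^{i\varphi\hat{n}}$ is not trace class, should be read as the limit of the Fock-truncated sum (I return to this point below).

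Next I would conjugate by $\hat{G}_0$. By Eq.~\eqref{eq:G_dis}, $\hat{G}_0^\dagger\bigl(\hat{D}^\dagger(\bm\alpha_A)\otimes\hat{\mathbb{1}}_B\bigr)\hat{G}_0=e^{i\ell(\bm\alpha_A)}\hat{D}^\dagger(\bm\beta)$, where $\ell$ is real-linear and $\bm\beta\in\mathbb{C}^M$ is $\mathbb{R}$-linear in $(\bm\alpha_A,\bar{\bm\alpha}_A)$, both computable in $O(M^2)$ time; the phase $e^{i\ell}$ is harmless as it is absorbed into the Gaussian below. Because $|\psi\rangle$ and $|\phi\rangle$ are product states, the overlap factorizes, $\langle\phi|\hat{D}^\dagger(\bm\beta)|\psi\rangle=\prod_{i=1}^M\langle\phi_i|\hat{D}^\dagger(\beta_i)|\psi_i\rangle$, and each factor is---by the standard expression for $\langle m|\hat{D}(\beta)|m'\rangle$ in terms of Laguerre polynomials---a fixed Gaussian $e^{-|\beta_i|^2/2}$ times a polynomial in $(\beta_i,\bar\beta_i)$ of degree at most $2n_\text{max}$ with explicitly known coefficients. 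Substituting $\bm\beta=\bm\beta(\bm\alpha_A,\bar{\bm\alpha}_A)$, the integrand becomes a Gaussian in $(\bm\alpha_A,\bar{\bm\alpha}_A)\in\mathbb{C}^L$ times a polynomial in these $L=O(1)$ complex variables; although that polynomial has degree $O(Mn_\text{max})$, it therefore has only $\poly(M,n_\text{max})$ monomials and can be assembled by multiplying the $M$ single-mode factors one at a time in $\poly(M,n_\text{max})$ time. This is the step that generalizes the few-mode-amplitude core of Gurvits's second algorithm: the low-mode structure of $\hat{P}(\bm\varphi)$ confines all nontrivial dependence to $L$ variables.

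It then remains to evaluate $\langle\phi|\hat{G}_0^\dagger\hat{P}(\bm\varphi)\hat{G}_0|\psi\rangle=\bigl(\prod_j(1-e^{i\varphi_j})^{-1}\bigr)\,\pi^{-L}\!\int d^{2L}\bm\alpha_A\,(\text{Gaussian})\times(\text{polynomial})$ over $\mathbb{R}^{2L}$. The quadratic form in the exponent has imaginary part from the $\cot(\varphi_j/2)$ terms and real part $-\tfrac12|\bm\beta(\bm\alpha_A)|^2$; because $\hat{G}_0$ is unitary its symplectic transform is invertible, so the map $\bm\alpha_A\mapsto\bm\beta$ is injective, this real part is negative definite, and the Gaussian is nondegenerate and absolutely convergent. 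A Gaussian integral against a polynomial of degree $d$ in $2L$ variables is computed exactly in $\poly(d,L)$ time---for instance by repeatedly applying the operator $\sum_{rs}(Q^{-1})_{rs}\partial_r\partial_s$ (with $Q$ the inverse covariance) to the polynomial and evaluating at the origin, which terminates after $d/2$ steps. Here $d=O(Mn_\text{max})$ and $L=O(1)$, so collecting the prefactor gives the exact value in $\poly(M,n_\text{max})$ time.

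The step that needs genuine care is the one flagged above: $\hat{P}(\bm\varphi)$ is not trace class, its characteristic function exists only as a distribution, and the Laguerre series $\sum_n e^{i\varphi n}e^{-|\alpha|^2/2}L_n(|\alpha|^2)$ converges only conditionally, so the displacement-operator expansion and the subsequent interchange of this sum with the $\bm\alpha_A$-integral must be justified. The clean route is to insert a Fock resolution on the $A$ modes at the outset, writing $\langle\phi|\hat{G}_0^\dagger\hat{P}(\bm\varphi)\hat{G}_0|\psi\rangle=\sum_{\bm n_A}e^{i\bm\varphi\cdot\bm n_A}\langle\bm n_A|\hat\Gamma_A|\bm n_A\rangle$ with $\hat\Gamma_A\equiv\Tr_B[\hat{G}_0|\psi\rangle\langle\phi|\hat{G}_0^\dagger]$; since $\hat\Gamma_A$ is trace class with $\|\hat\Gamma_A\|_1\le1$ the series converges absolutely, and one then passes to the Fresnel--Gaussian integral above by Abel summation and dominated convergence, using the negative-definiteness just established to justify the limit. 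Everything else---the mode-wise factorization, the linear change of variables, the bookkeeping of the $L$-variate polynomial, and the polynomial Gaussian integration---is routine once the problem has been reduced to $L$ dimensions.
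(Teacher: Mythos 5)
Your proof is correct, but it takes a genuinely different route from the paper. The paper works entirely in Fock space: it applies the Bloch--Messiah decomposition to $\hat{G}_0^\dagger\hat{P}(\bm{\varphi})\hat{G}_0$ to expose a low-rank structure (at most $2L$ nonzero squeezers and rank-$2L$ matrices $W,Z$), expands $\langle\bm{n}|\hat{G}|\bm{m}\rangle$ via antinormal ordering and Ferrers-graph matchings, and reduces the result to a loop hafnian of a low-rank matrix evaluated by the polynomial (indeterminate-bookkeeping) method. You instead work in phase space: you expand $\hat{P}(\bm{\varphi})$ in displacement operators supported only on the $L$ active modes (using the Fresnel--Gaussian characteristic function of $e^{i\varphi\hat{n}}$), push the displacement through $\hat{G}_0$ by covariance, factorize $\langle\phi|\hat{D}^\dagger(\bm{\beta})|\psi\rangle$ over modes into Gaussians times degree-$2n_\text{max}$ polynomials, and reduce everything to an exact $2L$-dimensional Gaussian-times-polynomial integral. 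What your route buys is a considerably milder polynomial degree---you track monomials in $2L$ variables of degree $O(Mn_\text{max})$, roughly $(Mn_\text{max})^{2L}$ coefficients, versus the paper's bookkeeping over $16L$ indeterminates of degree $O(Mn_\text{max})$---and it avoids the hafnian machinery altogether; what it costs is the analytic care you correctly flag, since $\hat{P}(\bm{\varphi})$ is not trace class and the displacement-operator expansion only holds as an Abel/distributional limit, which you resolve via the trace-class operator $\hat{\Gamma}_A$, Abel regularization $e^{i\varphi_j}\to re^{i\varphi_j}$, and dominated convergence backed by the strict negative definiteness of $-\tfrac12\|\bm{\beta}(\bm{\alpha}_A)\|^2$ (injectivity of the Bogoliubov map). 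The paper's argument, by contrast, is purely algebraic and finite, with no convergence issues, and its low-rank loop-hafnian toolkit is reused elsewhere in the paper. One small bookkeeping point: in the application the phases $k\theta\omega_i$ can occasionally be integer multiples of $2\pi$, in which case your standing assumption $\varphi_j\notin 2\pi\mathbb{Z}$ is handled by simply dropping that mode from the active set, so nothing breaks.
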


Using the above lemma, we can immediately see that Lemma~\ref{lemma:main} is true by using the relationship between $G(\Omega)$ and $\tilde{G}(k)$, which is restated here:
\begin{lemma}[Lemma~\ref{lemma:main} restated]\label{lemma:main_restated}
    Consider an $M$-mode bosonic system, product states $|\psi\rangle$, $|\phi\rangle$, a Gaussian unitary $\hat{G}_0$, and an $L$-mode Fock state $|\bm{n}\rangle$.  
    Then
    \begin{align}
        \langle \phi|\hat{G}_0^\dagger(|\bm{n}\rangle\langle \bm{n}|\otimes \hat{\mathbb{1}})\hat{G}_0|\psi\rangle
    \end{align}
    can be exactly computed in polynomial time in $M$ and $n_\text{max}$ if $L=O(1)$.
\end{lemma}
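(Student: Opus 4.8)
The plan is to deduce this statement from the generating-function estimate, Lemma~\ref{lemma:MVS}, via the discrete Fourier relation already recorded in Eqs.~\eqref{eq:g_k}--\eqref{eq:g_omega}. The key observation is that the projector $|\bm{n}\rangle\langle \bm{n}|\otimes\hat{\mathbb{1}}$ onto a fixed photon-number pattern on $L$ modes is a finite Fourier sum of phase-shift operators $\hat{P}(\bm{\varphi})$ supported on those $L$ modes, so the marginal overlap becomes a finite linear combination of quantities of exactly the form that Lemma~\ref{lemma:MVS} evaluates. After relabelling, I take $|\bm{n}\rangle$ to act on the first $L$ modes and carry the standing photon-number cutoff $n_\text{max}$ on $|\psi\rangle$ and $|\phi\rangle$.

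Concretely, I would first fix the mixed-radix weight vector $\bm{\omega}=(\omega_1,\dots,\omega_L,0,\dots,0)$ with $\omega_i=(n_\text{max}+1)^{i-1}$, set $\Omega_\text{max}=(n_\text{max}+1)^L-1$ and $\theta=2\pi/(\Omega_\text{max}+1)$, and note that $(n_1,\dots,n_L)\mapsto \bm{\omega}\cdot\bm{n}$ is a bijection onto $\{0,\dots,\Omega_\text{max}\}$ while the modes $L+1,\dots,M$ receive zero weight and are thereby summed over, i.e.\ traced out. Hence the quantity in question equals $G(\Omega)$ at $\Omega=\bm{\omega}\cdot\bm{n}$. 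I would then invoke Eq.~\eqref{eq:g_omega} to write $G(\Omega)=\tfrac{1}{\Omega_\text{max}+1}\sum_{k=0}^{\Omega_\text{max}}\tilde{G}(k)e^{ik\Omega}$ with $\tilde{G}(k)=\langle \phi|\hat{G}_0^\dagger\hat{P}(\bm{\varphi}_k)\hat{G}_0|\psi\rangle$ and $\bm{\varphi}_k=(-k\theta\omega_1,\dots,-k\theta\omega_L,0,\dots,0)$. Each $\tilde{G}(k)$ has at most $L$ nonzero phases, $|\psi\rangle$ and $|\phi\rangle$ are product states, and $\hat{G}_0$ is Gaussian, so Lemma~\ref{lemma:MVS} evaluates it exactly in time polynomial in $M$ and $n_\text{max}$. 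Since the sum contains only $\Omega_\text{max}+1=(n_\text{max}+1)^L$ terms, which is polynomial in $n_\text{max}$ once $L=O(1)$, assembling the transform costs $\poly(M,n_\text{max})$ in total and returns the desired marginal $q(n_1,\dots,n_L)$, establishing the lemma.

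Because Lemma~\ref{lemma:MVS} may be assumed, there is essentially no obstacle here beyond bookkeeping; the one point that deserves care is the exactness of the Fourier identification in the first step. A Gaussian circuit can populate arbitrarily high Fock levels, so the length-$(n_\text{max}+1)^L$ transform is exact only once the post-$\hat{G}_0$ state is also truncated, and otherwise one must bound the aliasing contribution of the tuples $\bm{N}\neq\bm{n}$ whose weighted photon number $\bm{\omega}\cdot\bm{N}$ is congruent to $\bm{\omega}\cdot\bm{n}$ modulo $\Omega_\text{max}+1$. I expect this to be routine: Cauchy--Schwarz gives $\sum_{\bm{N}}|q(\bm{N})|\le\|\hat{G}_0|\psi\rangle\|\,\|\hat{G}_0|\phi\rangle\|\le 1$, and the photon-number distribution of a bounded-cutoff product state evolved by a bounded Gaussian circuit has rapidly decaying tails, so enlarging $n_\text{max}$ (equivalently $\Omega_\text{max}$) by a polynomial amount suppresses the error below any prescribed threshold, exactly as in the truncation discussion of Sec.~\ref{sec:setup}. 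All the genuine technical weight therefore sits in Lemma~\ref{lemma:MVS}, of which the present statement is the Fourier-transform corollary.
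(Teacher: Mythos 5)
Your proposal is correct and takes essentially the same route as the paper: the paper's proof of this lemma is precisely to combine the Fourier relation of Eqs.~\eqref{eq:g_k}--\eqref{eq:g_omega} (with the mixed-radix weights $\omega_i=(n_\text{max}+1)^{i-1}$) with Lemma~\ref{lemma:MVS}, evaluating the $(n_\text{max}+1)^L$ generating-function values $\tilde{G}(k)$, which is polynomial when $L=O(1)$. Your closing caveat about aliasing from post-circuit photon numbers exceeding the cutoff is a fair extra precaution that the paper handles only implicitly through its $n_\text{max}$-truncation convention, but it does not alter the argument.
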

Hence, from now on, we will focus on proving Lemma~\ref{lemma:MVS}.

Before we do that, we emphasize that the generating function is reduced to the permanent of a matrix with rank depending on $L$~\cite{aaronson2011computational, ivanov2020complexity} when $|\psi\rangle=|\phi\rangle$ are Fock states and the circuit is a linear-optical circuit. We will demonstrate how to compute this quantity as a warm-up example to generalize the approach further below.
On the other hand, for Fock states and general Gaussian circuits, the generating function reduces to a loop hafnian~\cite{oh2024quantum, quesada2019franck}, which also allows a classical algorithm to exploit a low-rank structure to reduce the computational complexity~\cite{oh2024quantum}.
Notably, such a structure still allows us to compute the generating function even when $|\psi\rangle$ and $|\phi\rangle$ are general product states by using the classical algorithm from Ref.~\cite{oh2024quantum}, which will be shown in Appendix~\ref{app:gurvits_general}.

\subsubsection{Warm-up example: Permanent}\label{sec:poly}
To provide the main technique, we first consider the simplest example: permanent.
The permanent of an $N\times N$ matrix $A$ is defined as
\begin{align}
    \per(A)\equiv \sum_{\sigma\in \mathcal{S}_N}\prod_{i=1}^N A_{i,\sigma(i)}.
\end{align}
There are many ways to compute the permanent of a matrix, such as Ryser's formula~\cite{ryser1963combinatorial}.
One of the methods that is relevant to our work is the polynomial method~\cite{ryser1963combinatorial}.
To be more specific, we define a polynomial of indeterminates $x_1,\dots,x_N$ as
\begin{align}
    F(x_1,\dots,x_N)=\prod_{i=1}^N\left(\sum_{j=1}^N A_{ij}x_j\right).
\end{align}
One may then notice that the coefficient of $x_1\cdots x_N$ corresponds to the permanent of the matrix $A$.

To illustrate how we leverage the low-rank structure of a relevant matrix, let us review Gurvits' second algorithm~\cite{ivanov2020complexity, aaronson2011computational}.
Let $A=W+I$, where $W$ is a rank-$L$ matrix, i.e., $W$'s matrix elements can be written as
\begin{align}
    W_{ik}
    \equiv \sum_{s=1}^L u_i^{(s)}v_{k}^{(s)}.
\end{align}
Then, Ref.~\cite{ivanov2020complexity} shows that if we define a polynomial of indeterminates $\{a_u^{(s)}\}_{s=1}^L,\{a_v^{(s)}\}_{s=1}^L$,
\begin{align}
    F(\{a_u^{(s)}\}_{s=1}^L,\{a_v^{(s)}\}_{s=1}^L)
    =\prod_{i=1}^N\left[1+\sum_{s,s'=1}^L a_u^{(s)}a_v^{(s')}u_i^{(s)}v_i^{(s')}\right]
\end{align}
and sum the coefficients of monomials whose degrees of $\{a_u^{(s)}\}$ and $\{a_v^{(s)}\}$ match each other with multiplying the factorial of the degrees of $\{a_u^{(s)}\}$ (or equivalently $\{a_v^{(s)}\}$), then we obtain the permanent.
Note that the number of monomials that appear in the computation is $O(N^{2L})$.
The calculation of the coefficients of the monomials involves multiplying out $N$ terms, where at each multiplication the $O(N^{2L})$ coefficients have to be updated.
Therefore, the total cost is $O(N^{2L+1})$.
Thus, when the rank of the matrix $W$ is at most a constant, the permanent of the matrix $W+I$ can be efficiently computed, unlike that of general matrices.

\subsubsection{Amplitude of low-mode states}\label{sec:low-mode}
In the section, for the subroutine of our algorithm for computing the generating function, we consider the amplitude in the form of $\langle \phi|\hat{U}|\psi\rangle$, where $\hat{U}$ is a linear-optical circuit and we assume that $|\phi\rangle$ is an $L$-mode state, which is defined as
\begin{align}
    |\phi\rangle=|\phi'\rangle\otimes |\bm{0}_{M-L}\rangle,
\end{align}
where $|\phi'\rangle$ is the nontrivial part and an arbitrary product state on $L$ modes and $|\bm{0}_{M-L}\rangle$ is the vacuum state on the remaining $M-L$ modes.
We show that the amplitude is easy to compute when $L$ is small:
\begin{lemma}\label{lemma:low-mode}
    Let $\hat{U}$ be an $M$-mode linear-optical circuit, $|\psi\rangle$ be an $M$-mode general product state, and $|\phi\rangle$ be an arbitrary $L$-mode state. More precisely, all but $L$ modes are vacuum.
    Then,
    \begin{align}
        \langle \phi|\hat{U}|\psi\rangle
    \end{align}
    can be exactly computed in $O(M(Mn_\text{max}+1)^L)$, where $n_\text{max}$ is the maximum photon number for each mode.
    Hence, it can be efficiently computed when $L=O(1)$.
\end{lemma}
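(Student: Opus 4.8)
The plan is to reduce the amplitude $\langle \phi|\hat{U}|\psi\rangle$ to a coefficient-extraction problem of a polynomial whose structure is governed by the rank of an $M\times L$ matrix, and then evaluate that polynomial efficiently using the low-rank bookkeeping described in the permanent warm-up. First I would use the product structure to write $|\psi\rangle = \bigotimes_{i=1}^M\bigl(\sum_{m_i=0}^{n_\text{max}} a^{(i)}_{m_i}|m_i\rangle\bigr)$ and similarly expand $|\phi'\rangle$ on its $L$ modes. Using the generating-function (coherent-state) representation of Fock states, $|m\rangle \propto \frac{1}{\sqrt{m!}}\,\partial_t^m e^{t\hat{a}^\dagger}|0\rangle\big|_{t=0}$, together with the fact that a linear-optical circuit acts on a product of coherent-like exponentials $e^{\sum_i t_i \hat{a}_i^\dagger}|0\rangle$ by the linear substitution $t_i \mapsto \sum_j U_{ji} t_j$ (equivalently, $\hat{U}^\dagger\hat{a}_i^\dagger\hat{U} = \sum_j U^*_{ij}\hat{a}_j^\dagger$), I would express $\langle \phi|\hat{U}|\psi\rangle$ as a suitable multivariate derivative, evaluated at zero, of the quantity $\langle \bm{0}|\,e^{\sum_{i\in A}\bar s_i \hat{a}_i}\,\hat{U}\,e^{\sum_{j=1}^M t_j \hat{a}_j^\dagger}\,|\bm{0}\rangle$, which by the displacement/coherent-state overlap formula equals $\exp\bigl(\sum_{i\in A, j}\bar s_i U_{ij} t_j\bigr)$. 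Because $|\phi\rangle$ is supported on only $L$ modes, only $L$ of the $\bar s$ variables are nontrivial; the remaining $M-L$ Fock overlaps against vacuum just set their $t$-derivatives to order zero, i.e. pick out the constant term in those $t_j$.

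The key structural observation, mirroring the permanent discussion in Sec.~\ref{sec:poly}, is then the following. After carrying out the $s$-differentiation (which is just reading off a fixed, $L$-dependent pattern of coefficients in the $\bar s_i$) the object to analyze is a polynomial in $t_1,\dots,t_M$ of the form $\prod_{i=1}^M \bigl(\text{degree-}{\le}n_\text{max}\text{ polynomial in }\ell_i(\bm t)\bigr)$, where each $\ell_i(\bm t) = \sum_{s=1}^L (\text{entry})\, t_{j(s)}$ is a linear form involving only the $L$ ``active'' variables $t_{j}$ with $j$ in the support of $\phi'$. In other words, although there are $M$ factors, each factor depends on $\bm t$ only through $L$ linear combinations, exactly the rank-$L$ situation. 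Hence the whole product, expanded, lives in the span of monomials in those $L$ combined variables of total degree at most $M n_\text{max}$ — that is, $O((M n_\text{max}+1)^L)$ monomials. I would then evaluate the product factor by factor: start from the monomial $1$, and at each of the $M$ steps multiply by the next factor and re-collect coefficients, keeping the list of $O((M n_\text{max}+1)^L)$ coefficients. Each step costs $O((M n_\text{max}+1)^L)$, so the total is $O(M (M n_\text{max}+1)^L)$, and at the end I extract the single coefficient corresponding to the prescribed photon numbers in $|\phi'\rangle$ and in $|\psi\rangle$ (with the appropriate $\sqrt{m!}$ normalization factors). When $L=O(1)$ this is polynomial in $M$ and $n_\text{max}$, giving the claim.

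The main technical obstacle I anticipate is carefully setting up the bookkeeping so that the $M-L$ ``vacuum'' modes of $\langle\phi|$ are handled correctly: their Fock index is $0$, so the overlap against them forces us to keep only the terms in which the corresponding $t_j$ does not appear, but $t_j$ for $j\notin\mathrm{supp}(\phi')$ does appear inside the input exponential $e^{\sum_j t_j\hat a_j^\dagger}$ and gets mixed by $\hat U$ into the active linear forms. The clean way around this is to note that after applying $\hat U$ to the input coherent exponential, the amplitude against $\langle\phi|$ with vacuum on $B$ is obtained by simply \emph{deleting} the rows of $U$ outside $A$ — i.e. replacing $U$ by its $L\times M$ submatrix $U_A$ — and then matching Fock indices $(\phi'_i)_{i\in A}$ against the $\bar s$-variables and $(m_j)_{j=1}^M$ against the $t$-variables; this is where the linear forms $\ell_i$ each end up depending on only $L$ of the $t_j$'s after one also uses the $A$-side indices. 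A secondary point to get right is that for a general (possibly squeezing-free is not enough — we need the full) Gaussian $\hat G_0$ one would need the analogous loop-hafnian version; but the lemma as stated here is only for a \emph{linear-optical} $\hat U$, so squeezing does not enter and the coherent-state overlap is a pure exponential with no quadratic part, keeping the argument at the permanent-type level of difficulty. The rest — the explicit constants, the $\sqrt{m!}$ factors, and verifying the per-step cost — is routine.
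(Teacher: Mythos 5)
Your proposal is correct and essentially follows the paper's own route: both arguments reduce the amplitude to coefficient extraction from a polynomial that is a product of $M$ factors, each of degree at most $n_\text{max}$ in a single linear form of $L$ variables attached to the occupied modes of $|\phi\rangle$, expanded factor by factor while tracking the $O((Mn_\text{max}+1)^L)$ monomial coefficients for a total cost $O(M(Mn_\text{max}+1)^L)$; the paper builds this polynomial by directly substituting indeterminates $x_1,\dots,x_L$ for the surviving creation operators $\hat a_j^\dagger$, $j\le L$, rather than via your coherent-state generating function, which is a cosmetic difference. One bookkeeping point to fix in your write-up: the $L$ tracked variables must be the ones paired with $|\phi\rangle$'s occupied modes (your $\bar s_i$), with the sums over the input amplitudes $a^{(j)}_{m_j}$ folded into each of the $M$ per-mode factors---not ``$L$ linear combinations of the $t_j$'s'' as stated in your middle paragraph---since only then is the final extraction against $|\phi'\rangle$'s Fock indices well defined and the monomial count genuinely $(Mn_\text{max}+1)^L$.
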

\begin{proof}
Let us first examine Fock states.
Consider a few-mode Fock state $|\bm{n}\rangle=|n_1,\dots,n_L,0,\dots,0\rangle$ and arbitrary Fock state $|\bm{m}\rangle$.
Then, the amplitude can be simplified as
\begin{align}
    \langle \bm{n}|\hat{U}|\bm{m}\rangle
    &=\langle \bm{n}| \prod_{i=1}^M\frac{1}{\sqrt{m_i!}}\left(\sum_{j=1}^M U_{ij}\hat{a}_j^\dagger\right)^{m_i}|0\rangle \\
    &=\langle \bm{n}| \prod_{i=1}^M \frac{1}{\sqrt{m_i!}}\left(\sum_{j=1}^L U_{ij}\hat{a}_j^\dagger\right)^{m_i}|0\rangle.
\end{align}
Now, note that the creation operators in the form of $(\hat{a}_1^\dagger)^{n_1}\cdots (\hat{a}_L^\dagger)^{n_L}$ survive and others are zero when contracted with $\langle \bm{n}|$.
Also, such a term renders $\prod_{i=1}^M\sqrt{n_i}$ after contracting with $\langle \bm{n}|$.
This can also be interpreted using the polynomial method by replacing the creation operators by indeterminates $x_1,\dots,x_L$ and introducing the prefactor $\prod_{i=1}^M\sqrt{n_i}$ as
\begin{align}
    \prod_{i=1}^M \frac{\sqrt{n_i!}}{\sqrt{m_i!}}\left(\sum_{j=1}^L U_{ij}x_j\right)^{m_i}.
\end{align}
We then just need to pick up the monomials of $x_j$ after expansion such that their degree matches with $\bm{n}$, i.e., $x_1^{n_1}\cdots x_L^{n_L}$ to compute $\langle \bm{n}|\hat{U}|\bm{m}\rangle$.
In this procedure, the number of monomials we need to take care of is at most $O((MN+1)^L)$, where $N$ is the total photon number.
Also, because we need to keep track of the coefficient for $M$ multiplications, the total cost is $O(M(MN+1)^L)$.

\begin{figure*}[t]
\includegraphics[width=380px]{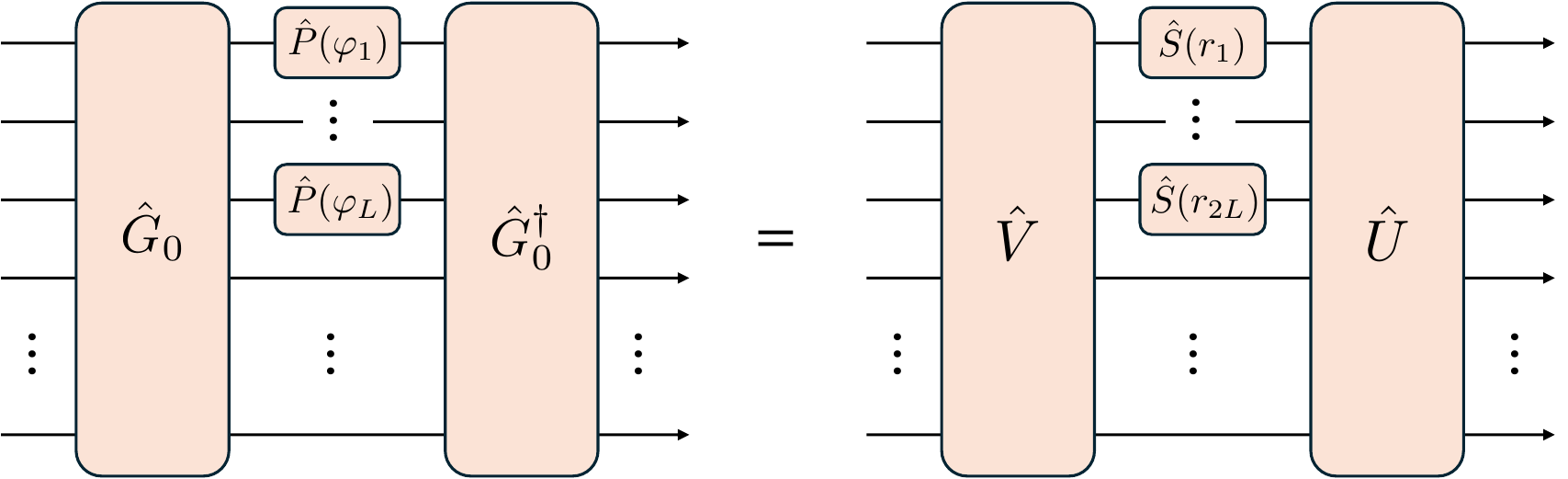} %{scheme_ed.eps}
\caption{Schematics of Lemma~\ref{lemma:low-rank}. The lemma shows that for a Gaussian unitary circuit $\hat{G}_0$ and a phase shifter $\hat{P}(\bm{\varphi})$ where $\bm{\varphi}\equiv (\varphi_1,\dots,\varphi_L,0,\dots,0)$, $\hat{G}_0^\dagger \hat{P}(\bm{\varphi})\hat{G}_0$ is equivalent to $\hat{U}\hat{S}(\bm{r})\hat{V}$, where $\bm{r}=(r_1,\dots,r_{2L},0,\dots,0)$.}
\label{fig:MVS}
\end{figure*}

Now, let us consider general product states $|\psi\rangle$ and $|\phi\rangle$, which are explicitly written as
\begin{align}\label{eq:psi}
    |\psi\rangle=
    \bigotimes_{i=1}^M\left(\sum_{m_i=0}^{n_\text{max}}a^{(i)}_{m_i}|m_i\rangle\right),~
    |\phi\rangle=
    \bigotimes_{i=1}^M\left(\sum_{n_i=0}^{n_\text{max}}b^{(i)}_{n_i}|n_i\rangle\right).
\end{align}
For a general case, as in the lemma, we can simply use the same method
\begin{align}
    \langle \phi|\hat{U}|\psi\rangle 
    &=\sum_{\bm{n}}b^*_{\bm{n}}\langle \bm{n}|\prod_{i=1}^M \sum_{m_i=0}^{n_\text{max}} \frac{a_{m_i}^{(i)}}{\sqrt{m_i!}}\left(\sum_{j=1}^M U_{ij}\hat{a}^\dagger_j\right)^{m_i}|0\rangle \\
    &=\sum_{\bm{n}}b^*_{\bm{n}}\langle \bm{n}|\prod_{i=1}^M \sum_{m_i=0}^{n_\text{max}} \frac{a_{m_i}^{(i)}}{\sqrt{m_i!}}\left(\sum_{j=1}^L U_{ij}\hat{a}^\dagger_j\right)^{m_i}|0\rangle \\
    &\to \prod_{i=1}^M \sum_{n_i,m_i=0}^{n_\text{max}} \frac{a_{m_i}^{(i)} b_{n_i}^{(i)*}\sqrt{n_i!}}{\sqrt{m_i!}}\left(\sum_{j=1}^L U_{ij}x_j\right)^{m_i},
\end{align}
where $a_{\bm{m}}\equiv \prod_{i=1}^M a^{(i)}_{m_i}$.
For the last arrow, we apply the same method for the Fock state above.
Thus, again, after expanding the polynomial of indeterminates $x_1,\dots,x_L$ and taking the summation over the coefficients of all the monomials, we obtain the amplitude $\langle \phi|\hat{U}|\psi\rangle$.
Since the number of monomials we need to consider is at most $(Mn_\text{max}+1)^L$, the complexity is bounded by $(Mn_\text{max}+1)^L$, where $n_\text{max}$ is the maximum photon number for each mode.
Again, for the update for each multiplication, the total cost becomes $O(M(Mn_\text{max}+1)^L)$.
\end{proof}

\subsubsection{Low-rank structure}
As shown in the permanent case~\cite{ivanov2020complexity}, a low-rank structure of a relevant matrix is crucial to reducing the complexity.
Now, we reveal the low-rank structure of the Gaussian circuit that is obtained by $\hat{G}=\hat{G}_0^\dagger \hat{P}(\bm{\varphi})\hat{G}_0$ when $\bm{\varphi}$ has a few non-zero phases, which is depicted in Fig.~\ref{fig:MVS}.
To be more precise, 
\begin{lemma}[Bloch-Messiah decomposition and low-rank structure]\label{lemma:low-rank}
    Consider a general Gaussian unitary circuit $\hat{G}_0$ and phase-shifter operator $\hat{P}(\bm{\varphi})$ with $\bm{\varphi}=(\varphi_1,\dots,\varphi_L,0,\dots,0)$. Then, $\hat{G}_0^\dagger\hat{P}(\bm{\varphi})\hat{G}_0$ can be decomposed into a linear-optical circuit, at most $2L$ non-zero squeezing operators, and another linear-optical circuit.
\end{lemma}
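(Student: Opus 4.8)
\emph{Proof proposal.} The plan is to pass from operators to their Heisenberg-picture symplectic representation, where the number of nonzero single-mode squeezers in the Bloch--Messiah decomposition of a Gaussian unitary is read off from the rank of one matrix block. Recall that a Gaussian unitary $\hat{G}$ acts by conjugation on $(\hat{a}_1,\dots,\hat{a}_M,\hat{a}_1^\dagger,\dots,\hat{a}_M^\dagger)$ as multiplication by a $2M\times 2M$ complex symplectic matrix
\begin{align}\label{eq:Smat}
    S_{\hat{G}}=\begin{pmatrix} A & B \\ B^* & A^* \end{pmatrix},
    \qquad AA^\dagger-BB^\dagger=I,\quad A^\dagger B-B^T A^*=0,
\end{align}
up to a displacement that plays no role here. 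In the Bloch--Messiah form $\hat{G}=\hat{U}\hat{S}(\bm{r})\hat{V}$, with $\hat{U},\hat{V}$ linear-optical and $\hat{S}(\bm{r})=\bigotimes_i\hat{S}(r_i)$, the symplectic matrices compose in the same order; since $\hat{U},\hat{V}$ have block-diagonal symplectic matrices with invertible blocks $U,V$, while $\hat{S}(\bm{r})$ has off-diagonal block $\mathrm{diag}(\sinh r_i)$ (up to signs/phases), the off-diagonal block of $S_{\hat{G}}$ equals $U\,\mathrm{diag}(\sinh r_i)\,V^*$, whose rank is exactly $\#\{i:r_i\neq0\}$. Hence it suffices to show that the off-diagonal block of $S_{\hat{G}}$ for $\hat{G}=\hat{G}_0^\dagger\hat{P}(\bm{\varphi})\hat{G}_0$ has rank at most $2L$.

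The second step is a direct computation of that block. The phase shifter has symplectic matrix $\mathrm{diag}(D,D^*)$ with $D=\mathrm{diag}(e^{i\varphi_1},\dots,e^{i\varphi_M})$, and since only the first $L$ phases are nonzero, $D-I$ is diagonal and supported on $L$ coordinates, so $\mathrm{rank}(D-I)\le L$. Writing $S_{\hat{G}_0}=\begin{pmatrix} A & B \\ B^* & A^* \end{pmatrix}$, so that $S_{\hat{G}_0^\dagger}=S_{\hat{G}_0}^{-1}=\begin{pmatrix} A^\dagger & -B^T \\ -B^\dagger & A^T \end{pmatrix}$, the symplectic matrix of $\hat{G}$ is $S_{\hat{G}_0}^{-1}\,\mathrm{diag}(D,D^*)\,S_{\hat{G}_0}$, whose off-diagonal (top-right) block is
\begin{align}
    \beta &= A^\dagger D B-B^T D^* A^* \nonumber \\
    &= \bigl(A^\dagger B-B^T A^*\bigr)+\bigl(A^\dagger(D-I)B-B^T(D^*-I)A^*\bigr).
\end{align}
The first parenthesis vanishes by the symplectic relation in Eq.~\eqref{eq:Smat}, and the remaining two terms are each a product whose middle factor $D-I$ (resp.\ $D^*-I$) has rank at most $L$, hence each has rank at most $L$. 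Therefore $\mathrm{rank}\,\beta\le 2L$, which by the first step means that $\hat{G}_0^\dagger\hat{P}(\bm{\varphi})\hat{G}_0$ admits a Bloch--Messiah decomposition $\hat{U}\hat{S}(\bm{r})\hat{V}$ with at most $2L$ nonzero squeezing parameters, i.e.\ $\bm{r}=(r_1,\dots,r_{2L},0,\dots,0)$ after relabeling.

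Most of this is routine symplectic bookkeeping---the precise block form of $S^{-1}$, the order in which symplectic matrices compose, and the defining relations of the complex symplectic group---which I would record once and then use. The one point I regard as the conceptual crux is the equivalence invoked in the first step: that the number of nonzero squeezing parameters of a Gaussian unitary equals the rank of the off-diagonal block of its symplectic matrix. This follows from Bloch--Messiah together with invertibility of the passive blocks, and it is exactly what converts the algebraic rank bound $\mathrm{rank}\,\beta\le2L$ into the claimed structural statement. Finally, a displacement component of $\hat{G}_0$ only conjugates $\hat{P}(\bm{\varphi})$ into itself up to a displacement, so it does not change the squeezing content and can be dropped throughout.
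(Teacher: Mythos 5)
Your proof is correct and follows essentially the same route as the paper's: compute the Heisenberg-picture (Bogoliubov) action of $\hat{G}_0^\dagger\hat{P}(\bm{\varphi})\hat{G}_0$, use the symplectic relations of $\hat{G}_0$ to cancel the identity part, observe that the anti-linear (coefficient-of-$\hat{a}^\dagger$) block has rank at most $2L$ because only $L$ phases are nonzero, and conclude via the fact that this rank equals the number of nonzero Bloch--Messiah squeezing parameters. The only difference is presentational: you work with the $2M\times 2M$ symplectic block algebra and the inverse formula, whereas the paper substitutes the Bloch--Messiah form of $\hat{G}_0$ directly into the operator transformation.
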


\begin{proof}[Proof Sketch]
See Appendix~\ref{app:rank} for the details.
We prove this lemma by direct calculation of the transformation of bosonic annihilation operators through the Gaussian unitary circuit $\hat{G}=\hat{G}_0^\dagger \hat{P}(\bm{\varphi})\hat{G}_0$, where $\hat{G}_0$ is an $M$-mode general Gaussian unitary operation, so that using the Bloch-Messiah decomposition it can be written as $\hat{G}_0=\hat{U}\hat{S}(\bm{s})\hat{V}$, where $\hat{U}$ and $\hat{V}$ are linear-optical circuits and $\hat{S}(\bm{s})$ is squeezing operator with squeezing parameter $\bm{s}\in \mathbb{R}^M$.
For this circuit, we can show that 
\begin{align}
    \hat{G}_0\hat{a}_i\hat{G}_0^\dagger
    =\sum_{k=1}^M \left(A_{ik}\hat{a}_k+B_{ik}\hat{a}_k^\dagger\right),
\end{align}
where we defined $M\times M$ matrices
\begin{align}
    A_{ik}\equiv \sum_{j=1}^M V_{ji}^*\cosh s_j U_{kj}^*,~~
    B_{ik}\equiv -\sum_{j=1}^M V_{ji}^*\sinh s_j U_{kj}.
\end{align}
Here, one can see that the rank of the matrix $B$ is equal to the number of non-zero squeezing parameters $s_j$'s, which is crucial for later use.

Using the above transformation, by direct calculation, one may easily show that after the entire process $\hat{G}$, we have
\begin{align}\label{eq:G_decompose}
    \hat{G}^\dagger\hat{a}_i\hat{G}
    &=\hat{a}_i+\sum_{k=1}^M \left(W_{ik}\hat{a}_k+Z_{ik}\hat{a}_k^\dagger\right),
\end{align}
where, if assume that $\varphi_j=0$ for $j>L$, $W$ and $Z$'s matrix elements are written as
\begin{align}
    W_{ik}
    &=\sum_{j=1}^L \left[(e^{i\varphi_j}-1)A_{ij}A_{kj}^*-(e^{-i\varphi_j}-1)B_{ij}B_{kj}^*\right], \\
    Z_{ik}
    &=\sum_{j=1}^L \left[-(e^{i\varphi_j}-1)A_{ij}B_{kj}+(e^{-i\varphi_j}-1)B_{ij}A_{kj}\right].
\end{align}
Hence, the ranks of $W$ and $Z$ are at most $2L$.
Thus, we can write the matrices as
\begin{align}
    W=\sum_{j=1}^{2L}u^{(j)}v^{(j)\T},~~~
    Z=\sum_{j=1}^{2L}w^{(j)}z^{(j)\T}.
\end{align}
In addition, the low-rankness of the matrix $Z$ implies that the Gaussian transform by $\hat{G}=\hat{G}_0^\dagger \hat{P}(\bm{\varphi})\hat{G}_0$ has a structure in which its squeezing parameters are nonzero only for $2L$ modes, which completes the proof.
\end{proof}

\subsubsection{Proof sketch of Lemma~\ref{lemma:MVS}}
In this section, we finally present the sketch of the classical algorithm that computes the generating function for Lemma~\ref{lemma:MVS}.
More specifically, the circuit is a general Gaussian unitary $\hat{G}_0$, and thus the relevant quantity to be computed is written as 
\begin{align}\label{eq:G}
    \langle \phi|\hat{G}_0^\dagger \hat{P}(\bm{\varphi})\hat{G}_0|\psi\rangle
    \equiv \langle \phi|\hat{G}|\psi\rangle
    =\sum_{\bm{n},\bm{m}}a_{\bm{m}} b_{\bm{n}}^*\langle\bm{n}|\hat{G}|\bm{m}\rangle,
\end{align}
where $|\psi\rangle$ and $|\phi\rangle$ are arbitrary product states, which are written as Eq.~\eqref{eq:psi}.
Also, the phase vector is assumed to be written as $\bm{\varphi}=(\varphi_1,\dots,\varphi_L,0,\dots,0)$, without loss of generality.
By using Lemma~\ref{lemma:low-rank}, the Bloch-Messiah decomposition of the Gaussian unitary $\hat{G}$ is written as $\hat{U}\hat{S}(\bm{r})\hat{V}$, where $\hat{U}$ and $\hat{V}$ are linear-optical unitary circuits, and $\bm{r}$ is the squeezing parameter vector with at most $2L$ nonzero elements.

Let us focus on a simpler case: $\langle\bm{n}|\hat{G}|\bm{m}\rangle$.
Using the low-rank structure, it can be rewritten as
\begin{align}\label{eq:G2}
    \langle\bm{n}|\hat{G}|\bm{m}\rangle &=\frac{1}{\sqrt{\bm{n}!}}\langle\bm{r}|\hat{V} \left[\prod_{i\in [M]}\left(\hat{A}_i+\hat{B}_i^\dagger\right)^{n_i}\right]|\bm{m}\rangle,
\end{align}
where we defined
\begin{align}
    &\hat{A}_i=\hat{a}_i+\sum_{k=1}^MW_{ik}\hat{a}_k,~~~\hat{B}_i^\dagger=\sum_{k=1}^M Z_{ik}\hat{a}_k^\dagger, \\
    &C_{ij}=[\hat{A}_i,\hat{B}_j^\dagger]=Z_{ji}+\sum_{k=1}^M W_{ik}Z_{jk}.
\end{align}
Note that by Lemma~\ref{lemma:low-rank}, the rank of the matrix $C$ is at most $2L$ since $W$ and $Z$ have at most $2L$ ranks and the number of nonzero squeezing parameters is limited to at most $2L$ as shown in Lemma~\ref{lemma:low-rank}.

By using the antinormal ordering~\cite{varvak2005rook} (See Appendix~\ref{app:anti}), we expand Eq.~\eqref{eq:G2} and move all the creation operators to the right, 
\begin{align}\label{eq:G3}
    &\frac{1}{\sqrt{\bm{n}!\bm{m}!}}\sum_{Y\subset X_{\bm{n}}}\sum_{\mathcal{M}\in \text{Match}(F_{X_{\bm{n}}};Y\to Y^c)}(-1)^{|\mathcal{M}|}\left(\prod_{(i,j)\in \mathcal{M}}C_{ij}\right) \nonumber \\ 
    &\times \langle \bm{r}|\hat{V}\left(\prod_{i\in Y\setminus A(\mathcal{M})}\hat{A}_i\right)\left(\prod_{i\in Y^c\setminus B(\mathcal{M})}\hat{B}^\dagger_i\right)\prod_{i\in X_{\bm{m}}}\hat{a}_i^\dagger|0\rangle,
\end{align}
where $X_{\bm{n}}$ is the multiset composed of the elements of~$\bm{n}$.
For example, if $\bm{n}=(0,2,1)$, $X_{\bm{n}}=\{2,2,3\}$.
$\text{Match}(F_{X_{\bm{n}}};Y\to Y^c)$ is the match of the Ferrers graph~\cite{stanley2011enumerative} with vertices $Y$ and $Y^c$.
More explicitly, the Ferrers graph's edges are defined as $E=\{(i,j)|i,j\in X_{\bm{n}},i<j\}$.
$A(\mathcal{M})$ and $B(\mathcal{M})$ represent the elements in $Y$ and $Y^c$ that are involved in the matching $\mathcal{M}$, respectively.

We then show that the above expression can be computed by using the polynomial method introduced in Sec.~\ref{sec:poly}.
More precisely, we show that Eq.~\eqref{eq:G3} can be written as a loop hafnian of a matrix $K$ which has a low-rank structure.
After that, since the matrix $C$ has a low rank, which turns out to be relevant to the low-rank structure of the matrix $K$, we invoke the following lemma, which states that a loop hafnian of a matrix with a low-rank structure is easy to compute~\cite{oh2024quantum}:
\begin{lemma}
    [Loop hafnian of a matrix with low-rank structure~\cite{bjorklund2019faster, oh2024quantum}] 
    Let $\Sigma$ be an $n\times n$ symmetric matrix with rank $r$ and $A$ be a matrix constructed by filling $\Sigma$'s diagonal matrix with arbitrary elements.
    Then the loop hafnian of the matrix $A$ can be computed in $O(n\binom{2n+r-1}{r-1})$. Thus, when $r$ is fixed, the cost is polynomial in $n$ and when $r=O(\log n)$, the cost is quasipolynomial in $n$.
\end{lemma}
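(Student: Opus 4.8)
The plan is to prove the lemma by the polynomial (generating-function) method, in the spirit of Barvinok's and Gurvits's treatment of low-rank permanents~\cite{barvinok1996two, ivanov2020complexity}. Write $A=\Sigma+\mathrm{diag}(\bm d)$, where $\bm d\in\mathbb{C}^n$ collects the arbitrary diagonal entries, so that $\Sigma$ carries all the off-diagonal structure and has rank $r$. The starting point is the elementary generating identity for the loop hafnian,
\begin{align}\label{eq:lhafgen}
    \lhaf(A)=[z_1z_2\cdots z_n]\,\exp\!\Big(\tfrac12\,\bm z^{\T}\Sigma\,\bm z+\bm d^{\T}\bm z\Big),
\end{align}
where $[z_1z_2\cdots z_n]$ extracts the coefficient of the squarefree monomial $z_1z_2\cdots z_n$: expanding the exponential, any contribution to this monomial uses each $z_i$ to the first power and hence corresponds to a decomposition of $\{1,\dots,n\}$ into pairs (each contributing a factor $\Sigma_{ij}z_iz_j=A_{ij}z_iz_j$) and singletons (each contributing a factor $d_iz_i=A_{ii}z_i$), the Taylor factorials exactly cancelling the overcounting of orderings. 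Observe that the diagonal of $\Sigma$ plays no role in~\eqref{eq:lhafgen}, since $\Sigma_{ii}z_i^2$ can never feed a squarefree monomial; this is precisely what allows the diagonal of $A$ to be arbitrary.

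First I would diagonalize $\Sigma$ by an Autonne--Takagi factorization, $\Sigma=\sum_{s=1}^{r}\bm c^{(s)}(\bm c^{(s)})^{\T}$ with $\bm c^{(s)}\in\mathbb{C}^n$ (obtained in $O(n^3)$ time from an SVD/eigendecomposition), so that $\tfrac12\bm z^{\T}\Sigma\bm z=\tfrac12\sum_{s=1}^{r}\ell_s(\bm z)^2$ with linear forms $\ell_s(\bm z)\equiv(\bm c^{(s)})^{\T}\bm z$. Substituting into~\eqref{eq:lhafgen}, factorizing the exponential, and retaining in $\prod_i\exp(d_iz_i)$ only the truncation $\prod_i(1+d_iz_i)$ (higher powers of $z_i$ cannot reach a squarefree monomial), the squarefree coefficient reduces to a sum---over which vertices are taken as singletons and over an assignment of each remaining vertex to one of the $r$ forms $\ell_s$---of a product that factorizes over the $n$ vertices. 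The crux, and the point where the rank-$r$ structure pays off, is that this sum is exactly a single monomial coefficient of the low-degree auxiliary polynomial
\begin{align}
    F(\bm y)\equiv\prod_{i=1}^{n}\Big(A_{ii}+\sum_{s=1}^{r}c_i^{(s)}y_s\Big)
\end{align}
in only $r$ auxiliary variables $y_1,\dots,y_r$; carrying the combinatorial factors through, one obtains
\begin{align}\label{eq:lhaffromF}
    \lhaf(A)=\sum_{\substack{k_1,\dots,k_r\ge 0\\ 2(k_1+\cdots+k_r)\le n}}\Big(\prod_{s=1}^{r}\frac{(2k_s)!}{2^{k_s}\,k_s!}\Big)\,\big[\,y_1^{2k_1}\cdots y_r^{2k_r}\,\big]F(\bm y).
\end{align}
Intuitively, the squarefree coefficient of $\prod_s\ell_s(\bm z)^{2k_s}$ is in general a permanent of an $n\times n$ matrix, but here that matrix has only $r$ distinct rows, and such permanents are precisely the monomial coefficients of a product of $n$ linear forms in $r$ variables, which is what $F$ is (up to the loop weights $A_{ii}$).

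It then remains to bound the cost. The polynomial $F(\bm y)$ is assembled by multiplying in its $n$ linear factors one at a time, maintaining the array of its coefficients---of which there are $O\!\big(\binom{2n+r-1}{r-1}\big)$---and updating it at each step; the final sum~\eqref{eq:lhaffromF} then merely reads off the stored coefficients. This gives the running time $O\!\big(n\binom{2n+r-1}{r-1}\big)$ stated in the lemma, which is $\poly(n)$ for fixed $r$ and $n^{O(\log n)}$ for $r=O(\log n)$. I expect the main obstacle to be essentially bookkeeping: establishing~\eqref{eq:lhaffromF} rigorously---pinning down the factorials and handling the singleton (loop) vertices correctly while collapsing the nested squarefree extraction into one coefficient of $F$---together with verifying that dropping $\mathrm{diag}(\Sigma)$ from the quadratic form and truncating $\exp(d_iz_i)\to1+d_iz_i$ are exact at the level of the squarefree coefficient. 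The remaining ingredients---the Takagi factorization and the iterated polynomial multiplication---are routine.
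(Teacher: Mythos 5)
Your proposal is correct and follows essentially the same route as the paper's Appendix on the low-rank loop hafnian: the same rank-$r$ decomposition $\Sigma=GG^{\T}$ (your Takagi vectors $\bm c^{(s)}$ are the columns of $G$), the same auxiliary polynomial $\prod_{i=1}^{n}\bigl(A_{ii}+\sum_{s=1}^{r}G_{is}y_s\bigr)$, the same even-degree coefficients weighted by $(2k_s)!/(2^{k_s}k_s!)=(2k_s-1)!!$, and the same $O\bigl(n\binom{2n+r-1}{r-1}\bigr)$ cost from iterated multiplication of the $n$ linear factors. The only (immaterial) difference is that you justify the key identity via the exponential generating function and squarefree-coefficient extraction, whereas the paper expands the loop hafnian combinatorially over matchings with loops.
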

Another observation to make this calculation easier is that the number of non-zero squeezing parameters is at most $2L$, so that the term $\langle \bm{r}|\cdots |0\rangle$ in Eq.~\eqref{eq:G3} can be efficiently computed using Lemma~\ref{lemma:low-mode}.
In short, we write Eq.~\eqref{eq:G3} as a polynomial of indeterminates, the number of which is at most $(Mn_\text{max}+1)^{16L}$, which makes it easy to compute when $L=O(1)$.
We note that while the degree of the polynomial is quite large at this moment, we expect that it may be reduced, which we leave open.
Also, if we restrict our attention to linear-optical circuits, the degree of the polynomial reduces to $(Mn_\text{max}+1)^{2L}$ (see Appendix~\ref{app:gurvits_linear}).

\section{Conclusion}\label{sec:conclusion}
In this work, we presented classical algorithms for solving the quantum mean-value problem in Gaussian circuits, as well as in Gaussian circuits augmented by (i) photon-number-resolving measurements with Gaussian feedforward and (ii) Gaussian measurements with Gaussian feedforward.
To support these results, we also developed algorithms for computing amplitudes of low-mode states and marginal probabilities of number-basis measurements, which may be of independent interest and connect to Gurvits' second algorithm.

Our results reveal several conceptual insights.
First, low-mode and low-rank structures play a central role in making certain bosonic computations classically tractable.
Second, we highlight an important contrast between sampling problems and the quantum mean-value problems: while non-Gaussian resources typically induce classical hardness in sampling, mean-value estimation can remain efficiently simulatable even in the presence of highly non-Gaussian inputs, provided the number of adaptive measurements is bounded.
This distinction illustrates a richer landscape of classical–quantum complexity for bosonic systems.

Taken together, these findings not only provide concrete classical algorithms but also clarify where the boundaries of quantum advantage may lie in Gaussian and non-Gaussian regimes. We anticipate that these tools and perspectives will aid future studies of bosonic circuits, both as a benchmark for quantum experiments and as a stepping stone toward understanding the computational power of non-Gaussian resources.

On the other hand, since our results emphasize the role of adaptive measurements in the quantum mean-value problem, it is natural to ask what role adaptive measurements play in the corresponding sampling tasks. In our setting, increasing the amount of adaptivity leads to an easy-to-hard transition in the complexity of the mean-value problem, whereas in the non-adaptive regime sampling is already believed to be hard due to boson sampling. It is therefore unclear whether, and in what sense, additional rounds of adaptive (or non-Gaussian) operations change the complexity of the sampling problem. There are related studies that add non-linear operations to boson sampling~\cite{spagnolo2023non}, but neither our results nor these works currently provide a satisfactory complexity-theoretic understanding of how sampling interpolates between non-universal boson sampling and universal photonic quantum computation.

While we generalized the classical algorithm for quantum mean-value problem in Ref.~\cite{lim2025efficient}, the latter provides another classical algorithm that computes the transition amplitude of a quantum state under linear-optical circuits.
We believe that it is an interesting open question to extend this algorithm to the adaptive setting due to its close relation to various applications of bosonic circuits (e.g., quantum kernel methods~\cite{chabaud2021quantum, hoch2025quantum}).

Finally, we note that our generalized Gurvits' second algorithm takes polynomial time in the number of modes $M$ when the number of measurements for feedforward is at most constant, i.e., $L=O(1)$, and the degree of the polynomial is not high for linear-optical circuits, as discussed above.
Hence, our algorithm is expected to be practically efficient to simulate recent experiments, e.g., the one in Ref.~\cite{hoch2025quantum}.
On the other hand, while our generalized Gurvits' second algorithm for general Gaussian circuits also takes polynomial time when $L=O(1)$, the degree of the polynomial is quite high, which may make it less practical.
Hence, improving the practical running time is an important open question.
More importantly, whether our algorithm can be improved further to be efficient even when $L$ is logarithmic in the system size is another open question.

\begin{acknowledgements}
%We thank  for interesting and fruitful discussions.
C.O. and Y.L. were supported by the National
Research Foundation of Korea Grants (No. RS-2023-NR119931, No. RS-2024-00431768 and No. RS-2025-00515456) funded by the Korean government (Ministry of Science and ICT~(MSIT)) and the Institute of Information \& Communications Technology Planning \& Evaluation (IITP) Grants funded by the Korea government (MSIT) (No. RS-2024-00437284, No. IITP-2025-RS-2025-02283189 and No. IITP-2025-RS-2025-02263264).
\end{acknowledgements}

\bibliography{reference.bib}

\onecolumngrid

%\begin{widetext}
\appendix

%\newpage

\iffalse
\section{Preliminary}\label{app:pre}
\subsection{Gaussian states and operators}
A single-mode displacement operator is written as
\begin{align}
    \hat{D}(\alpha)=e^{\alpha\hat{a}^\dagger-\alpha^*\hat{a}},
\end{align}
where $\alpha\in\mathbb{C}$ and $\hat{a}$ and $\hat{a}^\dagger$ are bosonic annihilation and creation operators.
One may easily see that $\hat{D}(0)=\mathbb{1}$.
An $M$-mode displacement operator can be constructed by the tensor product of $M$ single-mode displacement operators
\begin{align}
    \hat{D}(\bm{\alpha})
    \equiv \bigotimes_{i=1}^M \hat{D}(\alpha_i),
\end{align}
where $\bm{\alpha}\equiv (\alpha_1,\dots,\alpha_M)\in\mathbb{C}^M$.

A single-mode phase shifter operator is written as
\begin{align}
    \hat{P}(\phi)=e^{i\hat{a}^\dagger\hat{a}\phi}.
\end{align}
Again, one may easily see that $\hat{P}(0)=\mathbb{1}$.
An $M$-mode phase shifter operator is written as
\begin{align}
    \hat{P}(\bm{\varphi})=\bigotimes_{i=1}^M e^{i\hat{a}^\dagger\hat{a}\varphi_i},
\end{align}
where $\bm{\varphi}=(\varphi_1,\dots,\varphi_M)\in \mathbb{R}^M$.
\fi

\section{Quantum mean-value problem proofs}\label{app:mean-value}
In this Appendix, we provide detailed proofs for classical algorithms for quantum mean-value problems.

\subsection{Quantum mean-value problem in Gaussian circuits without feedforward}\label{app:gaussian}
We provide the full proof of Theorem~\ref{thm:gaussian}, which is restated here:
\begin{theorem}[Theorem~\ref{thm:gaussian} restated]
    Consider an $M$-mode Gaussian circuit $\hat{G}$, an arbitrary product input state $|\psi\rangle$, and an $M$-mode product operator $\hat{O}=\hat{O}_A\otimes \hat{\mathbb{1}}_B$.  
    The expectation value $\langle \psi|\hat{G}^\dagger \hat{O}\hat{G}|\psi\rangle$ can be approximated within additive error $\epsilon$ with probability $1-\delta$ in running time
    \begin{align}
        O\left(\frac{M^2\|\hat{O}_A\|_2^2\Tr[\hat{\rho}_A^2]\log(1/\delta)}{\epsilon^2}\right),
    \end{align}
    where $\hat{\rho}_A\equiv \Tr_B[\hat{G}|\psi\rangle\langle \psi|\hat{G}^\dagger]$ is the reduced density matrix on the system $A$.
\end{theorem}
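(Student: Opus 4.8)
The plan is to reuse the Monte Carlo strategy of Ref.~\cite{lim2025efficient}, the only new ingredient being that a general Gaussian unitary still conjugates displacement operators to displacement operators, Eq.~\eqref{eq:G_dis}, so the same estimator carries over with a symplectic change of variables in place of a unitary one. First I would expand $\hat{O}_A$ in the displacement-operator basis using Eq.~\eqref{eq:expansion} and write
\begin{align}
    \langle\psi|\hat{G}^\dagger(\hat{O}_A\otimes\hat{\mathbb{1}}_B)\hat{G}|\psi\rangle
    =\Tr[\hat{\rho}_A\hat{O}_A]
    =\frac{1}{\pi^l}\int d^{2l}\bm{\alpha}_A\,\chi_{\hat{O}_A}(\bm{\alpha}_A)\,\chi^*_{\hat{\rho}_A}(\bm{\alpha}_A),
\end{align}
and then introduce $p(\bm{\alpha}_A)=|\chi_{\hat{O}_A}(\bm{\alpha}_A)|^2/(\pi^l\|\hat{O}_A\|_2^2)$ and $X(\bm{\alpha}_A)=\chi^*_{\hat{\rho}_A}(\bm{\alpha}_A)\|\hat{O}_A\|_2^2/\chi^*_{\hat{O}_A}(\bm{\alpha}_A)$, so that the target equals $\E_{\bm{\alpha}_A\sim p}[X]$. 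That $p$ is a normalized density is the Parseval identity $\pi^{-l}\int|\chi_{\hat{O}_A}|^2=\|\hat{O}_A\|_2^2$, a direct consequence of the twirling relation Eq.~\eqref{eq:twirl}.

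Next I would bound the second moment by the same Parseval identity applied to $\hat{\rho}_A$,
\begin{align}
    \E_{\bm{\alpha}_A\sim p}[|X|^2]
    =\frac{\|\hat{O}_A\|_2^2}{\pi^l}\int d^{2l}\bm{\alpha}_A\,|\chi_{\hat{\rho}_A}(\bm{\alpha}_A)|^2
    =\|\hat{O}_A\|_2^2\,\Tr[\hat{\rho}_A^2],
\end{align}
so $\V[X]\le\|\hat{O}_A\|_2^2\Tr[\hat{\rho}_A^2]$. Then I would verify the two algorithmic subroutines. Since $\hat{O}_A=\prod_{i\in A}\hat{O}_i$, the density $p$ factorizes into single-mode distributions $|\chi_{\hat{O}_i}(\alpha_i)|^2/(\pi\|\hat{O}_i\|_2^2)$, each sampled directly. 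To evaluate $X$, I would use Eq.~\eqref{eq:G_dis}: because $\hat{G}$ is Gaussian, $\hat{G}^\dagger(\hat{D}^\dagger(\bm{\alpha}_A)\otimes\hat{D}^\dagger(\bm{0}_B))\hat{G}=\hat{D}^\dagger(\bm{\alpha}')$ with $\bm{\alpha}'$ the image of $\bm{\alpha}_A$ under the symplectic transformation associated with $\hat{G}$ (up to an $\bm{\alpha}$-dependent global phase if $\hat{G}$ contains displacements), computable in $O(M^2)$; hence $\chi^*_{\hat{\rho}_A}(\bm{\alpha}_A)=\prod_{i=1}^M\chi^*_{\psi_i}(\alpha'_i)$, with each single-mode factor obtained from the Fock coefficients of $|\psi_i\rangle$ in $O(\poly(n_\text{max}))$ time. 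This reproduces the product form Eq.~\eqref{eq:prod} and gives $O(M^2)$ cost per evaluation of $X$.

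Finally I would invoke the median-of-means estimator Eq.~\eqref{eq:mom} with the variance bound above: taking $N=O(\|\hat{O}_A\|_2^2\Tr[\hat{\rho}_A^2]\epsilon^{-2}\log(1/\delta))$ samples, each of cost $O(M^2)$ (including sampling $\bm{\alpha}_A$, applying the $M\times M$ symplectic map, and evaluating the single-mode characteristic functions), yields the claimed running time. I do not expect a deep obstacle here; the step that most needs care is checking that Eq.~\eqref{eq:G_dis} persists for a \emph{general} Gaussian $\hat{G}$ with a symplectic (rather than unitary) transformation and a possible unit-modulus phase, so that $\chi_{\hat{\rho}_A}$ still decomposes as a product of single-mode characteristic functions of $|\psi\rangle$ evaluated at the transformed argument, and that the two Parseval identities used for normalization and for the variance bound both follow cleanly from the twirling relation Eq.~\eqref{eq:twirl}.
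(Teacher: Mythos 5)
Your proposal is correct and follows essentially the same route as the paper's proof in Appendix~\ref{app:gaussian}: the same displacement-operator expansion, the same choice of $p(\bm{\alpha}_A)$ and $X(\bm{\alpha}_A)$, the same Parseval-type variance bound $\|\hat{O}_A\|_2^2\Tr[\hat{\rho}_A^2]$, and the same $O(M^2)$-per-sample evaluation via Eq.~\eqref{eq:G_dis} followed by the median-of-means estimator. Your explicit remark about the possible $\bm{\alpha}$-dependent phase when $\hat{G}$ contains displacements is a careful touch that the paper's statement of Eq.~\eqref{eq:G_dis} glosses over, but it does not change the argument.
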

\begin{proof}
Let us consider a general Gaussian unitary circuit $\hat{G}$ and a general product input state $|\psi\rangle$.
Then, the expectation value of a product observable $\hat{O}=\hat{O}_A\otimes\hat{\mathbb{1}}_B$ can be rewritten as
\begin{align}
    \langle \psi|\hat{G}^\dagger (\hat{O}_A\otimes \hat{\mathbb{1}}_B) \hat{G}|\psi\rangle 
    &=\frac{1}{\pi^l}\int d^{2l}\bm{\alpha}_A\Tr[\hat{O}_A \hat{D}(\bm{\alpha}_A)]\Tr[(\hat{D}^\dagger(\bm{\alpha}_A)\otimes \hat{\mathbb{1}}_B)\hat{G}|\psi\rangle\langle \psi|\hat{G}^\dagger] \\ 
    &=\frac{1}{\pi^l}\int d^{2l}\bm{\alpha}_A\Tr[\hat{O}_A \hat{D}(\bm{\alpha}_A)]\Tr[\hat{D}^\dagger(\bm{\alpha}_A)\hat{\rho}_A] \\ 
    &=\frac{1}{\pi^l}\int d^{2l}\bm{\alpha}_A\chi_{\hat{O}_A}(\bm{\alpha}_A)\chi_{\hat{\rho}_A}^*(\bm{\alpha}_A) \\ 
    &=\int d^{2l}\bm{\alpha}_A \frac{|\chi_{\hat{O}_A}(\bm{\alpha}_A)|^2}{\pi^l\|\hat{O}_A\|_2^2}\frac{\chi_{\hat{\rho}_A}^*(\bm{\alpha}_A)\|\hat{O}_A\|_2^2}{\chi_{\hat{O}_A}^*(\bm{\alpha})_A} \\ 
    &\equiv \int d^{2l}\bm{\alpha}_A p(\bm{\alpha}_A)X(\bm{\alpha}_A),
\end{align}
where we defined the reduced density matrix, a probability distribution, and a random variable
\begin{align}
    \hat{\rho}_A\equiv \Tr_B[\hat{G}|\psi\rangle\langle \psi|\hat{G}^\dagger],~~~
    p(\bm{\alpha}_A)=\frac{\left|\chi_{\hat{O}_A}(\bm{\alpha}_A)\right|^2}{\pi^l\|\hat{O}_A\|_2^2},~~~
    X(\bm{\alpha}_A)=\frac{\chi^*_{\hat{\rho}_A}(\bm{\alpha}_A) \|\hat{O}_A\|_2^2}{\chi^*_{\hat{O}_A}(\bm{\alpha}_A)}.
\end{align}

Here, using the property of the displacement operator, we have
\begin{align}
    \frac{1}{\pi^{l}}\int d^{2l}\bm{\alpha}_A|\chi_{\hat{O}_A}(\bm{\alpha}_A)|^2
    =\Tr[\hat{O}_A^\dagger\hat{O}_A]
    =\|\hat{O}_A\|_2^2,
\end{align}
which shows that $p(\bm{\alpha}_A)$ is a proper probability distribution.
Thus, the above expression allows us to use the median-of-means estimator for estimating the mean value.

We now analyze the complexity.
First of all, sampling $\bm{\alpha}_A$ from $p(\bm{\alpha}_A)$ is easy due to the product structure of $\hat{O}_A$:
\begin{align}\label{eq:chi_O}
    p(\bm{\alpha}_A)
    =\frac{\left|\chi_{\hat{O}_A}(\bm{\alpha}_A)\right|^2}{\pi^l\|\hat{O}_A\|_2^2}
    =\prod_{i\in A} \frac{\left|\chi_{\hat{O}_i}(\alpha_i)\right|^2}{\pi\|\hat{O}_i\|_2^2},
\end{align}
which takes $O(l)$ time to compute.
Also, the random variable $X(\bm{\alpha}_A)$ is easy to compute because $\|\hat{O}_A\|_2^2$ is easy to compute due to the product structure and $\chi_{\hat{O}_A}^*(\bm{\alpha}_A)$ is easy to compute as shown above and 
\begin{align}
    \chi^*_{\hat{\rho}_A}(\bm{\alpha}_A)
    =\Tr[\hat{D}^\dagger(\bm{\alpha}_A)\hat{\rho}_A]
    =\Tr[(\hat{D}^\dagger(\bm{\alpha}_A)\otimes \hat{\mathbb{1}}_B)\hat{G}|\psi\rangle\langle \psi|\hat{G}^\dagger]
    =\langle \psi|\hat{D}^\dagger(\bm{\alpha}')|\psi\rangle
    =\prod_{i=1}^M \langle \psi_i|\hat{D}^\dagger(\alpha'_i)|\psi_i\rangle,
\end{align}
where we used the property of displacement operators and Gaussian unitary circuits,
\begin{align}
    \hat{D}^\dagger(\bm{\alpha}')
    =\hat{G}^\dagger (\hat{D}^\dagger(\bm{\alpha}_A)\otimes \hat{\mathbb{1}}_B)\hat{G}.
\end{align}
Here, $\bm{\alpha}'$ has a linear relation with $\bm{\alpha}_A$ so that computing $\bm{\alpha}'$ takes at most $O(lM)$ time.
Therefore, the random variable can be rewritten as
\begin{align}
    X(\bm{\alpha_A})
    &=\frac{\chi^*_{\hat{\rho}_A}(\bm{\alpha}_A) \|\hat{O}_A\|_2^2}{\chi^*_{\hat{O}_A}(\bm{\alpha}_A)}
    =\chi^*_{\hat{\rho}_A}(\bm{\alpha}_A)\prod_{i\in A} \frac{\|\hat{O}_i\|_2^2}{\chi^*_{\hat{O}_i}(\alpha_i)}  
    =\Tr[\hat{G}|\psi\rangle\langle\psi|\hat{G}^\dagger \hat{D}(\bm{\alpha}_A)]^*\prod_{i\in A} \frac{\|\hat{O}_i\|_2^2}{\chi^*_{\hat{O}_i}(\alpha_i)}  \\
    &=\Tr[|\psi\rangle\langle\psi|\hat{D}(\bm{\alpha}_A')]^*\prod_{i\in A} \frac{\|\hat{O}_i\|_2^2}{\chi^*_{\hat{O}_i}(\alpha_i)} 
    =\prod_{i=1}^M \chi^*_{\psi_i}(\alpha'_i)\prod_{i\in A} \frac{\|\hat{O}_i\|_2^2}{\chi^*_{\hat{O}_i}(\alpha_i)},
\end{align}
and computing this takes at most $O(l+M)$ cost.

Finally, we find the upper bound on the variance of the random variable:
\begin{align}
    \text{Var}(X)&\leq \int d^{2l}\bm{\alpha}_A p(\bm{\alpha}_A)|X(\bm{\alpha}_A)|^2
    =\int d^{2l}\bm{\alpha}_A \frac{|\chi_{\hat{O}_A}(\bm{\alpha}_A)|^2}{\pi^l\|\hat{O}_A\|_2^2}\frac{|\chi_{\psi}(\bm{\alpha}')|^2\|\hat{O}_A\|_2^4}{|\chi_{\hat{O}_A}(\bm{\alpha}_A)|^2}
    =\|\hat{O}_A\|_2^2\int d^{2l}\bm{\alpha}_A\frac{|\chi_{\psi}(\bm{\alpha}')|^2}{\pi^l} \\
    &=\|\hat{O}_A\|_2^2\|\hat{\rho}_A\|_2^2,
\end{align}
where, for the last equality, we used
\begin{align}
    \int d^{2l}\bm{\alpha}_A\frac{|\chi_{\psi}(\bm{\alpha}')|^2}{\pi^l}
    =\int d^{2l}\bm{\alpha}_A\frac{|\langle \psi|\hat{G}^\dagger(\hat{D}^\dagger(\bm{\alpha}_A)\otimes \hat{\mathbb{1}}_B) \hat{G}|\psi\rangle|^2}{\pi^l}
    =\int d^{2l}\bm{\alpha}_A\frac{|\chi_{\hat{\rho}_A}(\bm{\alpha}_A)|^2}{\pi^l}
    =\|\hat{\rho}_A\|_2^2.
\end{align}
Hence, the median-of-means estimator can estimate the mean value with the sampling complexity $O(\|\hat{O}_A\|_2^2\|\hat{\rho}_A\|_2^2\log(1/\delta)/\epsilon^2)$.
Thus, combining all the complexities, the total running time is given as $O(Ml\|\hat{O}_A\|_2^2\|\hat{\rho}_A\|_2^2\log(1/\delta)/\epsilon^2)$.
\end{proof}

\subsection{Quantum mean-value problem in Gaussian circuits with photon-number measurements and Gaussian feedforward}\label{app:gaussian_photon}
We provide the full proof of Theorem~\ref{thm:gaussian_photon}, which is restated here:
\begin{theorem}[Theorem~\ref{thm:gaussian_photon} restated]
    Consider an arbitrary product input state and a Gaussian circuit, augmented by photon-number measurements and Gaussian feedforward, which outputs an $M$-mode quantum state.
    If the number of photon-number measurements used for feedforward is $L=O(1)$, then the expectation value of an $M$-mode observable $\hat{O}=\hat{O}_A\otimes \hat{\mathbb{1}}_B$ can be approximated within additive error $\epsilon$ and success probability $1-\delta$ in running time
    \begin{align}
        O\left(\frac{\poly(M)\|\hat{O}_A\|_2^2 \mathbb{E}_{\bm{n}}[\Tr(\hat{\rho}_{\bm{n}}^2)] \log(1/\delta)}{\epsilon^2}\right),    
    \end{align}
    where $\hat{\rho}_{\bm{n}}$ is the reduced density matrix on the system $A$ when the photon-number measurement outcome is $\bm{n}$ and $\mathbb{E}_{\bm{n}}[\Tr(\hat{\rho}_{\bm{n}}^2)]$ is the average purity.
\end{theorem}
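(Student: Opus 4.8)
The plan is to mirror the proof of Theorem~\ref{thm:gaussian}, now incorporating the mid-circuit photon-number measurements and the conditional Gaussian feedforward. First I would write the full expectation value as a sum over the measurement records $\bm{n}$. Since the conditional expectation given outcome $\bm{n}$ equals $\langle\psi|\hat{G}_{\bm{n}}^\dagger(\hat{O}_A\otimes\hat{\mathbb{1}}_B\otimes|\bm{n}\rangle\langle\bm{n}|)\hat{G}_{\bm{n}}|\psi\rangle/p(\bm{n})$, averaging against $p(\bm{n})$ cancels the denominator and yields $\langle\hat{O}\rangle=\sum_{\bm{n}}\langle\psi|\hat{G}_{\bm{n}}^\dagger(\hat{O}_A\otimes\hat{\mathbb{1}}_B\otimes|\bm{n}\rangle\langle\bm{n}|)\hat{G}_{\bm{n}}|\psi\rangle$. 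Expanding $\hat{O}_A$ in displacement operators via Eq.~\eqref{eq:expansion} and normalizing with the twirling identity Eq.~\eqref{eq:twirl}, I would rewrite this as $\langle\hat{O}\rangle=\int d^{2l}\bm{\alpha}_A\sum_{\bm{n}}p(\bm{n})\,q(\bm{\alpha}_A)\,X(\bm{n},\bm{\alpha}_A)$, with $q$ the product density over the $l$ modes of $A$ and $X$ the random variable of the proof sketch, so that $\langle\hat{O}\rangle$ becomes the mean of $X$ under the joint distribution $p(\bm{n})q(\bm{\alpha}_A)$ and the median-of-means estimator Eq.~\eqref{eq:mom} applies.

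Next I would bound the variance of $X$. Writing $|\tilde{\psi}_{\bm{n}}\rangle=\langle\bm{n}|\psi_{\bm{n}}\rangle$ for the unnormalized post-measurement state on the computing register and $\hat{\rho}_{\bm{n}}$ for its normalized reduced state on $A$, one has $\langle\psi_{\bm{n}}|(\hat{D}^\dagger(\bm{\alpha}_A)\otimes\hat{\mathbb{1}}_B\otimes|\bm{n}\rangle\langle\bm{n}|)|\psi_{\bm{n}}\rangle=p(\bm{n})\,\chi^*_{\hat{\rho}_{\bm{n}}}(\bm{\alpha}_A)$, hence $X(\bm{n},\bm{\alpha}_A)=\|\hat{O}_A\|_2^2\,\chi^*_{\hat{\rho}_{\bm{n}}}(\bm{\alpha}_A)/\chi^*_{\hat{O}_A}(\bm{\alpha}_A)$. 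The factor $|\chi_{\hat{O}_A}(\bm{\alpha}_A)|^2$ from $q$ then cancels the denominator of $|X|^2$, and the Plancherel-type identity $\pi^{-l}\int d^{2l}\bm{\alpha}_A|\chi_{\hat{\rho}_{\bm{n}}}(\bm{\alpha}_A)|^2=\Tr[\hat{\rho}_{\bm{n}}^2]$ collapses the $\bm{\alpha}_A$-integral, giving $\mathbb{E}[|X|^2]=\sum_{\bm{n}}p(\bm{n})\|\hat{O}_A\|_2^2\Tr[\hat{\rho}_{\bm{n}}^2]=\|\hat{O}_A\|_2^2\,\mathbb{E}_{\bm{n}}[\Tr(\hat{\rho}_{\bm{n}}^2)]$, which produces the claimed sample complexity $O(\|\hat{O}_A\|_2^2\,\mathbb{E}_{\bm{n}}[\Tr(\hat{\rho}_{\bm{n}}^2)]\log(1/\delta)/\epsilon^2)$.

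It then remains to show that each sample can be drawn and evaluated in $\poly(M)$ time when $L=O(1)$. Drawing $\bm{\alpha}_A\sim q$ is immediate from the product form in Eq.~\eqref{eq:prod}. Drawing $\bm{n}\sim p(\bm{n})$ proceeds by the chain rule Eq.~\eqref{eq:sampling}; each needed marginal $p(n_1,\dots,n_k)$ is exactly $\langle\psi|\hat{G}_{\bm{n}}^\dagger(\hat{\mathbb{1}}\otimes|n_1,\dots,n_k\rangle\langle n_1,\dots,n_k|\otimes\hat{\mathbb{1}})\hat{G}_{\bm{n}}|\psi\rangle$, i.e., a photon-number marginal of a Gaussian circuit acting on an arbitrary product input, which Lemma~\ref{lemma:main} computes exactly in time polynomial in $M$ and $n_\text{max}$ for $L=O(1)$. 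For $X(\bm{n},\bm{\alpha}_A)$ I would push the displacement $\hat{D}^\dagger(\bm{\alpha}_A)$ on the $A$ modes through the Gaussian feedforward using Eq.~\eqref{eq:G_dis} and absorb it into the input, which remains a (possibly non-Gaussian) product state after the photon-number truncation; the resulting overlap again has the form of Lemma~\ref{lemma:main}, but now with two distinct product states $|\psi\rangle$ and $\hat{D}(\bm{\gamma})|\psi\rangle$, which is precisely the generalization that lemma provides. Multiplying the $\poly(M)$ per-sample cost by the sample-complexity bound then yields the stated running time.

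The main obstacle is the marginal-probability subroutine. For a Gaussian input it reduces to a loop hafnian of size $L$ and is easy for $L=O(\log M)$, but for a general non-Gaussian product input it is not a loop hafnian, so the whole efficiency argument rests on the generalized Gurvits's second algorithm of Lemma~\ref{lemma:MVS}: one expresses the marginal through the Fourier/generating-function relation, invokes the low-rank Bloch-Messiah decomposition of $\hat{G}_0^\dagger\hat{P}(\bm{\varphi})\hat{G}_0$ (Lemma~\ref{lemma:low-rank}) together with the low-mode amplitude evaluation (Lemma~\ref{lemma:low-mode}), and the number-of-monomials bookkeeping in the polynomial method is what keeps the cost polynomial for $L=O(1)$. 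Everything else is a direct adaptation of the feedforward-free case, with the extra $\bm{n}$-sum handled by median-of-means sampling over the joint distribution.
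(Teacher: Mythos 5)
Your proposal follows essentially the same route as the paper's own proof in Appendix~\ref{app:gaussian_photon}: the same reformulation of $\langle\hat{O}\rangle$ as the mean of $X(\bm{n},\bm{\alpha}_A)$ under $p(\bm{n})q(\bm{\alpha}_A)$, the same second-moment bound $\|\hat{O}_A\|_2^2\,\mathbb{E}_{\bm{n}}[\Tr(\hat{\rho}_{\bm{n}}^2)]$ via the Plancherel-type identity, and the same use of Lemma~\ref{lemma:main} (with distinct product states after absorbing the displacement through the feedforward unitary) for sampling $\bm{n}$ and evaluating $X$. The argument is correct as stated.
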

\begin{proof}
Consider a general product input $|\psi\rangle$.
First of all, when the measurement outcomes are $\bm{n}$, the corresponding post-measurement output state is (up to normalization) (see Sec.~\ref{sec:adaptive})
\begin{align}
    \langle \bm{n}|\hat{G}_{\bm{n}}|\psi\rangle,
\end{align}
with probability
\begin{align}
    p(\bm{n})
    =\langle \psi|\hat{G}_{\bm{n}}^\dagger(\hat{\mathbb{1}}_M\otimes|\bm{n}\rangle \langle \bm{n}|)\hat{G}_{\bm{n}}|\psi\rangle
    =\langle \psi_{\bm{n}}|(\hat{\mathbb{1}}_M\otimes|\bm{n}\rangle \langle \bm{n}|)|\psi_{\bm{n}}\rangle.
\end{align}
Thus, the expectation value when the outcome is $\bm{n}$ is
\begin{align}
    \frac{\langle \psi_{\bm{n}}|(\hat{O}_A\otimes \hat{\mathbb{1}}_B\otimes|\bm{n}\rangle \langle \bm{n}|)|\psi_{\bm{n}}\rangle}{\langle \psi_{\bm{n}}|(\hat{\mathbb{1}}_{A}\otimes\hat{\mathbb{1}}_{B}\otimes|\bm{n}\rangle \langle \bm{n}|)|\psi_{\bm{n}}\rangle}.
\end{align}

Since the measurement $\bm{n}$ is obtained with probability $p(\bm{n})$, we take the average over all possible outcomes in the middle:
\begin{align}
    \langle\hat{O}\rangle 
    &=\sum_{\bm{n}}\langle \psi_{\bm{n}}|(\hat{O}_A\otimes\hat{\mathbb{1}}_B\otimes |\bm{n}\rangle\langle \bm{n}|)|\psi_{\bm{n}}\rangle \\
    &=\frac{1}{\pi^l}\int d^{2l}\bm{\alpha}\sum_{\bm{n}}\langle \psi_{\bm{n}}|(\hat{D}^\dagger(\bm{\alpha}_A)\otimes\hat{\mathbb{1}}_B\otimes |\bm{n}\rangle\langle \bm{n}|)|\psi_{\bm{n}}\rangle \chi_{\hat{O}_A}(\bm{\alpha}_A) \\ 
    &=\int d^{2l}\bm{\alpha}_A\sum_{\bm{n}}p(\bm{n})q(\bm{\alpha}_A)\frac{\|\hat{O}_A\|_2^2\langle \psi_{\bm{n}}|(\hat{D}^\dagger(\bm{\alpha}_A)\otimes\hat{\mathbb{1}}_B\otimes |\bm{n}\rangle\langle \bm{n}|)|\psi_{\bm{n}}\rangle}{p(\bm{n})\chi^*_{\hat{O}_A}(\bm{\alpha}_A)} \\ 
    &=\int d^{2l}\bm{\alpha}_A\sum_{\bm{n}}p(\bm{n})q(\bm{\alpha}_A)X(\bm{n},\bm{\alpha}_A),
\end{align}
where we defined the probability distribution and the random variable
\begin{align}
    q(\bm{\alpha}_A)\equiv \frac{\left|\chi_{\hat{O}_A}(\bm{\alpha}_A)\right|^2}{\pi^l\|\hat{O}_A\|_2^2},\quad
    X(\bm{n},\bm{\alpha}_A)\equiv \frac{\|\hat{O}_A\|_2^2\langle \psi_{\bm{n}}|(\hat{D}^\dagger(\bm{\alpha}_A)\otimes\hat{\mathbb{1}}_B\otimes |\bm{n}\rangle\langle \bm{n}|)|\psi_{\bm{n}}\rangle}{p(\bm{n})\chi^*_{\hat{O}_A}(\bm{\alpha}_A)}.
\end{align}

Now, we analyze the complexity.
First, to upper-bound the error of the estimator, we compute the second moment:
\begin{align}
    &\int d^{2l}\bm{\alpha}_A\sum_{\bm{n}}p(\bm{n})q(\bm{\alpha}_A)\frac{\|\hat{O}_A\|_2^4|\langle \psi_{\bm{n}}|(\hat{D}^\dagger(\bm{\alpha}_A)\otimes \hat{\mathbb{1}}_B\otimes |\bm{n}\rangle\langle \bm{n}|)|\psi_{\bm{n}}\rangle|^2}{p(\bm{n})^2\left|\chi_{\hat{O}_A}(\bm{\alpha}_A)\right|^2} \\ 
    &=\frac{\|\hat{O}_A\|_2^2}{\pi^l}\int d^{2l}\bm{\alpha}_A\sum_{\bm{n}}\frac{|\langle \psi_{\bm{n}}|(\hat{D}^\dagger(\bm{\alpha}_A)\otimes\hat{\mathbb{1}}_B\otimes |\bm{n}\rangle\langle \bm{n}|)|\psi_{\bm{n}}\rangle|^2}{p(\bm{n})} \\ 
    &=\frac{\|\hat{O}_A\|_2^2}{\pi^l}\int d^{2l}\bm{\alpha}_A\sum_{\bm{n}}p(\bm{n})|\chi_{\hat{\rho}_{\bm{n}}}(\bm{\alpha}_A)|^2 \\ 
    &=\|\hat{O}_A\|_2^2\sum_{\bm{n}}p(\bm{n}) \Tr[\hat{\rho}_{\bm{n}}^2],
\end{align}
where $\hat{\rho}_{\bm{n}}$ is the reduced density matrix on the system $A$ when the measurement outcome is $\bm{n}$.
Thus, the variance is upper-bounded by the Hilbert-Schmidt norm of the operator $\hat{O}_A$ multiplied by the average purity of the reduced state on the observable's Hilbert space.
Hence, the sample complexity for the median-of-means estimator is given as $O(\|\hat{O}_A\|_2^2\mathbb{E}_{\bm{n}}[\text{Tr}[\hat{\rho}_{\bm{n}}^2]]\log(1/\delta)/\epsilon^2)$.

Meanwhile, we need to be able to sample from $p(\bm{\beta})$ and $q(\bm{\alpha}_A)$.
Sampling from $q(\bm{\alpha}_A)$ can be efficiently done because of the product structure:
\begin{align}
    q(\bm{\alpha}_A)
    =\frac{\left|\chi_{\hat{O}_A}(\bm{\alpha}_A)\right|^2}{\pi^l\|\hat{O}_A\|_2^2}
    =\prod_{i\in A}\frac{|\chi_{\hat{O}_i}(\alpha_i)|^2}{\pi \|\hat{O}_i\|_2^2},
\end{align}
which takes at most $O(l)$.
Also, $p(\bm{n})$ can be easily sampled if the number of modes in $|\bm{n}\rangle$ is not too large by Lemma~\ref{lemma:main}.
More specifically, when the number of modes in $|\bm{n}\rangle$ is $L=O(1)$, sampling from $p(\bm{n})$ takes at most polynomial in $M$.

Also, in computing the random variable $X(\bm{n},\bm{\alpha}_A)$, $\|\hat{O}_A\|_2^2=\prod_{i\in A}\|\hat{O}_i\|_2^2$ is easy to compute, and $p(\bm{n})$ is the same as the above.
Also, $\chi_{\hat{O}_A}(\bm{\alpha}_A)$ is easy to compute due to the product structure of $\hat{O}_A$~(see Eq.~\eqref{eq:chi_O}).
Finally,
\begin{align}
    \langle \psi_{\bm{n}}|(\hat{D}^\dagger(\bm{\alpha}_A)\otimes \hat{\mathbb{1}}_B\otimes |\bm{n}\rangle\langle \bm{n}|)|\psi_{\bm{n}}\rangle
    &=\langle \psi|\hat{G}^\dagger_{\bm{n}}(\hat{D}^\dagger(\bm{\alpha}_A)\otimes \hat{\mathbb{1}}_B\otimes|\bm{n}\rangle\langle \bm{n}|)\hat{G}_{\bm{n}}|\psi\rangle
    =\langle \psi|\hat{D}^\dagger(\bm{\alpha}')\hat{G}^\dagger_{\bm{n}}(\hat{\mathbb{1}}_A\otimes \hat{\mathbb{1}}_B\otimes|\bm{n}\rangle\langle \bm{n}|)\hat{G}_{\bm{n}}|\psi\rangle \\
    &=\langle \phi|\hat{G}^\dagger_{\bm{n}}(\hat{\mathbb{1}}_A\otimes \hat{\mathbb{1}}_B\otimes|\bm{n}\rangle\langle \bm{n}|)\hat{G}_{\bm{n}}|\psi\rangle
\end{align}
where we used
\begin{align}
    \hat{G}^\dagger_{\bm{n}}\hat{D}^\dagger(\bm{\alpha}_A)\hat{G}_{\bm{n}}=\hat{D}^\dagger(\bm{\alpha}').
\end{align}
Here, computing $\bm{\alpha}'$ takes at most $O(Ml)$ and again by using Lemma~\ref{lemma:main}, we can compute this quantity in polynomial time when $L=O(1)$.
Therefore, the total running time is at most
\begin{align}
    O\left(\frac{\poly(M)\|\hat{O}_A\|_2^2 \mathbb{E}_{\bm{n}}[\Tr(\hat{\rho}_{\bm{n}}^2)] \log(1/\delta)}{\epsilon^2}\right),
\end{align}
which completes the proof.

\end{proof}

\subsection{Quantum mean-value problem in Gaussian circuit augmented by Gaussian measurement and Gaussian feedforward}\label{app:gaussian_gaussian}
We provide the full proof of Theorem~\ref{thm:gaussian_gaussian}, which is restated here:
\begin{theorem}[Theorem~\ref{thm:gaussian_gaussian} restated]
    Consider an arbitrary product input and a Gaussian circuit, augmented by Gaussian measurements and Gaussian feedforward, which outputs an $M$-mode quantum state.
    If the number of Gaussian measurements for feedforward is at most constant, i.e., $L=O(1)$, the expectation value of an $M$-mode product observable $\hat{O}\equiv \hat{O}_A\otimes \hat{\mathbb{1}}_B$ can be approximated within additive error $\epsilon$ with probability $1-\delta$ in running time,
    \begin{align}
        O\left(\frac{\poly(M)\|\hat{O}_A\|_2^2 \mathbb{E}_{\bm{\beta}}[\Tr(\hat{\rho}_{\bm{\beta}}^2)] \log(1/\delta)}{\epsilon^2}\right),    
    \end{align}
    where $\hat{\rho}_{\bm{\beta}}$ is the reduced density matrix on the system $A$ when the Gaussian measurement outcome is $\bm{\beta}$ and $\mathbb{E}_{\bm{\beta}}[\Tr(\hat{\rho}_{\bm{\beta}}^2)]$ is the average purity.
\end{theorem}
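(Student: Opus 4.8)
The plan is to run the same median-of-means scheme as in the proof of Theorem~\ref{thm:gaussian_photon}, with the discrete photon-number feedforward register replaced by the continuous general-dyne register, and to show that every quantity the estimator needs is again an instance of Lemma~\ref{lemma:main}. As in the proof sketch, I would first use the displacement-operator expansion~\eqref{eq:expansion} of $\hat{O}_A$ together with the twirling identity~\eqref{eq:twirl} to write $\langle\hat{O}\rangle=\int d^{2l}\bm{\alpha}_A\,d^{2L}\bm{\beta}\;p(\bm{\beta})\,q(\bm{\alpha}_A)\,X(\bm{\beta},\bm{\alpha}_A)$, with $q(\bm{\alpha}_A)=|\chi_{\hat{O}_A}(\bm{\alpha}_A)|^2/(\pi^l\|\hat{O}_A\|_2^2)$ a genuine probability density (its normalization following again from~\eqref{eq:twirl}), $p(\bm{\beta})$ the general-dyne outcome density, and $X$ the random variable displayed in the sketch; the median-of-means guarantee~\eqref{eq:mom} then applies once we can sample $\bm{\beta}\sim p$, sample $\bm{\alpha}_A\sim q$, and evaluate $X$. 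Sampling $\bm{\alpha}_A\sim q$ is immediate from the product form of $\hat{O}_A$, cf.\ Eq.~\eqref{eq:prod}.

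For the sample count I would compute the second moment of $X(\bm{\beta},\bm{\alpha}_A)$: carrying out the $\bm{\alpha}_A$-integral with~\eqref{eq:twirl} collapses it to $\|\hat{O}_A\|_2^2\int d^{2L}\bm{\beta}\,p(\bm{\beta})\Tr[\hat{\rho}_{\bm{\beta}}^2]=\|\hat{O}_A\|_2^2\,\mathbb{E}_{\bm{\beta}}[\Tr(\hat{\rho}_{\bm{\beta}}^2)]$, exactly mirroring Theorem~\ref{thm:gaussian_photon}, so that $N=O(\|\hat{O}_A\|_2^2\mathbb{E}_{\bm{\beta}}[\Tr(\hat{\rho}_{\bm{\beta}}^2)]\log(1/\delta)/\epsilon^2)$ samples suffice and the remaining task is a $\poly(M)$ bound on the cost of one sample when $L=O(1)$.

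The genuinely new ingredient---and what I expect to be the main obstacle---is handling the \emph{continuous} feedforward register in the chain-rule sampler and in the evaluation of $X$. I would write the pure Gaussian reference state as $|\psi_G\rangle=\hat{G}_G|\bm{0}_L\rangle$ and commute $\hat{G}_G$ past the measurement into the feedforward circuit via Eq.~\eqref{eq:G_dis}; after a linear change of outcome variable $\bm{\beta}\mapsto\bm{\gamma}$, the effective measurement on the $L$ auxiliary modes becomes mode-wise heterodyne, $\hat{\Pi}'(\bm{\gamma})=\pi^{-L}\bigotimes_{j=1}^L|\gamma_j\rangle\langle\gamma_j|$. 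Using $|\gamma_j\rangle\langle\gamma_j|=\hat{D}(\gamma_j)|0\rangle\langle 0|\hat{D}^\dagger(\gamma_j)$ and absorbing the displacements into the circuit as well, $p(\bm{\gamma})$ takes the form $\pi^{-L}\langle\psi|\hat{G}_0^\dagger(\hat{\mathbb{1}}_M\otimes|\bm{0}_L\rangle\langle\bm{0}_L|)\hat{G}_0|\psi\rangle$ with $\hat{G}_0$ a Gaussian unitary on $M+L$ modes and $|\psi\rangle$ a product state---precisely Lemma~\ref{lemma:main} with $\bm{n}=\bm{0}_L$, hence computable in $\poly(M)$ for $L=O(1)$. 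The marginals $p(\gamma_1,\dots,\gamma_k)$ needed for the chain rule are obtained by integrating out $\gamma_{k+1},\dots,\gamma_L$; since after the $\hat{G}_G$-reduction the trailing modes are independent heterodynes, the coherent-state completeness relation $\pi^{-1}\int d^2\gamma_j\,\hat{D}(\gamma_j)|0\rangle\langle 0|\hat{D}^\dagger(\gamma_j)=\hat{\mathbb{1}}$---a special case of~\eqref{eq:twirl}---reduces each marginal to a Lemma~\ref{lemma:main} overlap with fewer vacuum projectors. The same manipulations, now also absorbing $\hat{D}^\dagger(\bm{\alpha}_A)$ into the input via Eq.~\eqref{eq:G_dis}, express $X(\bm{\beta},\bm{\alpha}_A)$ as a Lemma~\ref{lemma:main} overlap times easily evaluated characteristic-function factors of the product input and of $\hat{O}_A$.

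Finally I would assemble the pieces: an $O(\poly(M))$ cost per sample---dominated by Lemma~\ref{lemma:main} together with an $O(M^2)$ evaluation of the transformed displacement amplitude---multiplied by the $N$ above and fed into~\eqref{eq:mom}, yields the stated running time. The delicate point throughout is the bookkeeping in the third step: checking that the outcome-dependent feedforward, the absorption of $\hat{G}_G$ and of the heterodyne displacements, and the coherent-state completeness all compose consistently so that every marginal and every evaluation of $X$ is literally an instance of Lemma~\ref{lemma:main}. Once this is verified the argument runs in parallel with the photon-number case of Theorem~\ref{thm:gaussian_photon}, and the resulting running-time bound is identical in form.
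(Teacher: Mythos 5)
Your proposal follows essentially the same route as the paper's proof: the same displacement-expansion/median-of-means decomposition into $p(\bm{\beta})q(\bm{\alpha}_A)X(\bm{\beta},\bm{\alpha}_A)$, the same second-moment bound $\|\hat{O}_A\|_2^2\,\mathbb{E}_{\bm{\beta}}[\Tr(\hat{\rho}_{\bm{\beta}}^2)]$, and the same reduction of both $p(\bm{\beta})$ (and its chain-rule marginals) and $X$ to Lemma~\ref{lemma:main} with vacuum projectors by purifying $|\psi_G\rangle$ into a Gaussian unitary and absorbing the displacements into the circuit/input via Eq.~\eqref{eq:G_dis}. Your write-up is correct and, if anything, spells out the continuous-outcome bookkeeping slightly more explicitly than the paper does.
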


\begin{proof}
Consider a general product input $|\psi\rangle$ and Gaussian measurement POVMs:
\begin{align}
    \frac{1}{\pi^L}\hat{D}(\bm{\beta})|\psi_G\rangle\langle \psi_G|\hat{D}^\dagger(\bm{\beta}).
\end{align}
First of all, when the measurement outcomes are $\bm{\beta}$, the probability of obtaining $\bm{\beta}$ is 
\begin{align}
    p(\bm{\beta})
    =\frac{1}{\pi^L}\langle \psi|\hat{G}_{\bm{\beta}}^\dagger(\hat{\mathbb{1}}_M\otimes\hat{D}(\bm{\beta})|\psi_G\rangle\langle \psi_G|\hat{D}^\dagger(\bm{\beta}))\hat{G}_{\bm{\beta}}|\psi\rangle
    =\frac{1}{\pi^L}\langle \psi_{\bm{\beta}}|(\hat{\mathbb{1}}_M\otimes\hat{D}(\bm{\beta})|\psi_G\rangle\langle \psi_G|\hat{D}^\dagger(\bm{\beta}))|\psi_{\bm{\beta}}\rangle,
\end{align}
and the (unnormalized) post-measurement state is
\begin{align}
    \langle \psi_G|\hat{D}^\dagger(\bm{\beta})|\psi_{\bm{\beta}}\rangle.
\end{align}
Thus, the expectation value when the outcome is $\bm{\beta}$ is
\begin{align}
    \frac{\langle \psi_{\bm{\beta}}|(\hat{O}_A\otimes\hat{\mathbb{1}}_B\otimes\hat{\Pi}(\bm{\beta}))|\psi_{\bm{\beta}}\rangle}{\langle \psi_{\bm{\beta}}|(\hat{\mathbb{1}}_A\otimes \hat{\mathbb{1}}_B \otimes\hat{\Pi}(\bm{\beta}))|\psi_{\bm{\beta}}\rangle}.
\end{align}
Taking the average over all outcomes $\bm{\beta}$ and defining $|\psi_{\bm{\beta}}\rangle=\hat{G}_{\bm{\beta}}|\psi\rangle$, the expectation value is written as
\begin{align}
    \langle\hat{O}\rangle 
    &=\frac{1}{\pi^L}\int d^{2L}\bm{\beta}\langle \psi_{\bm{\beta}}|(\hat{O}_A\otimes\hat{\mathbb{1}}_B\otimes \hat{D}(\bm{\beta})|\psi_G\rangle\langle \psi_G|\hat{D}^\dagger(\bm{\beta}))|\psi_{\bm{\beta}}\rangle \\
    &=\frac{1}{\pi^L}\int \frac{d^{2l}\bm{\alpha}_A}{\pi^l}d^{2L}\bm{\beta}\langle \psi_{\bm{\beta}}|(\hat{D}^\dagger(\bm{\alpha}_A)\otimes\hat{\mathbb{1}}_B\otimes \hat{D}(\bm{\beta})|\psi_G\rangle\langle \psi_G|\hat{D}^\dagger(\bm{\beta}))|\psi_{\bm{\beta}}\rangle \chi_{\hat{O}_A}(\bm{\alpha}_A) \\ 
    &=\frac{1}{\pi^L}\int d^{2l}\bm{\alpha}_Ad^{2L}\bm{\beta}\frac{\|\hat{O}_A\|_2^2\langle \psi_{\bm{\beta}}|(\hat{D}^\dagger(\bm{\alpha}_A)\otimes\hat{\mathbb{1}}_B\otimes \hat{D}(\bm{\beta})|\psi_G\rangle\langle \psi_G|\hat{D}^\dagger(\bm{\beta}))|\psi_{\bm{\beta}}\rangle}{\chi^*_{\hat{O}_A}(\bm{\alpha}_A)} \frac{\left|\chi_{\hat{O}_A}(\bm{\alpha}_A)\right|^2}{\pi^l\|\hat{O}_A\|_2^2} \\ 
    &=\int d^{2l}\bm{\alpha}_Ad^{2L}\bm{\beta}p(\bm{\beta})\frac{\|\hat{O}\|_2^2\langle \psi_{\bm{\beta}}|(\hat{D}^\dagger(\bm{\alpha}_A)\otimes\hat{\mathbb{1}}_B\otimes \hat{D}(\bm{\beta})|\psi_G\rangle\langle \psi_G|\hat{D}^\dagger(\bm{\beta}))|\psi_{\bm{\beta}}\rangle}{\pi^L p(\bm{\beta})\chi^*_{\hat{O}_A}(\bm{\alpha}_A)} q(\bm{\alpha}_A) \\ 
    %&=\int d^{2l}\bm{\alpha}_Ad^{2L}\bm{\beta}p(\bm{\beta})q(\bm{\alpha}_A)\frac{\|\hat{O}\|_2^2\langle \psi_{\bm{\beta}}|(\hat{D}^\dagger(\bm{\alpha}_A)\otimes\hat{\mathbb{1}}_B\otimes \hat{\Pi}(\bm{\beta}))|\psi_{\bm{\beta}}\rangle}{p(\bm{\beta})\chi^*_{\hat{O}_A}(\bm{\alpha}_A)} \\ 
    &=\int d^{2l}\bm{\alpha}_Ad^{2L}\bm{\beta}p(\bm{\beta})q(\bm{\alpha}_A)X(\bm{\beta},\bm{\alpha}_A),
\end{align}
where we defined a probability distribution and a random variable
\begin{align}
    q(\bm{\alpha}_A)\equiv \frac{\left|\chi_{\hat{O}_A}(\bm{\alpha}_A)\right|^2}{\pi^l\|\hat{O}_A\|_2^2},~~~
    X(\bm{\beta},\bm{\alpha})
    =\frac{\|\hat{O}_A\|_2^2\langle \psi_{\bm{\beta}}|(\hat{D}^\dagger(\bm{\alpha}_A)\otimes \hat{\mathbb{1}}_B\otimes \hat{D}(\bm{\beta})|\psi_G\rangle\langle \psi_G|\hat{D}^\dagger(\bm{\beta}))|\psi_{\bm{\beta}}\rangle}{\pi^L p(\bm{\beta})\chi_{\hat{O}_A}^*(\bm{\alpha}_A)}.
\end{align}
Hence, the quantum mean value can be written as
\begin{align}
    \langle \hat{O}\rangle
    =\int d^{2l}\bm{\alpha}_Ad^{2L}\bm{\beta}p(\bm{\beta})q(\bm{\alpha}_A)X(\bm{\beta},\bm{\alpha}_A),
\end{align}
which implies that the average of the random variable $X(\bm{\beta},\bm{\alpha}_A)$ over the probability distribution $p(\bm{\beta})q(\bm{\alpha}_A)$ is equal to the mean value.
If we use the median-of-means estimator, we can upper-bound the error of the estimator by finding the upper bound on the second moment:
\begin{align}
    &\int d^{2l}\bm{\alpha_A}d^{2L}\bm{\beta}p(\bm{\beta})q(\bm{\alpha}_A)\frac{\|\hat{O}_A\|_2^4|\langle \psi_{\bm{\beta}}|(\hat{D}^\dagger(\bm{\alpha}_A)\otimes\hat{\mathbb{1}}_B\otimes \hat{D}(\bm{\beta})|\psi_G\rangle\langle \psi_G|\hat{D}^\dagger(\bm{\beta}))|\psi_{\bm{\beta}}\rangle|^2}{\pi^{2L}p(\bm{\beta})^2\left|\chi_{\hat{O}_A}(\bm{\alpha}_A)\right|^2} \\ 
    &=\frac{\|\hat{O}_A\|_2^2}{\pi^{2L}\pi^{l}}\int d^{2l}\bm{\alpha}_Ad^{2L}\bm{\beta}\frac{|\langle \psi_{\bm{\beta}}|(\hat{D}^\dagger(\bm{\alpha}_A)\otimes\hat{\mathbb{1}}_B\otimes \hat{D}(\bm{\beta})|\psi_G\rangle\langle \psi_G|\hat{D}^\dagger(\bm{\beta}))|\psi_{\bm{\beta}}\rangle|^2}{p(\bm{\beta})} \\ 
    &=\frac{\|\hat{O}_A\|_2^2}{\pi^{l}}\int d^{2l}\bm{\alpha}_Ad^{2L}\bm{\beta}\frac{|\langle \psi_{\bm{\beta}}|(\hat{D}^\dagger(\bm{\alpha}_A)\otimes\hat{\mathbb{1}}_B\otimes \hat{D}(\bm{\beta})|\psi_G\rangle\langle \psi_G|\hat{D}^\dagger(\bm{\beta}))|\psi_{\bm{\beta}}\rangle|^2}{\langle \psi_{\bm{\beta}}|(\hat{\mathbb{1}}_A\otimes\hat{D}(\bm{\beta})|\psi_G\rangle\langle \psi_G|\hat{D}^\dagger(\bm{\beta}))|\psi_{\bm{\beta}}\rangle} \\ 
    &=\frac{\|\hat{O}_A\|_2^2}{\pi^{l}}\int d^{2l}\bm{\alpha}_Ad^{2L}\bm{\beta}p(\bm{\beta})|\chi_{\hat{\rho}_{\bm{\beta}}}(\bm{\alpha}_A)|^2 \\ 
    &=\|\hat{O}_A\|_2^2\int d^{2L}\bm{\beta}p(\bm{\beta})\Tr[\hat{\rho}_{\bm{\beta}}^2].
\end{align}
Thus, the variance is upper-bounded by the Hilbert-Schmidt norm of the operator $\hat{O}_A$ multiplied by the average purity of the reduced state when tracing out the observable's Hilbert space.
Thus, the sample complexity for the estimator is given as $O(\|\hat{O}_A\|^2_2 \mathbb{E}_{\bm{\beta}}[\text{Tr}[\hat{\rho}_{\bm{\beta}}^2]]\log(1/\delta)/\epsilon^2)$.

Meanwhile, we need to be able to sample from $p(\bm{\beta})$ and $q(\bm{\alpha}_A)$.
Sampling from $q(\bm{\alpha}_A)$ can be efficiently done because of the product structure:
\begin{align}
    q(\bm{\alpha}_A)
    =\frac{\left|\chi_{\hat{O}_A}(\bm{\alpha}_A)\right|^2}{\pi^l\|\hat{O}_A\|_2^2}
    =\prod_{i\in A}\frac{|\chi_{\hat{O}_i}(\alpha_i)|^2}{\pi \|\hat{O}_i\|_2^2},
\end{align}
which takes at most $O(M)$.
Also, $p(\bm{\beta})$ can also be easily sampled if the number of modes $L$ in $|\bm{\beta}\rangle$ is not too large:
\begin{align}
    p(\bm{\beta})
    &=\frac{1}{\pi^L}\langle \psi|\hat{G}_{\bm{\beta}}^\dagger(\hat{\mathbb{1}}_M\otimes\hat{D}(\bm{\beta})|\psi_G\rangle\langle \psi_G|\hat{D}^\dagger(\bm{\beta}))\hat{G}_{\bm{\beta}}|\psi\rangle
    =\frac{1}{\pi^L}\langle \psi|\hat{G}_{\bm{\beta}}^\dagger\hat{D}^\dagger(\bm{\beta})(\hat{\mathbb{1}}_M\otimes|\bm{0}\rangle\langle \bm{0}|)\hat{D}(\bm{\beta})\hat{G}_{\bm{\beta}}|\psi\rangle \\ 
    &=\frac{1}{\pi^L}\langle \psi|\hat{D}^\dagger(\bm{\beta}')\hat{G}_{\bm{\beta}}^\dagger(\hat{\mathbb{1}}_M\otimes|\bm{0}\rangle\langle \bm{0}|)\hat{G}_{\bm{\beta}}\hat{D}(\bm{\beta}')|\psi\rangle
    =\frac{1}{\pi^L}\langle \psi'|\hat{G}_{\bm{\beta}}^\dagger(\hat{\mathbb{1}}_M\otimes|\bm{0}\rangle\langle \bm{0}|)\hat{G}_{\bm{\beta}}|\psi'\rangle,
\end{align}
where we used the property of the displacement operator.
\begin{align}
    \hat{D}^\dagger(\bm{\alpha}')=\hat{G}_{\bm{\beta}}^\dagger \hat{D}^\dagger(\bm{\alpha})\hat{G}_{\bm{\beta}}.    
\end{align}
More specifically, when $L=O(1)$, by Lemma~\ref{lemma:main}, we can sample from $p(\bm{\beta})$ in polynomial time.

Finally, we analyze the complexity of computing the random variable $X$:
\begin{align}
    X(\bm{\beta},\bm{\alpha})
    =\frac{\|\hat{O}_A\|_2^2\langle \psi_{\bm{\beta}}|(\hat{D}^\dagger(\bm{\alpha}_A)\otimes \hat{\mathbb{1}}_B\otimes |\bm{\beta}\rangle\langle \bm{\beta}|)|\psi_{\bm{\beta}}\rangle}{\pi^L p(\bm{\beta})\chi_{\hat{O}_A}^*(\bm{\alpha}_A)}.
\end{align}
The difficult part in computing the random variable is
\begin{align}
    &\langle \psi_{\bm{\beta}}|(\hat{D}^\dagger(\bm{\alpha}_A)\otimes \hat{\mathbb{1}}_B\otimes \hat{D}(\bm{\beta})|\psi_G\rangle\langle \psi_G|\hat{D}^\dagger(\bm{\beta}))|\psi_{\bm{\beta}}\rangle
    =\langle \psi|\hat{G}^\dagger(\bm{\beta})(\hat{D}^\dagger(\bm{\alpha}_A)\otimes \hat{\mathbb{1}}_B\otimes \hat{D}(\bm{\beta})|\psi_G\rangle\langle \psi_G|\hat{D}^\dagger(\bm{\beta}))\hat{G}(\bm{\beta})|\psi\rangle \\
    &=\langle \psi|\hat{G}^\dagger(\bm{\beta})(\hat{\mathbb{1}}\otimes \hat{D}(\bm{\beta})|\psi_G\rangle\langle \psi_G|\hat{D}^\dagger(\bm{\beta}))\hat{G}(\bm{\beta})\hat{D}(\bm{\alpha}_A')|\psi\rangle 
    =\langle \psi|\hat{G}^\dagger(\bm{\beta})(\hat{\mathbb{1}}\otimes |\bm{0}\rangle\langle \bm{0}|)\hat{G}(\bm{\beta})|\psi'\rangle,
\end{align}
but this can also be computed by using Lemma~\ref{lemma:main} when $L=O(1)$.
Thus, the total running time is given as
\begin{align}
    O\left(\frac{\poly(M)\|\hat{O}_A\|_2^2 \mathbb{E}_{\bm{\beta}}[\text{Tr}(\hat{\rho}_{\bm{\beta}}^2)] \log(1/\delta)}{\epsilon^2}\right),    
\end{align}
which completes the proof.

\end{proof}

\section{Derivation of $\tilde{G}(k)$}\label{app:MVS}
In this Appendix, we provide a derivation of the following relationship between $\tilde{G}(k)$ and $G(\Omega)$ in Eqs.~\eqref{eq:g_k} and \eqref{eq:g_omega}~\cite{oh2024quantum}:
\begin{align}
    \tilde{G}(k)
    &\equiv \langle \phi|\hat{G}_0^\dagger e^{-ik\theta \hat{\bm{n}}\cdot \bm{\omega}}\hat{G}_0|\psi\rangle, \\ 
    G(\Omega)
    &\equiv \sum_{\bm{n}=\bm{0}}^\infty q(\bm{n})\delta(\Omega-\bm{\omega}\cdot \bm{n})
    =\frac{1}{\Omega_\text{max}+1}\sum_{k=0}^{\Omega_\text{max}}\tilde{G}(k)e^{ik\Omega},
\end{align}
where $\theta=2\pi/(\Omega_\text{max}+1)$ and
\begin{align}
    q(\bm{n})\equiv \langle \phi|\hat{G}_0^\dagger |\bm{n}\rangle\langle \bm{n}|\hat{G}_0|\psi\rangle.
\end{align}

Let us compute the Fourier transform of $G(\Omega)$:
\begin{align}
    &\sum_{\Omega=0}^{\Omega_\text{max}}G(\Omega)e^{-ik\theta\Omega}
    =\sum_{\Omega=0}^{\Omega_\text{max}}\sum_{\bm{n}=\bm{0}}^\infty q(\bm{n})\delta(\Omega-\bm{\omega}\cdot \bm{n}) e^{-ik\theta\Omega}
    =\frac{1}{\Omega_\text{max}+1}\sum_{\Omega=0}^{\Omega_\text{max}}\sum_{\bm{n}=\bm{0}}^\infty q(\bm{n})\sum_{l=0}^{\Omega_\text{max}}e^{il\theta(\Omega-\bm{\omega}\cdot \bm{n})-ik\theta\Omega} \\ 
    &=\frac{1}{\Omega_\text{max}+1}\sum_{l=0}^{\Omega_\text{max}}\sum_{\bm{n}=\bm{0}}^\infty q(\bm{n})e^{-il\theta\bm{\omega}\cdot \bm{n}}\sum_{\Omega=0}^{\Omega_\text{max}}e^{i(l-k)\theta\Omega}
    =\sum_{\bm{n}=\bm{0}}^\infty q(\bm{n})e^{-ik\theta\bm{\omega}\cdot \bm{n}}
    =\sum_{\bm{n}=\bm{0}}^\infty \langle \phi|\hat{G}_0^\dagger |\bm{n}\rangle\langle \bm{n}|\hat{G}_0|\psi\rangle e^{-ik\theta\bm{\omega}\cdot \bm{n}} \\ 
    &=\langle \phi|\hat{G}_0^\dagger e^{-ik\theta \hat{\bm{n}}\cdot \bm{\omega}}\hat{G}_0|\psi\rangle
    =\tilde{G}(k).
\end{align}
Hence, $\tilde{G}(k)$ is the Fourier transformation of $G(\Omega)$, which enables us to compute the marginal probabilities of $q$.

\section{Low-rank structure in Gaussian circuits}\label{app:rank}
In this Appendix, we provide an explicit derivation of the transformation of bosonic annihilation operators through the Gaussian unitary circuit $\hat{G}_0^\dagger\hat{P}(\bm{\varphi})\hat{G}_0$.
To be more specific, we prove Lemma~\ref{lemma:low-rank}:
\begin{lemma}[Lemma~\ref{lemma:low-rank} restated]
    Consider a general Gaussian unitary circuit $\hat{G}_0$ and phase-shifter operator $\hat{P}(\bm{\varphi})$ with $\bm{\varphi}=(\varphi_1,\dots,\varphi_L,0,\dots,0)$. Then, $\hat{G}_0^\dagger\hat{P}(\bm{\varphi})\hat{G}_0$ can be decomposed into a linear-optical circuit, at most $2L$ non-zero squeezing operators, and another linear-optical circuit.
\end{lemma}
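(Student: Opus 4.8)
The plan is to compute explicitly the Bogoliubov transformation induced by $\hat G\equiv\hat G_0^\dagger\hat P(\bm\varphi)\hat G_0$ and to show that its anti-diagonal ("squeezing") block has rank at most $2L$; the Bloch--Messiah decomposition of $\hat G$ then has at most $2L$ nonzero squeezers, with the two surrounding linear-optical circuits. First I would apply the Bloch--Messiah decomposition to $\hat G_0$, $\hat G_0=\hat U\hat S(\bm s)\hat V$, and use the standard action of linear-optical and single-mode squeezing gates on the mode operators to get
\begin{align}
    \hat G_0\hat a_i\hat G_0^\dagger=\sum_{k=1}^M\bigl(A_{ik}\hat a_k+B_{ik}\hat a_k^\dagger\bigr),\qquad
    A_{ik}=\sum_{j=1}^M V_{ji}^*\cosh s_j\,U_{kj}^*,\qquad B_{ik}=-\sum_{j=1}^M V_{ji}^*\sinh s_j\,U_{kj},
\end{align}
so that $B=-V^\dagger\,\mathrm{diag}(\sinh s_j)\,U^T$ and $\mathrm{rank}(B)$ equals the number of nonzero $s_j$. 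I would also record the inverse transformation $\hat G_0^\dagger\hat a_k\hat G_0$, which is again linear in the $\hat a$'s and $\hat a^\dagger$'s with coefficients fixed by the inverse symplectic.

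Next I would conjugate through the phase shifter, using $\hat G^\dagger\hat a_i\hat G=\hat G_0^\dagger\bigl[\hat P^\dagger(\bm\varphi)\,(\hat G_0\hat a_i\hat G_0^\dagger)\,\hat P(\bm\varphi)\bigr]\hat G_0$ together with $\hat P^\dagger(\bm\varphi)\hat a_k\hat P(\bm\varphi)=e^{i\varphi_k}\hat a_k$ (where $\varphi_k=0$ for $k>L$). Substituting and re-expressing everything back in the original modes yields
\begin{align}
    \hat G^\dagger\hat a_i\hat G=\hat a_i+\sum_{k=1}^M\bigl(W_{ik}\hat a_k+Z_{ik}\hat a_k^\dagger\bigr),
\end{align}
where the leading $\hat a_i$ comes from $\hat G_0^\dagger\hat G_0=\hat{\mathbb{1}}$ and the corrections carry factors $e^{\pm i\varphi_j}-1$ that \emph{vanish for $j>L$}. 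Carrying out the bookkeeping gives
\begin{align}
    W_{ik}&=\sum_{j=1}^L\bigl[(e^{i\varphi_j}-1)A_{ij}A_{kj}^*-(e^{-i\varphi_j}-1)B_{ij}B_{kj}^*\bigr],\\
    Z_{ik}&=\sum_{j=1}^L\bigl[-(e^{i\varphi_j}-1)A_{ij}B_{kj}+(e^{-i\varphi_j}-1)B_{ij}A_{kj}\bigr],
\end{align}
each a sum of $2L$ rank-one matrices, so $\mathrm{rank}(W),\mathrm{rank}(Z)\le 2L$.

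Finally I would turn the rank bound on $Z$ into the squeezer count. The map above is the Bogoliubov transformation of $\hat G$ (equivalently of $\hat G^\dagger$, whose Bloch--Messiah squeezing magnitudes agree with those of $\hat G$), with diagonal block $\hat{\mathbb{1}}+W$ and anti-diagonal block $Z$. In the Bloch--Messiah decomposition $\hat G=\hat U'\hat S(\bm r)\hat V'$ the anti-diagonal block of the symplectic matrix has the form $\tilde U_1\,\mathrm{diag}(\sinh r_j)\,\tilde V_1^*$ with $\tilde U_1,\tilde V_1$ unitary, so its rank is exactly the number of nonzero $r_j$. Since that block is $Z$ and $\mathrm{rank}(Z)\le 2L$, at most $2L$ of the $r_j$ are nonzero, and $\hat U',\hat V'$ are the two linear-optical circuits in the claimed decomposition.

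The step I expect to be most delicate is the symplectic bookkeeping in the middle paragraph: one must invert the $\hat G_0$-transformation correctly and verify that, after conjugation by $\hat P(\bm\varphi)$, all cross terms recombine into exactly the stated rank-$2L$ form with the $e^{\pm i\varphi_j}-1$ prefactors in the right places --- this is where sign and complex-conjugation errors are easiest to make. A secondary point needing a short justification is the structural fact used at the end: that the rank of the anti-diagonal block of a Gaussian unitary's symplectic matrix equals its number of nonzero Bloch--Messiah squeezing parameters, which follows immediately from the form $U_1\,\mathrm{diag}(\sinh r_j)\,V_1^*$ of that block. Everything else is routine linear algebra together with the commutation relations of the phase shifter.
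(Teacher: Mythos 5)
Your proposal is correct and follows essentially the same route as the paper's proof in Appendix~\ref{app:rank}: Bloch--Messiah decomposition of $\hat{G}_0$, the same matrices $A,B$, conjugation through $\hat{P}(\bm{\varphi})$ yielding the identical rank-$2L$ expressions for $W$ and $Z$, and the conclusion that at most $2L$ squeezing parameters are nonzero. The only difference is that you spell out explicitly why the rank of the anti-diagonal block equals the number of nonzero Bloch--Messiah squeezers, a step the paper states without elaboration.
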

\begin{proof}

First of all, the bosonic annihilation operator transforms under $\hat{G}_0=\hat{U}\hat{S}(\bm{s})\hat{V}$ as 
\begin{align}
    \hat{G}_0\hat{a}_i\hat{G}_0^\dagger
    &=\hat{U}\hat{S}\hat{V}\hat{a}_i\hat{V}^\dagger \hat{S}^\dagger\hat{U}^\dagger
    =\hat{U}\hat{S}\left(\sum_{j=1}^M V_{ji}^* \hat{a}_j\right)\hat{S}^\dagger\hat{U}^\dagger
    =\hat{U}\left(\sum_{j=1}^M V_{ji}^* (\cosh s_j\hat{a}_j-\sinh s_j\hat{a}_j^\dagger)\right)\hat{U}^\dagger \\
    &=\sum_{j,k=1}^M V_{ji}^* (\cosh s_j U_{kj}^*\hat{a}_k-\sinh s_jU_{kj}\hat{a}_k^\dagger)
    =\sum_{k=1}^M \left(A_{ik}\hat{a}_k+B_{ik}\hat{a}_k^\dagger\right),
\end{align}
where we used
\begin{align}
    \hat{U}^\dagger \hat{a}_i\hat{U}=\sum_{j=1}^M U_{ij}\hat{a}_j,~~~
    \hat{U} \hat{a}_i\hat{U}^\dagger=\sum_{j=1}^M U_{ji}^*\hat{a}_j,~~~
    \hat{S}^\dagger(s)\hat{a}_i\hat{S}(s)=\cosh s\hat{a}_i+\sinh s\hat{a}_i^\dagger,~~~
    \hat{S}(s)\hat{a}_i\hat{S}^\dagger(s)=\cosh s\hat{a}_i-\sinh s\hat{a}_i^\dagger.
\end{align}
and we defined
\begin{align}
    A_{ik}\equiv \sum_{j=1}^M V_{ji}^*\cosh s_j U_{kj}^*,~~~
    B_{ik}\equiv -\sum_{j=1}^M V_{ji}^*\sinh s_j U_{kj},
\end{align}
which characterize the Gaussian transformation.
Here, we emphasize that the number of non-zero squeezing parameters decides the rank of the matrix $B$.
More precisely, if the number of non-zero squeezing parameters is $L$, then the rank of the matrix $B$ is at most $L$.

For the transformation under $\hat{G}^\dagger_0$, we similarly have
\begin{align}
    \hat{G}_0^\dagger\hat{a}_i\hat{G}_0
    &=\hat{V}^\dagger\hat{S}^\dagger\hat{U}^\dagger \hat{a}_i \hat{U}\hat{S}\hat{V}
    =\hat{V}^\dagger\hat{S}^\dagger\left(\sum_{j=1}^M U_{ij}\hat{a}_j\right)\hat{S}\hat{V}
    =\hat{V}^\dagger\left(\sum_{j=1}^M U_{ij}(\cosh s_j\hat{a}_j+\sinh s_j\hat{a}_j^\dagger) \right)\hat{V} \\ 
    &=\sum_{j,k=1}^M U_{ij}(\cosh s_j V_{jk}\hat{a}_k+\sinh s_jV_{jk}^*\hat{a}_k^\dagger)
    =\sum_{k=1}^M \left((A^\dagger)_{ik}\hat{a}_k-(B^\T)_{ik}\hat{a}_k^\dagger\right).
\end{align}

Because of the unitarity, after the transformation under $\hat{G}_0$ and then $\hat{G}_0^\dagger$, i.e., the identity operation,
\begin{align}
    \hat{a}_i\to \sum_{j=1}^M A_{ij}\hat{a}_j+\sum_{j=1}^M B_{ij}\hat{a}_j^\dagger
    &\to \sum_{j,k=1}^M A_{ij}\left((A^\dagger)_{jk}\hat{a}_k-(B^\T)_{jk}\hat{a}_k^\dagger\right)+\sum_{j,k=1}^M B_{ij}\left((A^\dagger)_{jk}^*\hat{a}_k^\dagger-(B^\T)_{jk}^*\hat{a}_k\right) \\ 
    &=\sum_{j,k=1}^M \left[\left(A_{ij}(A^\dagger)_{jk}-B_{ij}(B^\dagger)_{jk}\right)\hat{a}_k+\left(-A_{ij}(B^\T)_{jk}+B_{ij}(A^\T)_{jk}\right)\hat{a}_k^\dagger\right],
\end{align}
the matrices $A$ and $B$ must satisfy the following conditions:
\begin{align}\label{eq:AB_condition}
    AA^\dagger-BB^\dagger=\mathbb{1},~~~
    AB^\T=BA^\T.
\end{align}

Let $C\equiv A^\dagger$ and $D=-B^\T$.
After the entire process, including the phase shifter $\hat{P}(\bm{\varphi})$, the annihilation opertor transforms as
\begin{align}
    \hat{a}_i
    &\to \sum_{j=1}^M A_{ij}\hat{a}_j+\sum_{j=1}^M B_{ij}\hat{a}_j^\dagger 
    \to \sum_{j=1}^M e^{i\varphi_j}A_{ij}\hat{a}_j+\sum_{j=1}^M e^{-i\varphi_j}B_{ij}\hat{a}_j^\dagger \\
    &\to \sum_{j,k=1}^M e^{i\varphi_j}A_{ij}\left(C_{jk}\hat{a}_k+D_{jk} \hat{a}_k^\dagger\right)+\sum_{j,k=1}^M e^{-i\varphi_j}B_{ij}\left(C_{jk}^*\hat{a}_k^\dagger+D_{jk}^* \hat{a}_k\right) \\
    &=\sum_{j,k=1}^M e^{i\varphi_j}A_{ij}C_{jk}\hat{a}_k+\sum_{j,k=1}^M e^{i\varphi_j}A_{ij}D_{jk} \hat{a}_k^\dagger+\sum_{j,k=1}^M e^{-i\varphi_j}B_{ij}C_{jk}^*\hat{a}_k^\dagger+\sum_{j,k=1}^M e^{-i\varphi_j}B_{ij}D_{jk}^* \hat{a}_k.
\end{align}
Here, by assuming that $\varphi_j=0$ for $j>L$, without loss of generality, and using Eq.~\eqref{eq:AB_condition}, we can simplify the expression as
\begin{align}
    &\sum_{j,k=1}^M e^{i\varphi_j}A_{ij}C_{jk}\hat{a}_k+\sum_{j,k=1}^M e^{i\varphi_j}A_{ij}D_{jk} \hat{a}_k^\dagger+\sum_{j,k=1}^M e^{-i\varphi_j}B_{ij}C_{jk}^*\hat{a}_k^\dagger+\sum_{j,k=1}^M e^{-i\varphi_j}B_{ij}D_{jk}^* \hat{a}_k \\
    %&=\sum_{j,k=1}^M \left(A_{ij}C_{jk}\hat{a}_k+A_{ij}D_{jk} \hat{a}_k^\dagger+B_{ij}C_{jk}^*\hat{a}_k^\dagger+B_{ij}D_{jk}^* \hat{a}_k\right) \\ 
    %&+\sum_{j,k=1}^M \left((e^{i\varphi_j}-1)A_{ij}C_{jk}\hat{a}_k+(e^{i\varphi_j}-1)A_{ij}D_{jk} \hat{a}_k^\dagger+(e^{-i\varphi_j}-1)B_{ij}C_{jk}^*\hat{a}_k^\dagger+(e^{-i\varphi_j}-1)B_{ij}D_{jk}^* \hat{a}_k\right) \\ 
    &=\sum_{j=1}^M\sum_{k=1}^M \left(A_{ij}C_{jk}\hat{a}_k+B_{ij}D_{jk}^* \hat{a}_k+A_{ij}D_{jk} \hat{a}_k^\dagger+B_{ij}C_{jk}^*\hat{a}_k^\dagger\right) \\ 
    &+\sum_{j=1}^L\sum_{k=1}^M \left((e^{i\varphi_j}-1)A_{ij}C_{jk}\hat{a}_k+(e^{-i\varphi_j}-1)B_{ij}D_{jk}^* \hat{a}_k+(e^{i\varphi_j}-1)A_{ij}D_{jk} \hat{a}_k^\dagger+(e^{-i\varphi_j}-1)B_{ij}C_{jk}^*\hat{a}_k^\dagger\right) \\ 
    &=\hat{a}_i+\sum_{j=1}^L\sum_{k=1}^M \left((e^{i\varphi_j}-1)A_{ij}C_{jk}\hat{a}_k+(e^{-i\varphi_j}-1)B_{ij}D_{jk}^* \hat{a}_k+(e^{i\varphi_j}-1)A_{ij}D_{jk} \hat{a}_k^\dagger+(e^{-i\varphi_j}-1)B_{ij}C_{jk}^*\hat{a}_k^\dagger\right) \\ 
    &=\hat{a}_i+\sum_{k=1}^M \left(W_{ik}\hat{a}_k+Z_{ik}\hat{a}_k^\dagger\right),
\end{align}
where we defined
\begin{align}
    W_{ik}
    &\equiv \sum_{j=1}^L \left((e^{i\varphi_j}-1)A_{ij}C_{jk}+(e^{-i\varphi_j}-1)B_{ij}D_{jk}^*\right)
    =\sum_{j=1}^L \left((e^{i\varphi_j}-1)A_{ij}A_{kj}^*-(e^{-i\varphi_j}-1)B_{ij}B_{kj}^*\right), \\
    Z_{ik}
    &\equiv \sum_{j=1}^L \left((e^{i\varphi_j}-1)A_{ij}D_{jk}+(e^{-i\varphi_j}-1)B_{ij}C_{jk}^*\right)
    =\sum_{j=1}^L \left(-(e^{i\varphi_j}-1)A_{ij}B_{kj}+(e^{-i\varphi_j}-1)B_{ij}A_{kj}\right),
\end{align}
and hence each of the terms has a low-rank structure.
More specifically, $W$ and $Z$'s ranks are at most $2L$.
Thus,
\begin{align}
    W=\sum_{j=1}^{2L}u^{(j)}v^{(j)\T},~~~
    Z=\sum_{j=1}^{2L}w^{(j)}z^{(j)\T}.
\end{align}
Here, note that $W+I$ and $Z$ form a Gaussian transform.
The low-rankness of the matrix $Z$ implies that the Gaussian transform has a structure in which its squeezing parameters are nonzero only for $2L$ modes.
\end{proof}

\section{Generalized Gurvits' second algorithm for linear-optical circuits}\label{app:gurvits_linear}

In this Appendix, we present a generalized version of Gurvits' second algorithm for linear-optical circuits with an arbitrary product input, generalizing it from Fock states:
\begin{align}
    \langle \phi|\hat{V}|\psi\rangle=\sum_{\bm{n},\bm{m}}a_{\bm{m}} b_{\bm{n}}^*\langle\bm{n}|\hat{V}|\bm{m}\rangle,
\end{align}
where $|\psi\rangle$ and $|\phi\rangle$ are arbitrary product states, which are written as
\begin{align}
    |\psi\rangle=
    \bigotimes_{i=1}^M\left(\sum_{m_i=0}^{n_\text{max}}a^{(i)}_{m_i}|m_i\rangle\right),~
    |\phi\rangle=
    \bigotimes_{i=1}^M\left(\sum_{n_i=0}^{n_\text{max}}b^{(i)}_{n_i}|n_i\rangle\right).
\end{align}
Here, $\bm{n}=(n_1,\dots,n_M)$ and $\bm{m}=(m_1,\dots,m_M)$ and we are interested in a linear-optical circuit $\hat{V}$ whose corresponding unitary matrix $V$ has a low-rank structure such that $V=W+I$ with a low-rank matrix $W$.
More precisely, let us assume that the rank of $W$ is $L$, so that it can be written as
\begin{align}
    W_{ij}=\sum_{k=1}^L u_i^{(k)}v_{j}^{(k)}.
\end{align}

To do that, let us first focus on a simpler case
\begin{align}
    \langle \bm{n}|\hat{V}|\bm{m}\rangle,
\end{align}
which is still a more general form than that of Gurvits' algorithm because $|\bm{n}\rangle$ is not equal to $|\bm{m}\rangle$ in general.
Here, because the linear-optical unitary circuit preserves the photon number, we assume that $|\bm{n}|=|\bm{m}|$; otherwise, it becomes zero.
Let us introduce a notation $X_{\bm{n}}$ to represent a multiset corresponding to $\bm{n}$, which is composed of $n_i$ times of $i$'s.
For example, for $\bm{n}=(2,1,0)$, $X_{\bm{n}}=\{1,1,2\}$.

First of all, we have
\begin{align}
    \langle\bm{n}|\hat{V}|\bm{m}\rangle 
    &=\frac{1}{\sqrt{\bm{n}!}}\langle0|\left(\prod_{i\in X_{\bm{n}}}\hat{a}_i\right) \hat{V}|\bm{m}\rangle 
    =\frac{1}{\sqrt{\bm{n}!}}\langle0|\prod_{i\in X_{\bm{n}}}\left(\hat{a}_i+\sum_{k=1}^MW_{ik}\hat{a}_k\right)|\bm{m}\rangle 
    \equiv \frac{1}{\sqrt{\bm{n}!}}\langle0|\prod_{i\in X_{\bm{n}}}\left(\hat{a}_i+\hat{b}_i\right)|\bm{m}\rangle \\ 
    &=\frac{1}{\sqrt{\bm{n}!}}\langle0|\left[\sum_{X\subset X_{\bm{n}}}\prod_{x\in X}\hat{a}_x\prod_{x\in X^c}\hat{b}_x\right]|\bm{m}\rangle 
    =\frac{1}{\sqrt{\bm{n}!}}\sum_{X\subset X_{\bm{n}}}\langle0|\left[\prod_{x\in X}\hat{a}_x\prod_{x\in X^c}\hat{b}_x\right]|\bm{m}\rangle \\ 
    &=\frac{1}{\sqrt{\bm{n}!}}\sum_{\bm{l}\leq \bm{n},\bm{m}}\binom{\bm{n}}{\bm{l}}\frac{\sqrt{\bm{m}!}}{\sqrt{(\bm{m}-\bm{l})!}}\langle0|\left[\prod_{x \in X_{\bm{n}-\bm{l}}}\sum_{j=1}^M W_{xj}\hat{a}_j^{\dagger} \right]|\bm{m}-\bm{l}\rangle \\ 
    &=\frac{\sqrt{\bm{m}!}}{\sqrt{\bm{n}!}}\sum_{\bm{l}\leq \bm{n},\bm{m}}\binom{\bm{n}}{\bm{l}}\frac{1}{(\bm{m}-\bm{l})!}\langle0|\left[\prod_{x \in X_{\bm{n}-\bm{l}}}\sum_{j=1}^M W_{xj}\hat{a}_j^{\dagger} \right]\left[\prod_{x\in X_{\bm{m}-\bm{l}}}\hat{a}_x^\dagger\right]|0\rangle,
\end{align}
where
\begin{align}
    &\langle0|\left[\prod_{x \in X_{\bm{n}-\bm{l}}}\sum_{j=1}^M W_{xj}\hat{a}_j^{\dagger} \right]\left[\prod_{x\in X_{\bm{m}-\bm{l}}}\hat{a}_x^\dagger\right]|0\rangle\\
    &=\sum_{\sigma \in \mathcal{S}_{|X_{\bm{n}-\bm{l}}|}:X_{\bm{n}-\bm{l}}\to X_{\bm{m}-\bm{l}}}\prod_{x\in X_{\bm{n}-\bm{l}}}W_{x,\sigma(x)} \\
    &=\sum_{\sigma \in \mathcal{S}_{|X_{\bm{n}-\bm{l}}|}:X_{\bm{n}-\bm{l}}\to X_{\bm{m}-\bm{l}}}\prod_{x\in X_{\bm{n}-\bm{l}}}\sum_{k=1}^L u_{x}^{(k)}v_{\sigma(x)}^{(k)} \\ 
    &=\sum_{\sigma \in \mathcal{S}_{|X_{\bm{n}-\bm{l}}|}:X_{\bm{n}-\bm{l}}\to X_{\bm{m}-\bm{l}}}\sum_{k_{1},\dots,k_{|X_{\bm{n}-\bm{l}}|}=1}^L\prod_{i=1}^{|X_{\bm{n}-\bm{l}}|} u_{X_{\bm{n}-\bm{l}}(i)}^{(k_i)}v_{X_{\bm{m}-\bm{l}}(i)}^{(k_{\sigma(i)})}  \\
    &=\sum_{s_u(X_{\bm{n}-\bm{l}})}\sum_{s_v(X_{\bm{m}-\bm{l}})}\delta([s_u(X_{\bm{n}-\bm{l}})]=[s_v(X_{\bm{m}-\bm{l}})])\prod_{i=1}^{|X_{\bm{n}-\bm{l}}|} u_{X_{\bm{n}-\bm{l}}(i)}^{(s_u(i))}v_{X_{\bm{m}-\bm{l}}(i)}^{(s_v(i))} \prod_{i=1}^L k_i! \\ 
    &=\sum_{s_u(X_{\bm{n}-\bm{l}})}\sum_{s_v(X_{\bm{m}-\bm{l}})}\delta([s_u(X_{\bm{n}-\bm{l}})]=[s_v(X_{\bm{m}-\bm{l}})])\prod_{x\in X_{\bm{n}-\bm{l}}} u_{x}^{(s_u(x))}\prod_{x\in X_{\bm{m}-\bm{l}}}v_{x}^{(s_v(x))} \prod_{i=1}^L k_i!,
\end{align}
where the summation over $s(X)$ is interpreted as the summation of a tuple from $[L]^{|X|}$ and $[s(X)]$ denotes the multiset of the elements of $s(X)$, and $k_i$'s denote the multiplicity of $i$ in the multiset $[s_u(X)]$, which arises because of the redundancy.

At this moment, let us ignore the delta function and the multiplicity factor $\prod_i k_i!$, which will be taken care of later.
The full expression is then rewritten as
\begin{align}
    &\frac{\sqrt{\bm{m}!}}{\sqrt{\bm{n}!}}\sum_{\bm{l}\leq \bm{n},\bm{m}}\binom{\bm{n}}{\bm{l}}\frac{1}{(\bm{m}-\bm{l})!}\sum_{s_u(X_{\bm{n}-\bm{l}})}\sum_{s_v(X_{\bm{m}-\bm{l}})}\prod_{x\in X_{\bm{n}-\bm{l}}} u_{x}^{(s_u(x))}\prod_{x\in X_{\bm{m}-\bm{l}}}v_{x}^{(s_v(x))} \\ 
    &=\frac{\sqrt{\bm{m}!}}{\sqrt{\bm{n}!}}\sum_{\bm{l}\leq \bm{n},\bm{m}}\binom{\bm{n}}{\bm{l}}\frac{1}{(\bm{m}-\bm{l})!}\left(\prod_{x\in X_{\bm{n}-\bm{l}}}\sum_{s=1}^L u_{x}^{(s)}\right)\left(\prod_{x\in X_{\bm{m}-\bm{l}}} \sum_{s=1}^L v_{x}^{(s)}\right) \\ 
    &=\frac{\sqrt{\bm{m}!}}{\sqrt{\bm{n}!}}\prod_{i=1}^M\sum_{l\leq \min(n_i,m_i)}\binom{n_i}{l}\frac{1}{(m_i-l)!}\left(\sum_{s=1}^L u_{i}^{(s)}\right)^{n_i-l}\left(\sum_{s=1}^L v_{i}^{(s)}\right)^{m_i-l}.
\end{align}

Now, to take care of the delta function, we introduce the indeterminates $\{a_u^{(s)}\}_{s=1}^L$ and $\{a_v^{(s)}\}_{s=1}^L$ and write the expression as
\begin{align}
    \prod_{i=1}^M \frac{\sqrt{m_i!}}{\sqrt{n_i!}}\sum_{l\leq \min(n_i,m_i)}\binom{n_i}{l}\frac{1}{(m_i-l)!}\left(\sum_{s=1}^L a_u^{(s)}u_{x}^{(s)}\right)^{n_i-l}\left(\sum_{s=1}^L a_v^{(s)}v_{i}^{(s)}\right)^{m_i-l}
\end{align}
and keep the coefficients of the monomials in which $a_u^{(s)}$ and $a_v^{(s)}$'s degrees match each other (e.g., $(a_u^{(1)})^2(a_u^{(2)})^3(a_v^{(1)})^2(a_v^{(2)})^3$).
Finally, to take care of the multiplicity factor $\prod_{i}k_i$, we then multiply the factorial of degrees of $a_u^{(s)}$ (or equivalently $a_v^{(s)}$).

Now, for arbitrary product input states $|\psi\rangle$ and $|\phi\rangle$, we have
\begin{align}
    \langle \phi|\hat{V}|\psi\rangle
    &=\sum_{\bm{n},\bm{m}}a_{\bm{m}} b_{\bm{n}}^*\langle\bm{n}|\hat{V}|\bm{m}\rangle \\
    &\to \sum_{\bm{n},\bm{m}}a_{\bm{m}}b_{\bm{n}}^*\prod_{i=1}^M \frac{\sqrt{m_i!}}{\sqrt{n_i!}}\sum_{l\leq \min(n_i,m_i)}\binom{n_i}{l}\frac{1}{(m_i-l)!}\left(\sum_{s=1}^L a_u^{(s)}u_{i}^{(s)}\right)^{n_i-l}\left(\sum_{s=1}^L a_v^{(s)}v_{i}^{(s)}\right)^{m_i-l} \\ 
    &\to \prod_{i=1}^M \sum_{n_i,m_i=0}^{n_\text{max}}a_{m_i}^{(i)}b_{n_i}^{(i)*} \frac{\sqrt{m_i!}}{\sqrt{n_i!}}\sum_{l\leq \min(n_i,m_i)}\binom{n_i}{l}\frac{1}{(m_i-l)!}\left(\sum_{s=1}^L a_u^{(s)}u_{i}^{(s)}\right)^{n_i-l}\left(\sum_{s=1}^L a_v^{(s)}v_{i}^{(s)}\right)^{m_i-l}\label{eq:LON_last}.
\end{align}

Let us now analyze the complexity.
The number of indeterminates we introduce is $2L$, and each indeterminate may have degree at most $Mn_\text{max}$.
Thus, the number of coefficients we must keep track of is $(Mn_\text{max}+1)^{2L}$.
Since there are $M$ product, the total number of operations is $O(M(Mn_\text{max}+1)^{2L})$.

\section{Antinormal ordering}\label{app:anti}
The goal of this Appendix is to find the antinormal ordering of the following term:
\begin{align}
    \prod_{i=1}^N(\hat{A}_i+\hat{B}_i^\dagger),
\end{align}
where the commutation relations of the operators are given as $[\hat{A}_i,\hat{A}_j]=0$,  $[\hat{B}_i^\dagger,\hat{B}_j^\dagger]=0$, and $[\hat{A}_i,\hat{B}_j^\dagger]=C_{ij}$.
This Appendix is based on Ref.~\cite{varvak2005rook}.
To find its antinormal ordering, i.e., express the above by operators in which $\hat{A}_i$'s always appear on the left and $\hat{B}_i^\dagger$'s always appear on the right, we consider each term after expanding it, such as $\hat{A}_1\hat{B}_2^\dagger\hat{A}_3\hat{A}_4\hat{B}_5^\dagger\hat{A}_6$.
Now, we move all the creation operators $\hat{B}_i^\dagger$'s to the right by using the commutation relation.
First, we denote $Y$ as the indices of the annihilation operators $\hat{A}_i$ and $Y^c$ as those of the creation operators $\hat{B}_i^\dagger$.
For example, for $\hat{A}_1\hat{B}_2^\dagger\hat{A}_3\hat{A}_4\hat{B}_5^\dagger\hat{A}_6$, $Y=\{1,3,4,6\}$ and $Y^c=\{2,5\}$.
We then move $\hat{B}_i^\dagger$ for $i\in Y^c$ one by one using the commutation relation.
First of all, note that when we move $\hat{B}_j^\dagger$ to the right, it only interacts with $\hat{A}_i$ with $i>j$.

Let us start with the largest $j\in Y^c$, say $5$ for the above example.
Using $\hat{B}_5^\dagger \hat{A}_6=[\hat{B}_5^\dagger, \hat{A}_6]+\hat{A}_6\hat{B}_5^\dagger=-C_{65}+\hat{A}_6\hat{B}_5^\dagger$, we now have two terms
\begin{align}
    \hat{A}_1\hat{B}_2^\dagger\hat{A}_3\hat{A}_4\hat{B}_5^\dagger\hat{A}_6
    =-C_{65}\hat{A}_1\hat{B}_2^\dagger\hat{A}_3\hat{A}_4+\hat{A}_1\hat{B}_2^\dagger\hat{A}_3\hat{A}_4 \hat{A}_6 \hat{B}_5^\dagger.
\end{align}
Thus, when we use the commutation relation between $\hat{A}_i$ and $\hat{B}_j^\dagger$, the associated term is split into two terms: one that changes $\hat{A}_i\hat{B}_j^\dagger$ to $-C_{ij}$ and the other that simply changes $\hat{A}_i\hat{B}_j^\dagger$ to $\hat{B}_j^\dagger\hat{A}_i$.
Using this property, one can easily see that for a given $Y$ and $Y^c$, its antinormal ordering is simply written as
\begin{align}
    \sum_{\mathcal{M}\in \text{Match}(F_{N};Y\to Y^c)}(-1)^{|\mathcal{M}|}\prod_{(i,j)\in \mathcal{M}}C_{ij}\prod_{i\in Y\setminus A(\mathcal{M})}\hat{A}_i\prod_{j\in Y^c\setminus B(\mathcal{M})}\hat{B}_j^\dagger,
\end{align}
where the Match is the possible matches in $F_N$, which is the Ferrers graph whose edge set is $E=\{(i,j)|1\leq j<i\leq N\}$~\cite{stanley2011enumerative}, and $A(\mathcal{M})$ and $B(\mathcal{M})$ represent the elements in $Y$ and $Y^c$ that are involved in the matching $\mathcal{M}$, respectively.

Since we need to take into account all possible pairs $Y$ and $Y^c$, we have
\begin{align}
    \prod_{i=1}^N(\hat{A}_i+\hat{B}_i^\dagger)
    =\sum_{Y\subset [N]}\sum_{\mathcal{M}\in \text{Match}(F_{N};Y\to Y^c)}(-1)^{|\mathcal{M}|}\prod_{(i,j)\in \mathcal{M}}C_{ij}\prod_{i\in Y\setminus A(\mathcal{M})}\hat{A}_i\prod_{j\in Y^c\setminus B(\mathcal{M})}\hat{B}_j^\dagger.
\end{align}

\section{Generalized Gurvits' second algorithm for Gaussian circuits}\label{app:gurvits_general}

In this Appendix, we provide a classical algorithm that computes the following using a low-rank property of the relevant matrices,
\begin{align}
    \langle \phi|\hat{G}|\psi\rangle=\sum_{\bm{n},\bm{m}}a_{\bm{m}} b_{\bm{n}}^*\langle\bm{n}|\hat{G}|\bm{m}\rangle,
\end{align}
where $|\psi\rangle$ and $|\phi\rangle$ are arbitrary product states, which are written due to the product structure as
\begin{align}
    |\psi\rangle=
    \bigotimes_{i=1}^M\left(\sum_{m_i=0}^{n_\text{max}}a^{(i)}_{m_i}|m_i\rangle\right),~
    |\phi\rangle=
    \bigotimes_{i=1}^M\left(\sum_{n_i=0}^{n_\text{max}}b^{(i)}_{n_i}|n_i\rangle\right).
\end{align}
Also, $\hat{G}$ is a Gaussian unitary circuit, and in particular, we are interested in a Gaussian unitary circuit generated by $\hat{G}_0^\dagger\hat{P}(\bm{\varphi})\hat{G}_0$, where $\hat{G}_0$ is an arbitrary Gaussian unitary circuit and $\hat{P}(\bm{\varphi})$ is a phase shifter with phases $\bm{\varphi}$ whose nonzero elements are a few, i.e., $\bm{\varphi}=(\varphi_1,\dots,\varphi_L,0,\dots,0)$.
Hence, we prove Lemma~\ref{lemma:MVS}:
\begin{lemma}[Lemma~\ref{lemma:MVS} restated]
    Consider an $M$-mode bosonic system. Let $\hat{G}_0$ be a Gaussian unitary circuit and $\hat{P}(\bm{\varphi})$ be the phase shifter with phase $\bm{\varphi}$ and $|\psi\rangle$ and $|\phi\rangle$ be product states.
    If the number of non-zero elements in $\bm{\varphi}$ is $L=O(1)$,
    \begin{align}
        \langle \phi |\hat{G}_0^\dagger\hat{P}(\bm{\varphi})\hat{G}_0|\psi\rangle
    \end{align}
    can be exactly computed in polynomial time in $M$.
\end{lemma}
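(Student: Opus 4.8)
The plan is to reduce the overlap $\langle\phi|\hat{G}_0^\dagger\hat{P}(\bm{\varphi})\hat{G}_0|\psi\rangle$ to the evaluation of a loop hafnian of a matrix whose rank is $O(L)$, and then invoke the fact that low-rank loop hafnians are classically tractable. The first and essential step is Lemma~\ref{lemma:low-rank}: writing $\hat{G}\equiv\hat{G}_0^\dagger\hat{P}(\bm{\varphi})\hat{G}_0$, its Heisenberg action is $\hat{G}^\dagger\hat{a}_i\hat{G}=\hat{a}_i+\sum_k(W_{ik}\hat{a}_k+Z_{ik}\hat{a}_k^\dagger)$ with $\mathrm{rank}\,W,\mathrm{rank}\,Z\le 2L$, and equivalently $\hat{G}=\hat{U}\hat{S}(\bm{r})\hat{V}$ with at most $2L$ nonzero entries in $\bm{r}$. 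This is the only place where the smallness of $L$ enters structurally.

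Next I would expand a single Fock-to-Fock matrix element. Writing $\langle\bm{n}|=\tfrac{1}{\sqrt{\bm{n}!}}\langle 0|\prod_{i\in X_{\bm{n}}}\hat{a}_i$ and commuting the annihilation operators through $\hat{G}$ using Eq.~\eqref{eq:G_decompose}, one arrives at the form of Eq.~\eqref{eq:G2}, namely a product $\prod_i(\hat{A}_i+\hat{B}_i^\dagger)^{n_i}$ acting on $|\bm{m}\rangle$ and sandwiched against $\langle\bm{r}|\hat{V}$, where $\hat{A}_i=\hat{a}_i+\sum_k W_{ik}\hat{a}_k$, $\hat{B}_i^\dagger=\sum_k Z_{ik}\hat{a}_k^\dagger$, and the commutator matrix $C_{ij}=[\hat{A}_i,\hat{B}_j^\dagger]$ has rank $O(L)$. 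I would then antinormal-order this product via the matching identity of Appendix~\ref{app:anti}, turning the matrix element into a sum over subsets $Y$ and Ferrers-graph matchings $\mathcal{M}$, as in Eq.~\eqref{eq:G3}, each term carrying a product of $C_{ij}$'s and a residual vacuum expectation of the form $\langle\bm{r}|\hat{V}\big(\prod\hat{A}_i\big)\big(\prod\hat{B}_i^\dagger\big)\prod\hat{a}_i^\dagger|0\rangle$. Since $|\bm{r}\rangle$ is a squeezed vacuum with only $2L$ squeezed modes, this residual factor is, after Bloch--Messiah on $\hat{V}$, the overlap of an $O(L)$-mode non-vacuum state with a linear-optically-evolved product state, which Lemma~\ref{lemma:low-mode} computes efficiently.

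The remaining work is to assemble these pieces into one matrix function and to fold in the product-state amplitudes $a_{\bm{m}}$ and $b_{\bm{n}}^*$. I would apply the polynomial method of Sec.~\ref{sec:poly}: the $C$-matchings together with the Gaussian contractions coming from $\langle\bm{r}|\cdots|0\rangle$, which, once the $2L$ squeezed directions are diagonalized, produce further quadratic contractions, repackage into the loop hafnian $\lhaf(K)$ of a matrix $K$ whose off-diagonal part is built from the rank-$O(L)$ matrices $W$, $Z$, $C$ and whose diagonal absorbs the single-mode terms. The sums over $n_i,m_i\le n_\text{max}$ in each mode are carried out by introducing $O(L)$ formal indeterminates per rank direction and retaining the monomials whose ``$u$'' and ``$v$'' degrees match, exactly as in the warm-up. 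One then invokes the low-rank loop hafnian lemma~\cite{bjorklund2019faster,oh2024quantum}, which evaluates $\lhaf(K)$ in time $\poly(M,n_\text{max})$ when $\mathrm{rank}\,K=O(L)$; a crude count of the monomials tracked is $(Mn_\text{max}+1)^{O(L)}$, polynomial for $L=O(1)$ — the exponent can be bounded by $16L$ without optimization, and collapses to $2L$ in the linear-optical case, Appendix~\ref{app:gurvits_linear}.

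The main obstacle I anticipate is the bookkeeping in this assembly step: showing that the antinormal-ordered expansion, \emph{together with} the squeezed-vacuum residual factor, genuinely collapses into a single loop hafnian of a matrix of rank $O(L)$, rather than a product of several hafnian-like objects or a matrix whose rank is contaminated by the (a priori unbounded) number of modes. The key is to verify that the Ferrers-graph matchings over $C$ and the Gaussian contractions from $\langle\bm{r}|\cdots|0\rangle$ are all generated by the same $O(L)$ low-rank directions — those of $W$, $Z$ and the $2L$ squeezing axes — so their union still has rank $O(L)$; the bound on the degree of the auxiliary polynomial, and hence the polynomial running time, rests entirely on this rank-confinement. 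Everything else is a routine multinomial expansion plus an invocation of the cited low-rank hafnian algorithm.
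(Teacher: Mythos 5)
Your proposal is correct and follows essentially the same route as the paper's own proof in Appendix~\ref{app:gurvits_general}: Lemma~\ref{lemma:low-rank} for the rank-$2L$ structure of $W$, $Z$ (and the $2L$ nonzero squeezing parameters), antinormal ordering via Ferrers-graph matchings, handling of the residual $\langle\bm{r}|\hat{V}\cdots|0\rangle$ factor by the polynomial method of Lemma~\ref{lemma:low-mode}, repackaging into a loop hafnian of a low-rank matrix $K$ with product-state amplitudes absorbed as indeterminates, and the low-rank loop-hafnian lemma giving the $(Mn_\text{max}+1)^{O(L)}$ (concretely $16L$, reducing to $2L$ for linear optics) cost. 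The rank-confinement concern you flag is precisely the point the paper's bookkeeping resolves, so no gap.
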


\begin{proof}
Before we derive the full expression and the classical algorithm, let us first focus on $\langle\bm{n}|\hat{G}|\bm{m}\rangle$ and then generalize, where $\hat{G}\equiv \hat{G}_0^\dagger\hat{P}(\bm{\varphi})\hat{G}_0$.
Let $X_{\bm{n}}$ be the multiset representation of $\bm{n}$, which is composed of $n_i$ times of $i$.
For example, when $\bm{n}=(2,1,4)$, $X_{\bm{n}}=\{1,1,2,3,3,3,3\}$.
$X_{\bm{m}}$ is similar.
Let $\hat{G}=\hat{U}\hat{S}(\bm{r})\hat{V}$ be the Bloch-Messiah decomposition of $\hat{G}$, following Appendix~\ref{app:rank}.
Then,
\begin{align}
    \langle\bm{n}|\hat{G}|\bm{m}\rangle
    &=\frac{1}{\sqrt{\bm{n}!}}\langle0|\hat{G}\hat{G}^\dagger\left(\prod_{i\in X_{\bm{n}}}\hat{a}_i\right)\hat{G}|\bm{m}\rangle 
    =\frac{1}{\sqrt{\bm{n}!}}\langle0|\hat{G} \left[\prod_{i\in X_{\bm{n}}}\left(\hat{a}_i+\sum_{k=1}^M\left(W_{ik}\hat{a}_k+Z_{ik}\hat{a}_k^\dagger\right)\right)\right]|\bm{m}\rangle \\ 
    &=\frac{1}{\sqrt{\bm{n}!}}\langle\bm{r}|\hat{V} \left[\prod_{i\in X_{\bm{n}}}\left(\hat{a}_i+\sum_{k=1}^M\left(W_{ik}\hat{a}_k+Z_{ik}\hat{a}_k^\dagger\right)\right)\right]|\bm{m}\rangle 
    =\frac{1}{\sqrt{\bm{n}!}}\langle\bm{r}|\hat{V} \left[\prod_{i\in X_{\bm{n}}}\left(\hat{A}_i+\hat{B}_i^\dagger\right)\right]|\bm{m}\rangle,    
\end{align}
where we defined
\begin{align}
    \hat{A}_i=\hat{a}_i+\sum_{k=1}^MW_{ik}\hat{a}_k,~~~\hat{B}_i^\dagger=\sum_{k=1}^M Z_{ik}\hat{a}_k^\dagger,~~~
    C_{ij}=[\hat{A}_i,\hat{B}_j^\dagger]=Z_{ji}+\sum_{k=1}^M W_{ik}Z_{jk}.
\end{align}
Note that, as shown in Appendix~\ref{app:rank}, $W$, $Z$, and $C$'s ranks are at most $2L$ and that only at most $2L$ $r_i$'s are nonzero. Without loss of generality, we set $r_i=0$ for $i>2L$.

By using the antinormal ordering~\cite{varvak2005rook} (see Appendix~\ref{app:anti}), let us expand it and move all the creation operators to the right, 
\begin{align}
    \frac{1}{\sqrt{\bm{n}!\bm{m}!}}\sum_{Y\subset X_{\bm{n}}}\sum_{\mathcal{M}\in \text{Match}(F_{X_{\bm{n}}};Y\to Y^c)}(-1)^{|\mathcal{M}|}\left(\prod_{(i,j)\in \mathcal{M}}C_{ij}\right)\langle \bm{r}|\hat{V}\left(\prod_{i\in Y\setminus A(\mathcal{M})}\hat{A}_i\right)\left(\prod_{i\in Y^c\setminus B(\mathcal{M})}\hat{B}^\dagger_i\right)\prod_{i\in X_{\bm{m}}}\hat{a}_i^\dagger|0\rangle,
\end{align}
where $Y$ and $Y^c\equiv X_{\bm{n}}\setminus Y$ are multisets that are subsets of $X_{\bm{n}}$ and the Match is the possible matches in $F_{X_{\bm{n}}}$, which is the Ferrers graph whose edge set is $E=\{(i,j)\in X_{\bm{n}}\times X_{\bm{n}}|j<i\}$.
Also, $A(\mathcal{M})$ and $B(\mathcal{M})$ are the elements that are involved in $M$ from $Y$ and $Y^c$, respectively.
That means that $Y\setminus A(\mathcal{M})$ and $Y^c\setminus B(\mathcal{M})$ are vertices that are not involved in the matching $\mathcal{M}$.
Here,
\begin{align}
    &\langle \bm{r}|\hat{V}\left(\prod_{i\in Y\setminus A(\mathcal{M})}\hat{A}_i\right)\left(\prod_{i\in Y^c\setminus B(\mathcal{M})}\hat{B}^\dagger_i\right)\prod_{i\in X_{\bm{m}}}\hat{a}_i^\dagger|0\rangle \\ 
    &=\langle \bm{r}|\hat{V}\sum_{X\subset Y\setminus A(\mathcal{M})}\prod_{i\in X}\hat{a}_i\left(\prod_{i\in Y\setminus A(\mathcal{M})\setminus X}\sum_{k=1}^M W_{ik}\hat{a}_k\right)\left(\prod_{i\in Y^c\setminus B(\mathcal{M})}\sum_{k=1}^M Z_{jk}\hat{a}_k^\dagger\right)\left(\prod_{i\in X_{\bm{m}}}\hat{a}_i^\dagger\right)|0\rangle.
\end{align}
Thus, let us first simplify the following:
\begin{align}
    \langle \bm{r}|\hat{V} \prod_{i\in A}\hat{a}_i\prod_{i\in B}\hat{b}_i\prod_{i\in C}\hat{c}_i^\dagger \prod_{i\in D} \hat{a}_i^\dagger |0\rangle,
\end{align}
where $A\iff X, B\iff Y\setminus A(\mathcal{M})\setminus X, C\iff Y^c\setminus B(\mathcal{M}), D\iff X_{\bm{m}}$.
Here,
\begin{align}
    &\prod_{i\in A}\hat{a}_i\prod_{i\in B}\hat{b}_i\prod_{i\in C}\hat{c}_i^\dagger \prod_{i\in D} \hat{a}_i^\dagger |0\rangle \\ 
    &=\prod_{i\in A}\hat{a}_i\prod_{i\in B}\sum_{j=1}^MW_{ij}\hat{a}_j\prod_{i\in C}\sum_{j=1}^MZ_{ij}\hat{a}_j^\dagger \prod_{i\in D} \hat{a}_i^\dagger |0\rangle \\ 
    &=\sum_{j_{B(1)},\dots,j_{B(|B|)}=1}^M \sum_{j'_{C(1)},\dots,j'_{C(|C|)}=1}^M \prod_{i\in B}W_{ij_i}\prod_{i\in C}Z_{ij'_i} \prod_{i\in A}\hat{a}_{i}\prod_{i\in B}\hat{a}_{j_i}\prod_{i\in C}\hat{a}_{j'_i}^\dagger \prod_{i\in D} \hat{a}_i^\dagger |0\rangle.
\end{align}
To simplify this expression, we first need to match all $A$ and $B$ to $C$ and $D$ first, and then the remaining elements in $C$ and $D$ will be matched with the squeezed state $\langle \bm{r}|\hat{V}$.
Here, $a\subset A$ matches with $a=d_a\subset D$ and $b\subset B$ with $d_b\subset D\setminus a$, and $A\setminus a$ with $c_a\subset C$ and $B\setminus b$ with $c_b$.
Thus,
\begin{align}
    &\prod_{i\in A}\hat{a}_{i}\prod_{i\in B}\hat{a}_{j_i}\prod_{i\in C}\hat{a}_{j'_i}^\dagger \prod_{i\in D} \hat{a}_i^\dagger |0\rangle \\
    &=\sum_{a\subset A\cap D}\sum_{b\subset B}\sum_{c_a,c_b\subset C}\sum_{d_b\subset D\setminus d_a}\sum_{\sigma\in\mathcal{S}_{|a|}:a\to d_a}\prod_{i\in a}\delta_{i,\sigma(i)}\sum_{\sigma\in\mathcal{S}_{|a^c|}:a^c\to c_a}\prod_{i\in a^c}\delta_{i,j'_{\sigma(i)}}\sum_{\sigma\in\mathcal{S}_{|b|}:b\to d_b}\prod_{i\in b}\delta_{j_i,\sigma(i)}\sum_{\sigma\in \mathcal{S}_{|B\setminus b|}:B\setminus b\to c_b}\prod_{i\in B\setminus b}\delta_{j_i,j'_{\sigma(i)}} \nonumber \\ 
    &\times \prod_{i\in C\setminus (c_a\cup c_b)}\hat{a}_{j_i'}^\dagger\prod_{i\in D\setminus(d_a\cup d_b)}\hat{a}_{i}^\dagger|0\rangle.
\end{align}
Hence, we have
\begin{align}
    &\langle \bm{r}|\hat{V} \sum_{j_{B(1)},\dots,j_{B(|B|)}=1}^M \sum_{j'_{C(1)},\dots,j'_{C(|C|)}=1}^M \prod_{i\in B}W_{ij_i}\prod_{i\in C}Z_{ij'_i} \prod_{i\in A}\hat{a}_{i}\prod_{i\in B}\hat{a}_{j_i}\prod_{i\in C}\hat{a}_{j'_i}^\dagger \prod_{i\in D} \hat{a}_i^\dagger |0\rangle \\ 
    &=\sum_{a\subset A\cap D}\sum_{b\subset B}\sum_{c_a,c_b\subset C}\sum_{d_b\subset D\setminus d_a}\sum_{\sigma\in\mathcal{S}_{|a^c|}:a^c\to c_a}\prod_{i\in a^c}Z_{i,\sigma(i)}\sum_{\sigma\in\mathcal{S}_{|b|}:b\to d_b}\prod_{i\in b}W_{i,\sigma(i)}\sum_{\sigma\in \mathcal{S}_{|B\setminus b|}:B\setminus b\to c_b}\prod_{i\in B\setminus b}(WZ^\T)_{i,\sigma(i)} \nonumber \\ 
    &\times \langle \bm{r}|\hat{V}\prod_{i\in C\setminus (c_a\cup c_b)}\sum_{j_i'=1}^MZ_{ij_i'}\hat{a}_{j_i'}^\dagger\prod_{i\in D\setminus(d_a\cup d_b)}\hat{a}_{i}^\dagger|0\rangle \\ 
    &=\sum_{a\subset A\cap D}\sum_{b\subset B}\sum_{c_a,c_b\subset C}\sum_{d_b\subset D\setminus d_a}\sum_{\sigma\in\mathcal{S}_{|a^c|}:a^c\to c_a}\prod_{i\in a^c}Z_{i,\sigma(i)}\sum_{\sigma\in\mathcal{S}_{|b|}:b\to d_b}\prod_{i\in b}W_{i,\sigma(i)}\sum_{\sigma\in \mathcal{S}_{|B\setminus b|}:B\setminus b\to c_b}\prod_{i\in B\setminus b}(WZ^\T)_{i,\sigma(i)} \nonumber \\ 
    &\times \langle \bm{r}|\prod_{i\in C\setminus (c_a\cup c_b)}\sum_{j_i',k=1}^MZ_{ij_i'}V_{k,j_i'}\hat{a}_{k}^\dagger\prod_{i\in D\setminus(d_a\cup d_b)}\sum_{k=1}^MV_{k,i}\hat{a}_{k}^\dagger|0\rangle \\ 
    &=\sum_{a\subset A\cap D}\sum_{b\subset B}\sum_{c_a,c_b\subset C}\sum_{d_b\subset D\setminus d_a}\sum_{\sigma\in\mathcal{S}_{|a^c|}:a^c\to c_a}\prod_{i\in a^c}Z_{i,\sigma(i)}\sum_{\sigma\in\mathcal{S}_{|b|}:b\to d_b}\prod_{i\in b}W_{i,\sigma(i)}\sum_{\sigma\in \mathcal{S}_{|B\setminus b|}:B\setminus b\to c_b}\prod_{i\in B\setminus b}(WZ^\T)_{i,\sigma(i)} \nonumber \\ 
    &\times \langle \bm{r}|\prod_{i\in C\setminus (c_a\cup c_b)}\sum_{k=1}^M(ZV^\T)_{i,k}\hat{a}_{k}^\dagger\prod_{i\in D\setminus(d_a\cup d_b)}\sum_{k=1}^M(V^\T)_{i,k}\hat{a}_{k}^\dagger|0\rangle,
\end{align}
where $c_a\cap c_b=\emptyset$ and $d_a\cap d_b=\emptyset$.
Here, using the low-rank structure of $W$ and $Z$, we simplify the expression further as
\begin{align}
    &\sum_{\sigma\in\mathcal{S}_{|a^c|}:a^c\to c_a}\prod_{i\in a^c}Z_{i,\sigma(i)}\sum_{\sigma\in\mathcal{S}_{|b|}:b\to d_b}\prod_{i\in b}W_{i,\sigma(i)}\sum_{\sigma\in \mathcal{S}_{|B\setminus b|}:B\setminus b\to c_b}\prod_{i\in B\setminus b}(WZ^\T)_{i,\sigma(i)} \\ 
    &=\sum_{\sigma\in\mathcal{S}_{|a^c|}:a^c\to c_a}\prod_{i\in a^c}\sum_{l=1}^{2L} u_{i}^{(l)}v_{\sigma(i)}^{(l)}\sum_{\rho\in\mathcal{S}_{|b|}:b\to d_b}\prod_{i\in b}\sum_{k=1}^{2L} w_{i}^{(k)}x_{\rho(i)}^{(k)}\sum_{\omega\in \mathcal{S}_{|B\setminus b|}:B\setminus b\to c_b}\prod_{i\in B\setminus b}\sum_{m=1}^{2L} y_i^{(m)}z_{\omega(i)}^{(m)} \\ 
    &=\sum_{\sigma\in\mathcal{S}_{|a^c|}:a^c\to c_a}\sum_{l_{a^c(1)},\dots,l_{a^c(|a^c|)}=1}^{2L}\prod_{i\in a^c} u_i^{(l_i)}v_{\sigma(i)}^{(l_i)}\sum_{\rho\in\mathcal{S}_{|b|}:b\to d_b}\sum_{k_{b(1)},\dots,k_{b(|b|)}=1}^{2L}\prod_{i\in b} w_{i}^{(k_i)}x_{\rho(i)}^{(k_i)}  \nonumber\\ 
    &\times \sum_{\omega\in \mathcal{S}_{|B\setminus b|}:B\setminus b\to c_b}\sum_{m_{(B\setminus b)(1)},\dots,m_{(B\setminus b)(|B\setminus b|)}=1}^{2L}\prod_{i\in B\setminus b} y_i^{(m_i)}z_{\omega(i)}^{(m_i)} \\ 
    &=\sum_{\sigma\in\mathcal{S}_{|a^c|}:a^c\to c_a}\sum_{\rho\in\mathcal{S}_{|b|}:b\to d_b}\sum_{\omega\in \mathcal{S}_{|B\setminus b|}:B\setminus b\to c_b}\sum_{l_{1},\dots,l_{|a^c|}=1}^{2L}\sum_{k_{1},\dots,k_{|b|}=1}^{2L}\sum_{m_{1},\dots,m_{|B\setminus b|}=1}^{2L}\prod_{i\in [a^c]} u_{c_a(i)}^{(l_{i})}v_{a^c(i)}^{(l_{\sigma(i)})}\prod_{i\in [b]} w_{b(i)}^{(k_{\rho(i)})}x_{d_b(i)}^{(k_{i})} \nonumber \\ 
    &\times \prod_{i\in [B\setminus b]} y_{c_b(i)}^{(m_i)}z_{(B\setminus b)(i)}^{(m_{\omega(i)})} \\ 
    &=\sum_{s_u(c_a),s_v(a^c)}\sum_{s_w(b),s_x(d_b)}\sum_{s_{z}(B\setminus b),s_y(c_b)}\delta([s_u(c_a)]=[s_v(a^c)])\delta([s_w(b)]=[s_x(d_b)])\delta([s_z(B\setminus b)]=[s_y(c_b)]) \nonumber \\ 
    &\times \prod_{i\in [a^c]} u_{c_a(i)}^{(l_{s_u(i)})}v_{a^c(i)}^{(s_v(i))}\prod_{i\in [b]} w_{b(i)}^{(s_w(i))}x_{d_b(i)}^{(k_{s_x(i)})}\prod_{i\in [B\setminus b]} y_{c_b(i)}^{(s_z(i))}z_{(B\setminus b)(i)}^{(s_y(i))}\prod_{i=1}^L(k_i!k_i'!k_i''!),
\end{align}
where the summation over $s(X)$ is interpreted as the summation of a tuple from $[L]^{|X|}$ and $k_i,k_i',k_i''$ represent the multiplicity of $u,w,y$, respectively.
Thus, ignoring the delta functions and the multiplicity factors, which will be taken care of later, we can rewrite the expression as
\begin{align}
    &\sum_{a\subset A\cap D}\sum_{b\subset B}\sum_{c_a,c_b\subset C}\sum_{d_b\subset D\setminus a}\sum_{s_u(c_a),s_v(a^c)}\sum_{s_w(b),s_x(d_b)}\sum_{s_{z}(B\setminus b),s_y(c_b)}\prod_{i\in [a^c]} u_{c_a(i)}^{(l_{s_u(i)})}v_{a^c(i)}^{(s_v(i))}\prod_{i\in [b]} w_{b(i)}^{(s_w(i))}x_{d_b(i)}^{(k_{s_x(i)})}\prod_{i\in [B\setminus b]} y_{c_b(i)}^{(s_z(i))}z_{(B\setminus b)(i)}^{(s_y(i))} \\ 
    &=\sum_{a\subset A\cap D}\sum_{b\subset B}\sum_{c_a,c_b\subset C}\sum_{d_b\subset D\setminus a}\prod_{i \in c_a}\sum_{s=1}^{2L}u_{i}^{(s)}a_u^{(s)}\prod_{i\in a^c}\sum_{s=1}^{2L}v_{i}^{(s)}a_v^{(s)} \prod_{i\in b}\sum_{s=1}^{2L} w_{i}^{(s)}a_w^{(s)}\prod_{i\in d_b}\sum_{s=1}^{2L}x_{i}^{(s)}a_x^{(s)}\prod_{i\in c_b}\sum_{s=1}^{2L}y_{i}^{(s)}a_y^{(s)}\prod_{i\in B\setminus b} \sum_{s=1}^{2L}z_{i}^{(s)}a_z^{(s)} \\ 
    &=\sum_{a\subset A\cap D}\sum_{b\subset B}\sum_{c_a,c_b\subset C}\sum_{d_b\subset D\setminus a}\prod_{i \in c_a}f_1(i)\prod_{i\in a^c}f_2(i) \prod_{i\in b}f_3(i)\prod_{i\in d_b}f_4(i)\prod_{i\in c_b}f_5(i)\prod_{i\in B\setminus b} f_6(i),
\end{align}
where we introduced the indeterminates $a_u^{(s)},a_v^{(s)},a_w^{(s)},a_x^{(s)},a_y^{(s)},a_z^{(s)}$ to take care of the delta functions and defined
\begin{align}
    f_1(i)=\sum_{s=1}^{2L}u_i^{(s)}a_u^{(s)},
    f_2(i)=\sum_{s=1}^{2L}v_i^{(s)}a_v^{(s)},
    f_3(i)=\sum_{s=1}^{2L}w_i^{(s)}a_w^{(s)},
    f_4(i)=\sum_{s=1}^{2L}x_i^{(s)}a_x^{(s)},
    f_5(i)=\sum_{s=1}^{2L}y_i^{(s)}a_y^{(s)},
    f_6(i)=\sum_{s=1}^{2L}z_i^{(s)}a_z^{(s)}.
\end{align}

In addition, to take care of the remaining creation operators, we replace them with indeterminates $x_k$'s as in Sec.~\ref{sec:low-mode}:
\begin{align}
    &\sum_{a\subset A\cap D}\sum_{b\subset B}\sum_{c_a,c_b\subset C}\sum_{d_b\subset D\setminus a}\prod_{i \in c_a}f_1(i) \prod_{i\in a^c}f_2(i)\prod_{i\in b}f_3(i)\prod_{i\in d_b}f_4(i)\prod_{i\in c_b} f_5(i)\prod_{i\in B\setminus b}f_6(i) \nonumber \\ 
    &\times \langle\bm{r}|\prod_{i\in C\setminus (c_a\cup c_b)}\sum_{k=1}^M(ZV^\T)_{i,k}\hat{a}_{k}^\dagger\prod_{i\in D\setminus(d_a\cup d_b)}\sum_{k=1}^M(V^\T)_{i,k}\hat{a}_{k}^\dagger|0\rangle  \\ 
    &\to \sum_{a\subset A\cap D}\sum_{b\subset B}\sum_{c_a,c_b\subset C}\sum_{d_b\subset D\setminus a}\prod_{i \in c_a}f_1(i) \prod_{i\in a^c}f_2(i)\prod_{i\in b}f_3(i)\prod_{i\in d_b}f_4(i)\prod_{i\in c_b} f_5(i)\prod_{i\in B\setminus b}f_6(i) \nonumber \\ 
    &\times \prod_{i\in C\setminus (c_a\cup c_b)}\sum_{k=1}^{2L}(ZV^\T)_{i,k}x_k\prod_{i\in D\setminus(d_a\cup d_b)}\sum_{k=1}^{2L}(V^\T)_{i,k}x_k \\ 
    &\to \sum_{a\subset A\cap D}\sum_{b\subset B}\sum_{c_a,c_b\subset C}\sum_{d_b\subset D\setminus a}\prod_{i \in c_a}f_1(i) \prod_{i\in a^c}f_2(i)\prod_{i\in b}f_3(i)\prod_{i\in d_b}f_4(i)\prod_{i\in c_b} f_5(i)\prod_{i\in B\setminus b}f_6(i)\prod_{i\in C\setminus (c_a\cup c_b)}g_1(i)\prod_{i\in D\setminus(d_a\cup d_b)}g_2(i),
\end{align}
where we defined
\begin{align}
    g_1(i)=\sum_{k=1}^{2L}(ZV^\T)_{i,k}x_k,~~~
    g_2(i)=\sum_{k=1}^{2L}(V^\T)_{i,k}x_k.
\end{align}
The squeezed states' coefficients will be taken care of later by using the indeterminates $x_k$'s.
%We first expand it and then truncate by the bra vector and \cor{multiply the appropriate coefficients of squeezed states.}

Now we simplify the expression.
Using
\begin{align}
    &\sum_{b\subset B}\left(\prod_{i\in b} f_3(i)\prod_{i\in B\setminus b}f_6(i)\right)
    =\prod_{i\in B}(f_3(i)+f_6(i)), \\ 
    &\sum_{c_a,c_b\subset C}\left(\prod_{i\in c_a}f_1(i)\prod_{i\in c_b}f_5(i)\prod_{i\in C\setminus (c_a\cup c_b)}g_1(i)\right)
    =\prod_{i\in C}(f_1(i)+f_5(i)+g_1(i)), \\ 
    &\sum_{a\subset A\cap D}\sum_{d_b\subset D\setminus a}\left(\prod_{i\in A\setminus a}f_2(i)\prod_{i\in d_b}f_4(i)\prod_{i\in (D\setminus a)\setminus d_b}g_2(i)\right) \\
    &=\prod_{i\in A\setminus D}f_2(i)\sum_{a\subset A\cap D}\prod_{i\in (A\cap D)\setminus a}f_2(i)\sum_{d_b\subset D\setminus a}\left(\prod_{i\in d_b}f_4(i)\prod_{i\in (D\setminus a)\setminus d_b}g_2(i)\right) \\
    &=\prod_{i\in A\setminus D}f_2(i)\sum_{a\subset A\cap D}\prod_{i\in (A\cap D)\setminus a}f_2(i)\prod_{i\in D\setminus a}\left(f_4(i)+g_2(i)\right) \\
    &=\prod_{i\in A\setminus D}f_2(i) \prod_{i\in D\setminus A}(f_4(i)+g_2(i)) \sum_{a\subset A\cap D}\prod_{i\in (A\cap D)\setminus a}f_2(i)\prod_{i\in (A\cap D)\setminus a}\left(f_4(i)+g_2(i)\right) \\ 
    &=\prod_{i\in A\setminus D}f_2(i) \prod_{i\in D\setminus A}(f_4(i)+g_2(i))\prod_{i\in A\cap D}[1+f_2(i)(f_4(i)+g_2(i))],
\end{align}
the expression is simplified as 
\begin{align}
    \prod_{i\in A\setminus D}f_2(i) \prod_{i\in D\setminus A}(f_4(i)+g_2(i))\prod_{i\in A\cap D}[1+f_2(i)(f_4(i)+g_2(i))]\prod_{i\in B}(f_3(i)+f_6(i))\prod_{i\in C}(f_1(i)+f_5(i)+g_1(i)).
\end{align}
And recalling 
\begin{align}
    A\iff X,\quad
    B\iff Y\setminus A(\mathcal{M})\setminus X,\quad
    C\iff Y^c\setminus B(\mathcal{M}),\quad
    D\iff X_{\bm{m}},
\end{align}
the full expression becomes
\begin{align}
    \prod_{i\in X\setminus X_{\bm{m}}}F_i \prod_{i\in X_{\bm{m}}\setminus X}E_i\prod_{i\in X\cap X_{\bm{m}}}D_i\prod_{i\in Y\setminus A(\mathcal{M})\setminus X}A_i\prod_{i\in Y^c\setminus B(\mathcal{M})}B_i,
\end{align}
where we defined
\begin{align}
    A_i=f_3(i)+f_6(i),~
    B_i=f_1(i)+f_5(i)+g_1(i),~
    D_i=1+f_2(i)(f_4(i)+g_2(i)),~
    E_i=f_4(i)+g_2(i),~
    F_i=f_2(i).
\end{align}

Thus, we can simplify the expression as a loop hafnian:
\begin{align}
    &\sum_{Y\subset X_{\bm{n}}}\sum_{\mathcal{M}\in \text{Match}(F_{X_{\bm{n}}};Y\to Y^c)}\sum_{X\subset Y\setminus A(\mathcal{M})} \nonumber \\ 
    &\times (-1)^{|\mathcal{M}|}\left(\prod_{(i,j)\in \mathcal{M}}C_{ij}\right)\left(\prod_{i\in X\setminus X_{\bm{m}}}F_i\right)\left(\prod_{i\in X_{\bm{m}}\setminus X}E_i\right)\left(\prod_{i\in X\cap X_{\bm{m}}}D_i\right)\left(\prod_{i\in Y\setminus A(\mathcal{M})\setminus X}A_i\right)\left(\prod_{i\in Y^c\setminus B(\mathcal{M})}B_i\right) \\ 
    &=\sum_{Y\subset X_{\bm{n}}}\sum_{\mathcal{M}\in \text{Match}(F_{X_{\bm{n}}};Y\to Y^c)}\prod_{(i,j)\in \mathcal{M}}(-C_{ij})\prod_{i\in Y\setminus A(\mathcal{M})\setminus X_{\bm{m}}}(A_i+F_i)\prod_{i\in X_m}(E_iA_i+D_i)\prod_{i\in Y^c\setminus B(\mathcal{M})}B_i \\ 
    &=\prod_{i\in X_m}(E_iA_i+D_i)\lhaf(\tilde{K}),
\end{align}
where $\tilde{K}$ is a $|X_{\bm{n}}|\times |X_{\bm{n}}|$ matrix whose elements are given as
\begin{align}
    \tilde{K}_{ij}=
    \begin{cases}
        -C_{ij} & \text{if}~i<j~\text{and} \\ 
        A_i+B_i+F_i & \text{if}~i=j~\text{and}~i\notin X_{\bm{m}} \\ 
        B_i+1 & \text{if}~i=j~\text{and}~i\in X_{\bm{m}}
    \end{cases},
\end{align}
where $i,j\in X_{\bm{n}}$.
Hence, the low-rank structure of $C$ may be used.
Let us also define a matrix $K$ as the symmetrized version of $\tilde{K}$, i.e.,
\begin{align}
    K_{ij}=
    \begin{cases}
        -C_{ij} & \text{if}~i\neq j~\text{and} \\ 
        A_i+B_i+F_i & \text{if}~i=j~\text{and}~i\notin X_{\bm{m}} \\ 
        B_i+1 & \text{if}~i=j~\text{and}~i\in X_{\bm{m}}
    \end{cases},
\end{align}
and note that $\lhaf(K)=\lhaf(\tilde{K})$ if we define the loop hafnian to depend only on the upper triangular matrix~(see Appendix~\ref{app:lhaf}).
To exploit this low-rank structure, we can then invoke the following lemma:
\begin{lemma}
    [Loop hafnian of a matrix with low-rank structure~\cite{oh2024quantum}] 
    Let $\Sigma$ be an $n\times n$ symmetric matrix with rank $r$ and $A$ be a matrix constructed by filling $\Sigma$'s diagonal matrix with arbitrary elements.
    Then the loop hafnian of the matrix $A$ can be computed in $O(n\binom{2n+r-1}{r-1})$. Thus, when $r$ is fixed, the cost is polynomial in $n$ and when $r=O(\log n)$, the cost is quasipolynomial in $n$.
\end{lemma}

As an example, let us first consider the case that $\bm{n}=\bm{m}=(1,\dots,1)$, i.e., $X_{\bm{n}}=X_{\bm{m}}=[M]$.
Then, the matrix $K$ can be treated as a matrix constructed by replacing the diagonal matrix of matrix $C$ with $B_i+1$.
Hence, we can use the lemma, and the complexity is $O(M\binom{2M+2L-1}{2L-1})$. (See Appendix~\ref{app:lhaf} for the algorithm.)
More specifically, since $C$ has a rank $2L$, we find a decomposition $C=GG^\T$, where $G$ is an $M \times 2L$ matrix and then construct the polynomial
\begin{align}
    g(y_1,\dots,y_{2L})
    =\prod_{i=1}^{M}\left(\sum_{j=1}^{2L} G_{ij}y_j+B_i+1\right)(E_iA_i+D_i),
\end{align}
and multiply appropriate prefactors to obtain the loop hafnian.

Now, let us consider more general $\bm{n}$ and $\bm{m}$.
In this case, the matrix $K$ is constructed by repeating rows and columns of the matrix $C$ according to $X_{\bm{n}}$ and filling the matrix's the diagonal elements according to the definition of $K$.
Hence, this matrix still has a low-rank structure that enables us to invoke the above lemma.
One may then observe that if we let $l_i\equiv \min(n_i,m_i)$, define a polynomial
\begin{align}
    g(y_1,\dots,y_{2L})
    =\frac{1}{\sqrt{\bm{n}!\bm{m}!}}\prod_{i=1}^{M}\left[\left(\sum_{j=1}^{2L} G_{ij}y_j+B_i+1\right)^{l_i}\left(\sum_{j=1}^{2L} G_{ij}y_j+A_i+B_i+F_i\right)^{\max(0,n_i-l_i)}(E_iA_i+D_i)^{m_i}\right],
\end{align}
and multiply appropriate prefactors, we obtain the loop hafnian.
The final expression is a polynomial of $\{x_i\}_{i=1}^{2L}$ and $\{y_i\}_{i=1}^{2L}$ and $\{a_u^{(i)}\}_{i=1}^{2L}$, $\{a_v^{(i)}\}_{i=1}^{2L}$, $\{a_w^{(i)}\}_{i=1}^{2L}$, $\{a_x^{(i)}\}_{i=1}^{2L}$, $\{a_y^{(i)}\}_{i=1}^{2L}$, and $\{a_z^{(i)}\}_{i=1}^{2L}$.
Finally, to deal with the squeezed states, we replace $x_i$ with the corresponding coefficients of the squeezed states.
To be more explicit, $x_i$ is replaced with $\cosh^{-1/2}r \tanh^{i/2}r\sqrt{i!}/(2^{i/2} (i/2)!)$ for even $i$ and with $0$ for odd $i$, where $r$ is the squeezing parameter for each.
Then, $y_i$'s are taken care of using the method in Appendix~\ref{app:lhaf}, and $\{a_u^{(i)}\}_{i=1}^{2L}$, $\{a_v^{(i)}\}_{i=1}^{2L}$, $\{a_w^{(i)}\}_{i=1}^{2L}$, $\{a_x^{(i)}\}_{i=1}^{2L}$, $\{a_y^{(i)}\}_{i=1}^{2L}$, and $\{a_z^{(i)}\}_{i=1}^{2L}$ are taken care of as explained above.

Finally, let us now consider the general case:
\begin{align}
    \langle \phi|\hat{G}|\psi\rangle=\sum_{\bm{n},\bm{m}}a_{\bm{m}} b_{\bm{n}}^*\langle\bm{n}|\hat{G}|\bm{m}\rangle,
\end{align}
which is the goal of this Appendix.
Then, we define a polynomial
\begin{align}
    g(y_1,\dots,y_{2L})
    =\prod_{i=1}^{M}\sum_{n_i,m_i=0}^{n_\text{max}}\frac{a_{m_i}^{(i)}b_{n_i}^{(i)*}}{\sqrt{n_i!m_i!}}\left[\left(\sum_{j=1}^{2L} G_{ij}y_j+B_i+1\right)^{l_i}\left(\sum_{j=1}^{2L} G_{ij}y_j+A_i+B_i+F_i\right)^{\max(0,n_i-l_i)}(E_iA_i+D_i)^{m_i}\right].
\end{align}
Then, by multiplying appropriate prefactors as above for squeezed states' coefficients, we obtain $\langle \phi|\hat{G}|\psi\rangle$ by appropriately selecting coefficients of the monomials as above.

Let us now analyze the complexity.
Note that we have introduced indeterminates $\{x_i\}_{i=1}^{2L}$ and $\{y_i\}_{i=1}^{2L}$ and $\{a_u^{(i)}\}_{i=1}^{2L}$, $\{a_v^{(i)}\}_{i=1}^{2L}$, $\{a_w^{(i)}\}_{i=1}^{2L}$, $\{a_x^{(i)}\}_{i=1}^{2L}$, $\{a_y^{(i)}\}_{i=1}^{2L}$, and $\{a_z^{(i)}\}_{i=1}^{2L}$.
Thus, the number of indeterminates we introduced is $16L$.
All these indeterminates can have a degree at most $3Mn_\text{max}$ due to the quadratic form of $E_iA_i$ in the expression and $D_i=1+f_2(i)(f_4(i)+g_2(i))$.
Thus, the number of coefficients we need to keep track of is $(3Mn_\text{max}+1)^{16L}$.
Therefore, when $L=O(1)$, it can be computed in polynomial time.

\end{proof}

\section{Low-rank loop hafnian}\label{app:lhaf}
For completeness, we provide the details of a classical algorithm that computes the loop hafnian of a matrix with a low-rank structure, which was proposed in Ref.~\cite{oh2024quantum} by generalizing the algorithm for hafnian~\cite{bjorklund2019faster}.

Let us first recall the definition of the loop hafnian of an $n\times n$ matrix $A$.
Consider the graph $G$ whose adjacency matrix is $A$.
Here, the diagonal elements of $A$ represent the loops from a vertex to itself.
Then, the loop hafnian can be written as
\begin{align}
    \lhaf(A)
    \equiv \sum_{\mathcal{M} \in \text{SPM}}\prod_{(i,j)\in \mathcal{M}}A_{ij},
\end{align}
where $SPM$ is the set of all possible matchings of a graph $G$, including loops.
In many cases, as the hafnian, we assume $A$ to be a symmetric matrix because of the undirectedness of the graph.
However, one may define the loop hafnian of a matrix that is not symmetric by always considering a match $(i,j)$ as either $i\leq j$ or $i\geq j$.
Then, the loop hafnian depends only on the upper or lower triangular part, respectively.
This property was used in Ref.~\cite{bjorklund2019faster} and is used below to employ the low-rank structure for computing the loop hafnian.

Before we present the algorithm, let us define the loop hafnian in a different way.
Let $X\equiv [n]$.
Then, the loop hafnian can be written as
\begin{align}
    \lhaf(A)
    =\sum_{Y\subset X} \sum_{\mathcal{M}\in \text{Match}(F_X;Y\to Y^c)}\prod_{(i,j)\in \mathcal{M}}A_{ij}\prod_{i\in Y\setminus A(\mathcal{M})}A_{ii}/2 \prod_{i\in Y^c\setminus B(\mathcal{M})}A_{ii}/2.
\end{align}
Here, $F_X$ is the Ferrers graph, i.e., its edges are $E=\{(i,j):j<i\}$ and $1/2$ factor is introduced to cancel out the redundancy, and $A(\mathcal{M})$ and $B(\mathcal{M})$ represent the elements in $Y$ and $Y^c$ that are involved in the matching $\mathcal{M}$, respectively.
Hence, this is consistent with the argument regarding the lower triangular matrix above.
Thus, the first product corresponds to the matches between distinct vertices, and the last two products correspond to the loops.
We may generalize this expression to 
\begin{align}
    \sum_{Y\subset X} \sum_{\mathcal{M}\in \text{Match}(F_X;Y\to Y^c)}\prod_{(i,j)\in \mathcal{M}}C_{ij}\prod_{i\in Y\setminus A(\mathcal{M})}A_i \prod_{i\in Y^c\setminus B(\mathcal{M})}B_i,
\end{align}
and this is equivalent to
\begin{align}
    \lhaf(W),
\end{align}
where the matrix $W$'s off-diagonal elements are equal to $C_{ij}$ and diagonal elements are $A_i+B_i$.

We now provide the classical algorithm that takes advantage of the low-rank structure~\cite{oh2024quantum}.
Consider a symmetric matrix $\Sigma$ and another symmetric matrix $A$ which is obtained by replacing $\Sigma$'s diagonal elements by $\mu$.
When $\Sigma$ has rank $r$, it can be decomposed as $\Sigma=GG^\T$, where $G$ is an $n\times r$ matrix.
Then, we can compute the loop hafnian of the matrix as
\begin{align}
    \lhaf(A)
    &=\sum_{\mathcal{M} \in \text{SPM}}\prod_{(i,j)\in \mathcal{M}}A_{ij} 
    =\sum_{a\subset [n]:|a|:\text{even}}\haf(A_{a,a})\prod_{i\in a^c}\mu_i 
    =\sum_{\sigma \in \text{PMP}(|a|)}\prod_{i=1}^{|a|} A_{a(\sigma(2i-1)),a(\sigma(2i))}\prod_{i\in a^c}\mu_i \\ 
    &\to \prod_{i=1}^n \left(\sum_{j=1}^r G_{ij}x_j+\mu_i\right), \label{eq:lhaf_poly}
\end{align}
where $\text{PMP}$ is the set of perfect matchings without loops, and we introduced a polynomial with indeterminates $x_i$'s.
Let $\mathcal{P}_{2s,r}$ be the set of integer $r$-partitions of $2s$, i.e., tuples $(p_1,p_2,\dots,p_r)$ such that $p_i\geq 0$ for all $i$ and $\sum_i p_i=2s$, and let $\mathcal{E}_{2s,r}$ be the subset of $\mathcal{P}_{s,r}$ such that $p_i$ is even for all $i$.
We then expand the polynomial and identify the coefficients $\lambda_p$ as
\begin{align}\label{eq:lhaf_poly2}
    q(x_1,\dots,x_r)=\sum_{s=0}^{[n/2]}\sum_{p=(p_1,\dots,p_{2s})\in \mathcal{P}_{2s,r}}\lambda_p\prod_{i=1}^r x_i^{p_i},
\end{align}
where we dropped a polynomial composed of an odd number of $x_i$'s from the polynomial, Eq.~\eqref{eq:lhaf_poly}, because it is not relevant to the loop hafnian.
Since
\begin{align}
    |\mathcal{P}_{2s,r}|=\binom{2s+r-1}{r-1},~~~
    |\mathcal{E}_{2s,r}|=\binom{s+r-1}{r-1},
\end{align}
the coefficients $\lambda_p$ can be found in $O(n|\mathcal{P}_{2n,r}|)$ running time because there are $|\mathcal{P}_{2n,r}|$ monomials whose coefficients are kept track of and we have $n$ multiplications.
Meanwhile, the loop hafnian can be written as
\begin{align}
    \lhaf(A)=\sum_{s=0}^{[n/2]}\sum_{e\in \mathcal{E}_{2s,r}}\lambda_e \prod_{i=1}^r(e_i-1)!!.
\end{align}
Therefore, by expanding the polynomial~\eqref{eq:lhaf_poly2} and identifying the coefficients, we can compute the loop hafnian in $O(n|\mathcal{P}_{2n,r}|)$ and when $r$ is fixed, in $O(n^r)$.

%\end{widetext}

\end{document}